\documentclass[11pt,a4paper]{article}

\usepackage{xcolor}
\usepackage{framed}

\definecolor{shadecolor}{RGB}{240,240,240}
\usepackage[T1]{fontenc}
\usepackage{lmodern}
\usepackage[margin=1in]{geometry}
\usepackage{amsmath,amssymb,amsthm,mathtools}
\usepackage{microtype}
\usepackage{enumitem}
\usepackage{aliascnt}
\usepackage[ruled,linesnumbered,vlined]{algorithm2e}
\usepackage[hidelinks]{hyperref}
\usepackage[capitalize,noabbrev]{cleveref}

\hypersetup{
    pdftitle={Weighted EF1 Allocations for Additive Mixed Manna},
    pdfauthor={}
}

\allowdisplaybreaks
\setlist{itemsep=0.25em,topsep=0.45em}

\newtheorem{theorem}{Theorem}[section]

\newaliascnt{lemma}{theorem}
\newtheorem{lemma}[lemma]{Lemma}
\aliascntresetthe{lemma}

\newaliascnt{proposition}{theorem}

\aliascntresetthe{proposition}

\newaliascnt{claim}{theorem}

\aliascntresetthe{claim}

\theoremstyle{definition}
\newaliascnt{definition}{theorem}
\newtheorem{definition}[definition]{Definition}
\aliascntresetthe{definition}

\theoremstyle{remark}
\newaliascnt{remark}{theorem}

\aliascntresetthe{remark}

\newcommand{\cI}{\mathcal{I}}
\newcommand{\bX}{\mathbf{X}}
\newcommand{\bY}{\mathbf{Y}}
\newcommand{\btheta}{\boldsymbol{\theta}}
\newcommand{\bw}{\mathbf{w}}
\newcommand{\bv}{\mathbf{v}}

\newcommand{\WEFone}{\mathrm{WEF1}}
\newcommand{\WEFoneT}{\mathrm{WEF1T}}
\newcommand{\R}{\mathbb{R}}
\newcommand{\dd}{\,\mathrm{d}}

\DeclareMathOperator*{\argmax}{arg\,max}
\DeclareMathOperator*{\argmin}{arg\,min}

\crefname{theorem}{Theorem}{Theorems}
\Crefname{theorem}{Theorem}{Theorems}
\crefname{lemma}{Lemma}{Lemmas}
\Crefname{lemma}{Lemma}{Lemmas}
\crefname{proposition}{Proposition}{Propositions}
\Crefname{proposition}{Proposition}{Propositions}
\crefname{claim}{Claim}{Claims}
\Crefname{claim}{Claim}{Claims}
\crefname{definition}{Definition}{Definitions}
\Crefname{definition}{Definition}{Definitions}
\crefname{remark}{Remark}{Remarks}
\Crefname{remark}{Remark}{Remarks}
\crefname{algorithm}{Algorithm}{Algorithms}
\Crefname{algorithm}{Algorithm}{Algorithms}

\title{Almost Envy-Freeness for Additive Mixed Manna with Entitlements: Deterministic and Randomized Guarantees}
\author{%
  \begin{tabular}{@{}c@{\qquad\qquad}c@{}}
    Zehan Lin & Shengxin Liu \\
    University of Macau & Harbin Institute of Technology, Shenzhen \\
    \texttt{yc47490@um.edu.mo} & \texttt{sxliu@hit.edu.cn} \\
    \noalign{\vskip 1.0em}
    Biaoshuai Tao & Shengwei Zhou \\
    Shanghai Jiao Tong University & Nanyang Technological University \\
    \texttt{bstao@sjtu.edu.cn} & \texttt{s.arthur.zhou@gmail.com}
  \end{tabular}%
}
\date{}

\begin{document}
\maketitle

\begin{abstract}
We investigate the fair allocation of indivisible items among agents with asymmetric entitlements in mixed manna settings, where the items consist of both goods and chores.
For additive valuations, we establish that weighted envy-free up to one item ($\WEFone$) allocations always exist and can be computed in polynomial time.
We also study fair and efficient allocation and show that weighted envy-freeness up to one transfer ($\WEFoneT$) is compatible with fractional Pareto optimality (fPO) for every mixed-manna instance.
This relaxation from $\mathrm{WEF1}$ to $\mathrm{WEF1T}$ is tight, as demonstrated by our impossibility result.
We further show a best-of-both-worlds result via a finite lottery that guarantees weighted envy-freeness (WEF) in expectation, with every realized allocation satisfying $\mathrm{WEF1T}$ and achieving the tight characterization complemented by the existing impossibility result.
\end{abstract}

\section{Introduction}\label{sec:introduction}

Fair division, introduced by Steinhaus~\cite{steinhaus1948problem}, is a classic problem in economics and computer science concerning the allocation of a set $M$ of $m$ items among a set $N$ of $n$ agents with heterogeneous preferences~\cite{journals/ai/AmanatidisABFLMVW23}.
When item valuations are positive, the items are regarded as goods (e.g., valuable resources); when negative, they are considered chores (e.g., undesirable tasks).
If an instance contains both goods and chores, it is referred to as a \emph{mixed manna}~\cite{journals/corr/BogomolnaiaMSY17}.
The central goal of fair division is to find an allocation that guarantees fairness among all agents.
In this paper, we focus on the fair allocation of mixed manna, where each agent $i$ evaluates items via an additive valuation function $v_i$.

Several notions have been proposed to measure the fairness of allocations.
Among them, envy-freeness (EF)~\cite{foley1966resource} is one of the most well-studied, which requires every agent to weakly prefer her own bundle to every other agent's bundle.
Unfortunately, EF allocations may fail to exist when items are indivisible.
To address this, envy-freeness up to one item (EF1) was introduced as a standard relaxation of EF~\cite{conf/sigecom/LiptonMMS04}.
For goods, EF1 requires that any envy between two agents can be eliminated by removing an item from the envied agent’s bundle, whereas for chores, envy is eliminated by removing an item from the envious agent’s bundle.
In mixed manna settings, EF1 combines both principles and requires that envy can be eliminated by removing either a good from the envied agent's bundle or a chore from the envious agent's bundle~\cite{journals/aamas/AzizCIW22,conf/approx/BhaskarSV21}.

While much of the classical fair-division literature assumes that agents have equal entitlements, real-world settings often involve agents with different levels of responsibility or claims~\cite{journals/ipl/Suksompong25}.
The weighted model captures such asymmetry by assigning each agent $i\in N$ a positive weight $w_i$, with $\sum_{i\in N} w_i=1$, representing her entitlement or share of the total obligation; the classical unweighted setting is recovered when $w_i=1/n$ for every agent.
For indivisible goods, Chakraborty et al.~\cite{journals/teco/ChakrabortyISZ21} introduced \emph{weighted envy-freeness up to one item} (WEF1) and showed that such allocations can be computed in polynomial time for additive valuations.
Weighted picking-sequence methods and related relaxations of weighted envy-freeness were further studied in~\cite{journals/ai/ChakrabortySS21,journals/teco/ChakrabortySS24}.
For indivisible chores, polynomial-time WEF1 guarantees were subsequently established by Springer et al.~\cite{conf/aaai/SpringerHY24} and Wu et al.~\cite{journals/ai/WuZZ25}.
However, for mixed manna, the existence of WEF1 allocations remains open, with the exception of the two-agent case resolved by Garg and Sharma~\cite{journals/corr/abs-2410-12966}.
This naturally leads to the following question, which has also been raised in the literature~\cite{journals/ai/WuZZ25,journals/corr/abs-2410-12966,journals/ipl/Suksompong25}:
\begin{center}\begin{minipage}{0.97\linewidth}
\begin{shaded}
\noindent\textbf{Question 1.}
\textit{Are WEF1 allocations always guaranteed to exist for mixed manna, and if so, can they be computed in polynomial time?}
\end{shaded}
\end{minipage}
\end{center}

Fairness is only one of the desirable properties of an allocation.
Another important objective is efficiency, often captured by Pareto optimality (PO), meaning that no alternative allocation can make every agent weakly better off and at least one agent strictly better off.
Fractional Pareto optimality (fPO) applies the same comparison to fractional allocations and is therefore a stronger requirement.
Although weighted fairness and Pareto optimality have been studied jointly for goods and chores~\cite{journals/teco/ChakrabortyISZ21,journals/corr/EF1andPO}, their compatibility in the mixed-manna setting is less understood.
More recently, Mackenzie and Suzuki~\cite{journals/corr/abs-2607-17811} showed that, for a particular additive mixed-manna instance in the unweighted setting, no allocation is both EF1 and fPO, even though that same instance admits an allocation that is both EF1 and PO.
This negative result motivates the following question:
\begin{center}\begin{minipage}{0.97\linewidth}
\begin{shaded}
\noindent\textbf{Question 2.}
\textit{For mixed manna with arbitrary positive entitlements, can a natural relaxation of WEF1 always be achieved together with fPO?}
\end{shaded}
\end{minipage}
\end{center}

While previous work has mainly focused on deterministic allocations, another line of work studies randomized guarantees, particularly lotteries that combine ex-ante fairness with ex-post approximate fairness.
For indivisible goods, Aziz et al.~\cite{journals/ior/AzizFSV24} first established the coexistence of ex-ante EF and ex-post EF1 for equal entitlements.
In the weighted setting, Aziz et al.~\cite{conf/atal/0001GM23} and Hoefer et al.~\cite{journals/jair/HoeferSV24} established lotteries that are ex-ante weighted envy-free (WEF) and ex-post weighted envy-free up to one transfer (WEF1T) for indivisible goods.
Similar results have also been established for chores~\cite{journals/ai/WuZZ25}.
More recently, ex-ante EF and ex-post EF1 have been shown to coexist for unweighted mixed-manna instances~\cite{journals/corr/abs-2607-10232}.
These developments motivate the following question for weighted mixed-manna instances:
\begin{center}\begin{minipage}{0.97\linewidth}
\begin{shaded}
\noindent\textbf{Question 3.}
\textit{Does every mixed-manna instance admit a lottery that is ex-ante WEF and ex-post $\WEFoneT$?}
\end{shaded}
\end{minipage}
\end{center}

\subsection{Our Contributions}

We resolve all three questions above and, taken together, obtain a sharp picture of weighted envy-freeness for mixed manna.
Without imposing an additional objective, the standard one-item relaxation is always attainable: a WEF1 allocation can be computed in polynomial time.
By contrast, WEF1 cannot in general coexist with fPO, nor can it be guaranteed ex post by a lottery that is ex-ante WEF; in both settings, the one-transfer relaxation WEF1T is sufficient and is best possible within the family WEF$(x,y)$.
Thus, our results identify the precise relaxation threshold under both additional objectives.

Our first result resolves Question~1 for arbitrary positive entitlements and additive valuations.


\begin{center}\begin{minipage}{0.97\linewidth}
\begin{shaded}
\noindent\textbf{Result 1. (Theorem~\ref{thm:main})}
\textit{Given any mixed manna instance, there exists an algorithm that computes WEF1 allocations in polynomial time.}
\end{shaded}
\end{minipage}
\end{center}
Round-robin and picking-sequence algorithms are fundamental tools in fair division~\cite{journals/aamas/AzizCIW22,journals/ai/ChakrabortySS21,journals/teco/ChakrabortyISZ21,conf/sigecom/FeigeH23}.
For unweighted agents, round-robin guarantees EF1 for both goods and chores.
Aziz et al.~\cite{journals/aamas/AzizCIW22} extended this guarantee to mixed manna through double round-robin: chores are allocated by round-robin, followed by the remaining goods in reverse order.
In the weighted setting, weighted picking sequences similarly guarantee WEF1 for goods~\cite{journals/teco/ChakrabortyISZ21} and, with suitable modifications, for chores~\cite{journals/ai/WuZZ25,conf/aaai/SpringerHY24}.
In particular, the reversed weighted picking sequence of Wu et al.~\cite{journals/ai/WuZZ25} can be viewed as the weighted counterpart of the reversed round-robin step.
This connection naturally suggests combining forward and reversed weighted picking sequences for mixed manna.
However, Garg and Sharma~\cite{journals/corr/abs-2410-12966} showed that this direct combination guarantees only the weaker notion of WEF1T.

Our result shows that the double weighted picking-sequence approach can in fact achieve WEF1, provided that the instance is first appropriately preprocessed.
The preprocessing is based on the meta-good technique recently developed for mixed manna~\cite{journals/corr/abs-2607-10232,journals/corr/abs-2511-04891,journals/corr/abs-2607-10089}.
The main technical issue is that a meta-good may contain several original items, whereas WEF1 must be witnessed by the removal of a single original item.
We therefore construct the meta-goods carefully so that the guarantee produced by the weighted picking sequences can still be lifted, after unpacking, to a valid one-item comparison in the original instance.
This preprocessing resolves precisely the obstruction encountered by the direct approach and allows the forward and reversed weighted picking sequences to be combined without weakening WEF1 to WEF1T.
Together, these ingredients yield a WEF1 allocation for every mixed-manna instance.

Turning to efficiency, WEF1 and fPO cannot always be achieved simultaneously, even in the unweighted setting~\cite{journals/corr/abs-2607-17811}.
We show that WEF1T always suffices.


\begin{center}\begin{minipage}{0.97\linewidth}
\begin{shaded}
\noindent\textbf{Result 2. (Theorem~\ref{thm:wef1t-fpo})}
\textit{Given any mixed manna instance, there exists an allocation that satisfies both $\WEFoneT$ and fPO.}
\end{shaded}
\end{minipage}
\end{center}
We first reduce the instance to objective goods and objective chores~\cite{conf/ijcai/0001X025}., and augment it with private items and zero-valued dummy items.
This augmentation provides the boundary structure required by the Knaster--Kuratowski--Mazurkiewicz (KKM) argument~\cite{knaster1929beweis,conf/wine/BarmanHSS25,journals/geb/IgarashiM26,journals/corr/abs-2509-18673,journals/corr/EF1andPO}.
Each point of the simplex defines a perturbed weighted-welfare maximization problem and corresponding price comparisons.
KKM selects a weight vector for which every agent has a valid comparison, but the comparisons for different agents may be witnessed by different welfare maximizers.
The main technical step is to compare these different maximizers and use the tie-breaking potential to extract one allocation satisfying all comparisons simultaneously, which yields $\WEFoneT$.
Letting the perturbation vanish then gives fPO, and the reduction and projection preserve $\WEFoneT$.
Thus, KKM selects the weights, while the tie-breaking step turns the agent-wise certificates into a single fair and efficient allocation.

For the third question, we study whether the fairness guarantee can hold simultaneously in expectation and in every realization.

\begin{center}\begin{minipage}{0.97\linewidth}
\begin{shaded}
\noindent\textbf{Result 3. (Theorem~\ref{thm:bobw})}
\textit{Given any mixed manna instance, there exists a lottery that is ex-ante WEF and ex-post $\WEFoneT$.}
\end{shaded}
\end{minipage}
\end{center}

To obtain Result 3, we first reduce the mixed-manna instance to a collection of meta-goods and residual chores that are strongly negative to every agent.
The main technical challenge is to couple the two parts: independently combining a residual-chore lottery with a meta-good lottery preserves their marginal guarantees, but need not yield WEF1T in every realization.
For the residual chores, we start from the DSE-based lottery of Wu et al.~\cite{journals/ai/WuZZ25}.
We then use a maximum-entropy argument~\cite{conf/stoc/SinghV14} to select a structured integral decomposition and derive probability bounds on which agents receive an extra chore.
These bounds allow us to verify a fractional Hall condition~\cite{hall1935representatives,schrijver2003combinatorial} and thereby construct, conditional on each realized chore allocation, a compatible randomized allocation of the meta-goods while preserving the entitlement-proportional marginals required for ex-ante WEF.
At the same time, the conditional allocation guarantees the normalized chore-count comparisons needed for the one-transfer argument.
Together with the strong negativity of the residual chores, this yields WEF1T in every realization.
Finally, the defining properties of the meta-goods convert every meta-level witness into a transfer involving a single original item.
The resulting finite lottery is therefore WEF in expectation and ex-post WEF1T.

\subsection{Related Work}
Besides EF1, several other fairness notions have been extensively studied in the literature.
Envy-freeness up to any item (EFX) is a stronger relaxation of envy-freeness~\cite{journals/teco/CaragiannisKMPS19}, while other prominent notions include maximin share (MMS)~\cite{conf/bqgt/Budish10} and proportionality up to one item (PROP1)~\cite{conf/sigecom/ConitzerF017}.
Given the extensive literature on fair division, we focus on work most closely related to fair allocation for weighted settings and mixed manna.
For broader background, we refer the reader to the surveys~\cite{journals/sigecom/AzizLMW22,journals/ai/AmanatidisABFLMVW23}.

\paragraph{Weighted Settings.}
Beyond the existence and efficient computation of WEF1 allocations, a central line of research studies whether weighted fairness can be achieved together with Pareto optimality (PO).
For indivisible goods, Chakraborty et al.~\cite{journals/teco/ChakrabortyISZ21} showed that a WEF1 and PO allocation always exists for additive valuations and can be computed in pseudo-polynomial time.
For indivisible chores, Wu et al.~\cite{journals/ai/WuZZ25} established polynomial-time algorithms for computing WEF1 and PO allocations in several special cases, including bi-valued instances and instances with two agents.
Garg et al.~\cite{conf/ijcai/GargMQ24} further obtained polynomial-time WEF1 and PO allocations for instances with three types of agents or two types of chores.
Most recently, Mahara~\cite{journals/corr/EF1andPO} resolved the general existence question for additive chores, showing that WEF1 and fractional Pareto optimality can always be achieved simultaneously.
Related weighted fairness notions have also received extensive study, e.g., \emph{weighted} PROP1/PROPX~\cite{journals/orl/AzizMS20,journals/ai/AzizLMWZ24}, \emph{weighted} MMS~\cite{journals/jair/FarhadiGHLPSSY19,conf/ijcai/0001C019,conf/ijcai/00020L24,journals/corr/abs-2510-06581,journals/corr/abs-2510-10698}, and \emph{AnyPrice share} (APS)~\cite{conf/sigecom/BabaioffEF21,conf/sigecom/FeigeH23}.
A recent survey of weighted fair division is given by Suksompong~\cite{journals/ipl/Suksompong25}.

\paragraph{Mixed Manna.}
It is worth mentioning that standard algorithmic techniques developed for purely goods or chores often break down when applied to mixed settings~\cite{journals/aamas/AzizCIW22, conf/approx/BhaskarSV21, journals/mor/ChaudhuryGMM23}, which requires novel and independent analytical frameworks~\cite{journals/corr/abs-2501-06799}.
For mixed manna under additive valuations, Bhaskar et al.~\cite{conf/approx/BhaskarSV21} and Aziz et al.~\cite{journals/aamas/AzizCIW22} established the existence and polynomial-time computation of EF1 allocations.
To achieve efficiency, Barman et al.~\cite{conf/wine/BarmanHSS25} introduced envy-freeness up to $k$ reallocations (EFR-$k$), which requires that envy can be eliminated by reassigning a subset of at most $k$ items, and showed the existence of EFR-$(n-1)$ and PO allocations.
Subsequently, Barman and Verma~\cite{journals/corr/abs-2509-18673} improved this result to introspectively envy-free up to one item (IEF1), where any agent can eliminate envy by adding or removing a single item from her own bundle, and established the general existence of allocations that are both PO and IEF1.
There are also works that study other fairness notions for the mixed setting, e.g., for EFX~\cite{conf/ki/AleksandrovW20, conf/ijcai/HosseiniMW23, conf/atal/HosseiniSVX23}, MMS~\cite{feige2022maximin, conf/wine/CousinsVZ23, conf/sigecom/KulkarniMT21} and PROP1~\cite{journals/orl/AzizMS20}.
See~\cite{journals/jair/LiuLSW24} for more detailed reviews for the mixed setting.

\subsection{Organization}
We begin in Section~\ref{sec:model} by formalizing the model and the WEF1 notion.
Section~\ref{sec:computation} presents the computation of WEF1 allocations: it introduces meta-goods, constructs them in Section~\ref{subsec:construction}, develops the weighted picking sequences in Section~\ref{sec:picking}, and uses them to allocate the reduced instance in Section~\ref{sec:remaining}.
Section~\ref{sec:fpo} turns to efficiency: it first gives an impossibility result for $\WEFone$ and fractional Pareto optimality, and then establishes the existence of a $\WEFoneT$ allocation that is fractionally Pareto optimal.
Section~\ref{sec:bobw} establishes a lottery that is ex-ante WEF and ex-post $\WEFoneT$.
Finally, Section~\ref{sec:conclusion} concludes with open problems; deferred proofs are given in the appendix.

\section{Preliminary}\label{sec:model}
We study the problem of fairly allocating a set $M$ of $m$ indivisible items, which consists of both goods and chores, among a set $N$ of $n$ agents.
In this setting, each agent $i$ has a positive entitlement $w_i>0$.
Without loss of generality, we assume the weights are normalized such that $\sum_{i \in N} w_i = 1$.
When $w_i = 1/n$ for all $i \in N$, the instance is referred to as unweighted.
A subset of items $B \subseteq M$ is referred to as a bundle.
Each agent $i \in N$ has an additive valuation function $v_i: 2^M \to \mathbb{R}$, which assigns a value to every bundle $B \subseteq M$.
For ease of notation, we write $v_i(e)$ for $v_i(\{e\})$.
Thus, $v_i(B)=\sum_{e\in B}v_i(e)$ for every $B\subseteq M$, and $v_i(\varnothing)=0$.
Throughout this paper, we use $\mathbf{w} = (w_1, \ldots, w_n)$ and $\mathbf{v} = (v_1, \ldots, v_n)$ to denote the entitlement vector and valuation profile of the agents, respectively.
An allocation $\bX = (X_1, \dots, X_n)$ is an ordered partition of the item set $M$ into $n$ disjoint bundles, where agent $i$ receives bundle $X_i$.
Given an instance $\cI = (N, M, \bw, \bv)$, our goal is to find an allocation $\bX$ that is fair to all agents.
For any nonnegative integer $k$, let $[k]=\{1,2,\ldots,k\}$, where $[0]=\varnothing$.

For any agent $i \in N$, items can be classified as \emph{goods} or \emph{chores} according to their values under $v_i$.
Specifically, an item $e$ is a good to agent $i$ if $v_i(e) \geq 0$, a chore if $v_i(e) < 0$.
We note that these classifications may differ across agents.
In other words, an item can be a good to some agent but a chore to another agent.
Furthermore, an item $e$ is called an \emph{objective good} (resp., \emph{objective chore}) if $v_i(e) \ge 0$ (resp., $v_i(e) < 0$) for all agents $i \in N$.
Finally, an item $e$ is a \emph{subjective good} if $v_i(e) \ge 0$ for at least one agent $i \in N$.

In the following, we introduce several fairness notions.

\begin{definition}[WEF]\label{def:wef}
An allocation $\bX$ satisfies weighted envy-freeness (WEF) if for any two agents $i,j\in N$, we have $v_i(X_i)/w_i \ge v_i(X_j)/w_j$.
\end{definition}

While WEF provides a strong fairness guarantee, such allocations may not always exist, which motivates us to consider two relaxed fairness notions derived from WEF.
\begin{definition}[WEF1]\label{def:wef1}
An allocation $\bX$ satisfies weighted envy-freeness up to one item (WEF1) if, for any two agents $i,j\in N$, at least one of the following holds:
\begin{enumerate}[label=(\roman*),ref=(\roman*)]
    \item agent $i$ satisfies WEF towards $j$: $v_i(X_i) / w_i \ge v_i(X_j) / w_j$;
    \item there exists an item $e\in X_j$ with $v_i(e)>0$ such that $v_i(X_i) / w_i \ge v_i(X_j\setminus\{e\}) / w_j$;
    \item there exists an item $e\in X_i$ with $v_i(e)<0$ such that $v_i(X_i\setminus\{e\}) / w_i \ge v_i(X_j) / w_j$.
\end{enumerate}
\end{definition}

In particular, we say that agent $i$ satisfies $\mathrm{WEF1}$ towards agent $j$ if any of the above conditions holds for two agents $i, j \in N$.
\begin{definition}[WEF1T]\label{def:wef1t}
An allocation $\bX$ satisfies \emph{weighted envy-freeness up to one transfer} (WEF1T), if, for any two agents $i, j \in N$, at least one of the following holds:
\begin{enumerate}[label=(\roman*),ref=(\roman*)]
    \item agent $i$ is WEF towards $j$: $v_i(X_i) / w_i \ge v_i(X_j) / w_j$;
    \item there exists an item $e\in X_j$ with $v_i(e)>0$ such that $v_i(X_i \cup \{e\}) / w_i \ge v_i(X_j\setminus\{e\}) / w_j$;
    \item there exists an item $e\in X_i$ with $v_i(e)<0$ such that $v_i(X_i\setminus\{e\}) / w_i \ge v_i(X_j \cup \{e\}) / w_j$.
\end{enumerate}
\end{definition}

Next, we formally define the efficiency notion considered in this paper.
\begin{definition}[PO and fPO]
An allocation $\bX'$ \emph{Pareto dominates} another allocation $\bX$ if $v_i(X_i') \geq v_i(X_i)$ for all $i \in N$, with strict inequality for at least one agent.
An integral allocation $\bX$ is \emph{Pareto optimal} (PO) if it is not Pareto dominated by any other integral allocation.
Moreover, $\bX$ is \emph{fractional Pareto optimal} (fPO) if it is not Pareto dominated by any fractional allocation\footnote{In a fractional allocation, items can be infinitely divided.
Formally, it is represented by variables $x_{ij} \in [0, 1]$ denoting the fraction of item $j$ allocated to agent $i$, such that $\sum_{i \in N} x_{ij} = 1$ for every item $j$.
An agent's valuation extends linearly, i.e., $v_i(\bX_i) = \sum\limits_{j \in M} x_{ij} \cdot v_i(j)$.}.
\end{definition}


\section{The Computation of WEF1 Allocations}\label{sec:computation}\label{sec:metagoods}\label{sec:construction}
In this section, we introduce our algorithm that computes WEF1 allocation for the mixed setting.
First, we provide an overview of our algorithm.


\paragraph{Algorithm Overview.}
Our algorithmic framework operates in two distinct phases.
The first phase is a bundling preprocessing step based on the meta-good technique, which has recently emerged in a line of research on mixed manna~\cite{journals/corr/abs-2511-04891, journals/corr/abs-2607-10089, journals/corr/abs-2607-10232}.
The high-level idea is to group several items into a bundle called a \emph{meta-good} that will be allocated completely.
By design, this construction guarantees desirable structural properties while leaving behind a subset of objective chores, referred to as residual chores (see Algorithm~\ref{alg:metagood-construction}).
In the second phase, we establish weighted picking sequences adapted for both goods and chores.
By carefully combining these picking sequences with the underlying properties of the meta-goods, we jointly allocate meta-goods and residual chores to achieve WEF1 allocations (Algorithm~\ref{alg:reduced-allocation}).
\medskip

To begin with, we formally define the concept of a meta-good.
\begin{definition}[Meta-good]\label{def:metagood}
Fix a nonempty set of agents $N'\subseteq N$.
A nonempty set $G\subseteq M$ is a \emph{meta-good over $N'$} if
\begin{enumerate}[label=(\roman*),ref=(\roman*)]
    \item $\max_{i\in N'} \{v_i(G)\}\ge0 $;
    \item for every $i\in N'$ with $v_i(G) \ge 0$, there exists an original item
$e\in G$ such that
    $$
    v_i(e) \ge 0
    \qquad\text{and}\qquad
    v_i(G\setminus\{e\})\le0.
    $$
\end{enumerate}
\end{definition}

The intuition behind this definition is twofold. 
Condition (i) ensures that $G$ can be assigned to an agent who values it non-negatively.
Moreover, condition (ii) is essential to guarantee the $\WEFone$ property. 
To see this, consider any two agents $i, j \in N'$, and suppose a meta-level allocation (where each meta-good is regarded as an indivisible item) assigns meta-good $G$ to agent $j$ such that
$$
\frac{v_i(X_i)}{w_i}
\ge
\frac{v_i(X_j\setminus G)}{w_j}
\qquad\text{and}\qquad v_i(G)>0.
$$
By condition (ii), we can select a single item $e\in G$ such that
$$
v_i(X_j\setminus\{e\})
=v_i(X_j\setminus G)+v_i(G\setminus\{e\})
\le v_i(X_j\setminus G).
$$
Thus, removing the single good $e$ is at least as helpful as deleting the whole meta-good.
Notice that every singleton $\{e\}$ with $\max_{i\in N'} \{v_i(e)\}\ge0$ is a meta-good over $N'$.

\paragraph{Remark.}
While Lu et al.~\cite{journals/corr/abs-2607-10089} require a similar upper-bound condition for meta-goods, their construction does not necessarily satisfy condition (ii) above.
To illustrate this, consider a simple instance with two agents and three items, whose valuations are given in Table~\ref{tab:meta_good_counterexample}.
Applying Algorithm~3 from~\cite{journals/corr/abs-2607-10089} to this instance merges $\{e_1, e_2\}$ into a meta-good, which fails to satisfy condition (ii) in Definition~\ref{def:metagood}.

\begin{table}[htbp]
    \centering
    \begin{tabular}{c|c|c|c}
            & $e_1$ & $e_2$ & $e_3$ \\
    \hline
    Agent 1 & 0.4   & 0.4   & -0.9  \\
    Agent 2 & -1    & -1    & -0.9
    \end{tabular}
    \caption{Counterexample for the meta-good construction in~\cite{journals/corr/abs-2607-10089}.}
    \label{tab:meta_good_counterexample}
\end{table}

Thus, in the following, we present a modification of the construction that satisfies our strengthened definition of meta-goods.
Similar to Algorithm 3 in~\cite{journals/corr/abs-2607-10089}, we start from singleton positive items and first construct a disjoint family of meta-goods by repeatedly absorbing strictly negative items.
This leaves residual chores with a strong negative value.



\subsection{The Construction of Meta-Goods}\label{subsec:construction}
In this subsection, we show how to construct meta-goods that satisfy the desirable properties in Definition~\ref{def:metagood}.
To facilitate our construction, we partition the item set $M$ into two disjoint subsets as below:
$$
M^+ =\{e\in M:\max_{i\in N} \{v_i(e)\} \ge0\},
\qquad
M^- =\{e\in M:\max_{i\in N} \{v_i(e)\} <0\}.
$$
By definition, every item in $M^-$ is an objective chore for each agent, whereas every singleton $\{e\}$ with $e \in M^+$ trivially forms a meta-good over $N$.

Throughout the construction, we use $\mathcal{G}$ to denote the collection of meta-goods and $\mathcal{D}$ to denote the set of objective chores.
For each agent $i \in N$, let $G^i_1, G^i_2, \ldots, G^i_{\ell_i}$ denote the meta-goods with $v_i(G^i_k) \ge 0$ for $k \in [\ell_i]$, indexed in non-increasing order of value to $i$ by $v_i(G^i_1)\ge v_i(G^i_2)\ge\cdots\ge v_i(G^i_{\ell_i}) \ge 0$.
For any item $e\in \mathcal D$, we further define
\begin{equation*}
 k_{i,e}:=
 \min\left\{k\in[\ell_i]:
 v_i(e)+\sum_{t=1}^k v_i(G^i_t)\ge0\right\},
\end{equation*}
with $k_{i,e}=+\infty$ if the displayed set is empty.
Intuitively, $k_{i,e}$ is the shortest prefix of $i$'s nonnegative meta-goods that compensates for the disutility of $e$, which allows us to merge several meta-goods with an objective chore.

We are now ready to present the detailed construction procedure in Algorithm~\ref{alg:metagood-construction}.
The algorithm starts with each item in $M^+$ as a singleton meta-good and places all objective chores in $\mathcal D$.
In each iteration, it updates $k_{i,e}$, selects a feasible pair $(i, e)$ that minimizes this value, and merges $e$ with the corresponding minimal prefix of meta-goods.
This process terminates when no remaining chore in $\mathcal{D}$ can be absorbed.

\begin{algorithm}[htbp]
\caption{Meta-Good Construction}\label{alg:metagood-construction}
\KwIn{Agents $N$, items $M$, and additive valuations $(v_i)_{i\in N}$.}
Set $\mathcal G\gets\{\{g\}:g\in M^+\}$ and $\mathcal D\gets M^-$\;
\While{\textnormal{true}}{
    For every $i\in N$, compute $k_{i,e}$, for $e\in \mathcal D$ as definition\;
    \If{$k_{i,e}=+\infty$ for every $i \in N$ and $e \in \mathcal D$}{
        \textbf{break}\;
    }
    Choose a feasible pair $(i,e)$ minimizing $k_{i,e}$\;
    Set $\mathcal S\gets\{G^i_1,\ldots,G^i_{k_{i,e}}\}$ and
    $G'\gets\{e\}\cup\bigcup_{G\in\mathcal S}G$\;
    Update $\mathcal G\gets(\mathcal G\setminus\mathcal S)\cup\{G'\}$ and
    $\mathcal D\gets \mathcal D\setminus\{e\}$\;
}
\KwOut{$(\mathcal G, \mathcal D)$.}
\end{algorithm}

By construction, Algorithm~\ref{alg:metagood-construction} provides several structural guarantees upon termination.
For ease of notation, for any subfamily $\mathcal G'\subseteq\mathcal G$, we write $v_i(\mathcal G')=\sum_{G\in\mathcal G'}v_i(G)$.

\begin{lemma}\label{lem:construction}
Upon termination of Algorithm~\ref{alg:metagood-construction}, the following properties hold:
\begin{enumerate}[label=(\roman*), ref=(\roman*)]
    \item Every element in $\mathcal{G}$ is a meta-good over $N$.
    \item For every agent $i \in N$, every chore $e \in \mathcal{D}$, we have $v_i(e) + v_i(\mathcal G') < 0$ for any $\mathcal G' \subseteq \mathcal G$.
\end{enumerate}
\end{lemma}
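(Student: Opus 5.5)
We prove (ii) directly from the termination condition and (i) by induction over the iterations of Algorithm~\ref{alg:metagood-construction}, showing that every iteration keeps all members of $\mathcal G$ meta-goods over $N$. The key ingredient for (i) is that the chosen pair $(i,e)$ minimizes $k_{i,e}$ over \emph{all} agents. The values $k_{j,e}$ used below are always those computed at the start of the iteration, with respect to the current family $\mathcal G$.

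\textbf{Property (ii).} The loop stops only when $k_{i,e}=+\infty$ for every $i\in N$ and $e\in\mathcal D$. So for each $i$ and each $e\in\mathcal D$, every prefix satisfies $v_i(e)+\sum_{t=1}^{k}v_i(G^i_t)<0$. Taking $k=\ell_i$ (or the empty sum when $\ell_i=0$) gives $v_i(e)+\sum_{t=1}^{\ell_i}v_i(G^i_t)<0$. Now let $\mathcal G'\subseteq\mathcal G$ be arbitrary. Dropping its members with negative $v_i$-value and adding the missing nonnegative ones can only increase the sum, so $v_i(\mathcal G')\le\sum_{t=1}^{\ell_i}v_i(G^i_t)$. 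Hence $v_i(e)+v_i(\mathcal G')<0$.

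\textbf{Property (i), base case.} Initially every member of $\mathcal G$ is a singleton $\{g\}$ with $g\in M^+$, which is a meta-good over $N$, as noted after Definition~\ref{def:metagood}.

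\textbf{Property (i), inductive step.} Only the merged set needs checking, since the other members of $\mathcal G$ are unchanged. Let $(i,e)$ be the chosen pair, $k=k_{i,e}$, $\mathcal S=\{G^i_1,\ldots,G^i_k\}$ and $G'=\{e\}\cup\bigcup\mathcal S$. Condition (i) of Definition~\ref{def:metagood} holds because $v_i(G')=v_i(e)+\sum_{t\le k}v_i(G^i_t)\ge0$ by the definition of $k_{i,e}$.

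For condition (ii), fix any agent $j$ with $v_j(G')\ge0$ (this includes $j=i$). Let $\mathcal S_j=\{G\in\mathcal S: v_j(G)\ge0\}$.
\begin{itemize}
\item Since $v_j(\mathcal S)\ge -v_j(e)>0$, we get $v_j(\mathcal S_j)\ge v_j(\mathcal S)\ge -v_j(e)$.
\item The $|\mathcal S_j|$ most valuable nonnegative meta-goods of $j$ are worth at least $v_j(\mathcal S_j)$, so $k_{j,e}\le|\mathcal S_j|$.
\item By minimality of the chosen pair, $k\le k_{j,e}\le|\mathcal S_j|\le k$. Hence $\mathcal S_j=\mathcal S$ and $k_{j,e}=k$. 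So every member of $\mathcal S$ is nonnegative for $j$, and the length-$(k-1)$ prefix of $j$'s ordering fails: $v_j(e)+\sum_{t\le k-1}v_j(G^j_t)<0$, which is trivially true when $k=1$.
\end{itemize}

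Now let $G^*\in\mathcal S$ minimize $v_j$ over $\mathcal S$. Then $\mathcal S\setminus\{G^*\}$ consists of $k-1$ nonnegative meta-goods of $j$, so $v_j(\mathcal S\setminus\{G^*\})\le\sum_{t\le k-1}v_j(G^j_t)$, and therefore $v_j(e)+v_j(\mathcal S\setminus\{G^*\})<0$.

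By the inductive hypothesis $G^*$ is a meta-good over $N$, and $v_j(G^*)\ge0$. Condition (ii) for $G^*$ gives an item $f\in G^*$ with $v_j(f)\ge0$ and $v_j(G^*\setminus\{f\})\le0$. Consequently
\[
v_j(G'\setminus\{f\})=v_j(e)+v_j(\mathcal S\setminus\{G^*\})+v_j(G^*\setminus\{f\})<0 .
\]
Thus $f$ witnesses condition (ii) of Definition~\ref{def:metagood} for $G'$ and agent $j$.

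\textbf{Main obstacle.} The delicate step is showing that an arbitrary agent $j$ who weakly likes $G'$ must view every absorbed meta-good as nonnegative, with $k_{j,e}$ exactly equal to $k$. This is where minimality over all agents, and not just over chores for the fixed agent $i$, is essential. It is exactly the feature missing from the construction in the Table~\ref{tab:meta_good_counterexample} example.
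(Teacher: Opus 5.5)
Your proof is correct and follows the same route as the paper. Part (ii) is read off from the termination test. Part (i) is proved by induction over iterations: minimality of $k_{i,e}$ over all agents forces every $j$ with $v_j(G')\ge 0$ to have $v_j(e)+v_j(\mathcal S\setminus\{G\})<0$ for $G\in\mathcal S$, so the witness of the inner meta-good $G$ certifies condition (ii) for $G'$.

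The differences are cosmetic. You argue directly via $\mathcal S_j=\mathcal S$ and $k_{j,e}=k$, where the paper argues by contradiction. In both versions, the case $\ell_i=0$ of part (ii) actually rests on $e\in M^-$ rather than on the termination condition.
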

\begin{proof}
We first show that each iteration preserves the fact that every element of $\mathcal G$ is a meta-good over $N$.
This is true initially because every singleton $\{e\}$ with $e\in M^+$ is a meta-good.
Consider an iteration that selects $(i,e)$, and let $\mathcal S=\{G^i_1,\ldots,G^i_k\}$, and $G'=\{e\}\cup\bigcup_{G\in\mathcal S}G$ as illustrated in the algorithm.
In the following, we show that $G'$ is a meta-goods over $N$.

By the design of the algorithm, we have $v_i(G')\ge0$, which guarantees condition (i) in Definition~\ref{def:metagood}.
We next show that $G'$ also satisfies condition (ii) in Definition~\ref{def:metagood}.
Fix any agent $j\in N$ with $v_j(G')\ge0$.
By the choice of $(i,e)$ that minimizes $k_{i,e}$, we must have $v_j(G'\setminus G)<0$ for every $G\in\mathcal S$. Otherwise, suppose $v_j(G'\setminus G) \ge 0$ for some $G\in\mathcal S$. The remaining non-negative meta-goods in $\mathcal S\setminus\{G\}$ (at most $k_{j,e}-1$ of them), together with $e$, would still yield a non-negative total value for agent $j$. This implies $j$ could find a feasible prefix of length at most $k_{i,e}-1$, yielding $k_{j,e} \le k_{i,e} - 1 < k_{i,e}$, which contradicts the minimality of $k_{i,e}$.

It follows that $v_j(G)=v_j(G')-v_j(G'\setminus G)>0$ for every $G\in\mathcal S$.
Since $G$ is a meta-good, there is an item $e'\in G$ such that $v_j(e')\ge0$ and $v_j(G\setminus\{e'\})\le0$.
Therefore,
$$
v_j(G'\setminus\{e'\})
=v_j(G'\setminus G)+v_j(G\setminus\{e'\})<0,
$$
which ensures condition (ii).
Hence $G'$ is a meta-good and every iteration preserves the required property.
At termination, it follows that $k_{i,e} = +\infty$ for every $i \in N$ and $e \in \mathcal{D}$.
Thus, for any $\mathcal G' \subseteq \mathcal G$, we have
$$
v_i(e) + v_i(\mathcal G') \leq v_i(e) + \sum_{G \in \mathcal G:\, v_i(G) \ge 0} v_i(G) < 0.
$$
This completes the proof.
\end{proof}

Therefore, it remains to show that Algorithm~\ref{alg:metagood-construction} is guaranteed to terminate efficiently.
\begin{lemma}\label{lem:construction-runtime}
\Cref{alg:metagood-construction} terminates in $O(nm^2 \log m)$.
\end{lemma}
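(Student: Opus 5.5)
The plan is to bound the number of iterations and then the cost of each one. Every iteration that does not break removes exactly one chore $e$ from $\mathcal D$ and never puts any item back. Initially $|\mathcal D| = |M^-| \le m$, so there are at most $m$ merging iterations plus one final iteration that ends in the \textbf{break}. Termination is therefore immediate. What remains is to show that each iteration can be carried out in $O(nm\log m)$ time.

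For the per-iteration cost, I will fix an agent $i$ and proceed in three steps.
\begin{enumerate}[label=(\arabic*)]
    \item Compute $v_i(G)$ for every current meta-good $G\in\mathcal G$. The meta-goods are disjoint subsets of $M$, so this takes $O(m)$ time in total.
    \item Select the meta-goods with $v_i(G)\ge 0$ and sort them in non-increasing order. This gives the sequence $G^i_1,\dots,G^i_{\ell_i}$ in $O(m\log m)$ time, since $|\mathcal G|\le m$.
    \item Form the prefix sums $P^i_k=\sum_{t\le k} v_i(G^i_t)$ in $O(m)$ time. These are non-decreasing in $k$ because every term is nonnegative.
\end{enumerate}
For a chore $e\in\mathcal D$, the index $k_{i,e}$ is the smallest $k$ with $P^i_k\ge -v_i(e)$. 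Because the prefix sums are monotone, a binary search finds it in $O(\log m)$ time, and it equals $+\infty$ exactly when $P^i_{\ell_i} < -v_i(e)$. Doing this for all $e\in\mathcal D$ costs $O(m\log m)$. Summed over the $n$ agents, one iteration computes all the values $k_{i,e}$ in $O(nm\log m)$ time.

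The remaining operations in an iteration are cheap. Checking the break condition and choosing a minimizing pair $(i,e)$ means scanning $nm$ values, which costs $O(nm)$. Building $G'$ and updating $\mathcal G$ and $\mathcal D$ costs $O(m)$, since the union touches at most $m$ items. Multiplying the per-iteration cost $O(nm\log m)$ by at most $m+1$ iterations gives $O(nm^2\log m)$.

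I do not expect a serious obstacle here. The only points needing care are the monotonicity of the prefix sums, which is what justifies the binary search, and the fact that $|\mathcal G|$ never exceeds $m$, which holds because the meta-goods remain disjoint nonempty subsets of $M$. I will also treat each valuation lookup and arithmetic operation as taking unit time.
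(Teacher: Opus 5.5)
Your proposal is correct and follows the paper's argument. Each iteration removes one chore from $\mathcal D$, giving at most $m$ iterations. In each iteration, every agent needs an $O(m\log m)$ sort plus prefix sums and an $O(\log m)$ binary search per chore, which totals $O(nm\log m)$ per iteration. Your extra details (computing all $v_i(G)$ in $O(m)$ using disjointness of the meta-goods, and counting the final \textbf{break} iteration) only refine the same bound.
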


\begin{proof}
Since each iteration removes one item from $\mathcal{D}$, Algorithm~\ref{alg:metagood-construction} terminates after at most $m$ iterations.
Throughout the execution, the family $\mathcal{G}$ contains at most $m$ meta-goods.
For a fixed agent $i \in N$, sorting the current meta-goods in non-increasing order w.r.t. some $v_i$ and constructing the prefix sums requires $O(m \log m)$ time.
Since the prefix sums are monotonically non-decreasing, each index $k_{i,e}$ can be located via binary search in $O(\log m)$ time.
Consequently, computing $k_{i,e}$ for all $i \in N$ and $e \in \mathcal{D}$ takes $O(nm \log m)$ time.
In addition, identifying a feasible pair $(i,e)$ that minimizes $k_{i,e}$ takes $O(nm)$ time, and executing the corresponding merge operation requires $O(m)$ time.
Therefore, each iteration runs in $O(nm \log m)$ time, which yields an overall runtime of $O(nm^2 \log m)$ for Algorithm~\ref{alg:metagood-construction}.
\end{proof}

\subsection{Forward and Reversed Weighted Picking Sequence}\label{sec:picking}
Given the meta-goods $\mathcal G$ and residual chores $\mathcal D$ computed by Algorithm~\ref{alg:metagood-construction}, it remains to allocate these objects so that the resulting allocation is $\WEFone$.
Intuitively, the meta-goods in $\mathcal G$ play the role of goods (with respect to some agents), whereas the residual chores can be combined with meta-goods to form objects that are strictly negative for every agent.

To achieve this, we apply two weighted picking algorithms: the Forward Weighted Picking Sequence (FWPS) for goods~\cite{journals/teco/ChakrabortyISZ21} and the Reversed Weighted Picking Sequence (RWPS) for chores~\cite{journals/ai/WuZZ25}.
Specifically, the forward sequence allocates the meta-goods in $\mathcal G$, whereas the reversed sequence processes the grouped objects formed by combining each residual chore with the meta-goods valued non-negatively by the selected agent.

We first present the allocation procedure for meta-goods (Algorithm~\ref{alg:forward-picking}).
Unlike standard goods allocation, meta-goods may be valued negatively by some agents.
Intuitively, such meta-goods should not be assigned to agents who view them as chores.
To address this, we demonstrate that all meta-goods can be fully allocated while guaranteeing that every agent receives a bundle of non-negative utility (Lemma~\ref{lem:forward-metagoods}).
In the forward procedure, let $s_i$ denote the normalized pick count of agent $i$, initialized to $s_i = 0$ for all $i \in N'$.
At each step, we select an agent with the minimum $s_i$ among those who value at least one remaining meta-good positively.
The selected agent receives her most preferred remaining meta-good, and $s_i$ increases by $1/w_i$.
Finally, those meta-goods that yield non-positive utility to all agents are subsequently assigned to agents who value them at zero.

\begin{algorithm}[htbp]
\caption{Forward Weighted Picking Sequence}\label{alg:forward-picking}
\KwIn{Agents $N'\subseteq N$, entitlements $(w_i)_{i\in N'}$, and a meta-goods set $\mathcal G$ over $N'$.}
Initialize $Y_i\gets\varnothing$ and $s_i\gets0$ for every $i\in N'$; \\
Set $\mathcal G^0\gets\{G\in\mathcal G:v_i(G)\le0\text{ for every }i\in N'\}$ and $\mathcal R\gets\mathcal G\setminus\mathcal G^0$\;
\While{$\mathcal R\neq\varnothing$}{
    Let $N^0=\{j\in N':\text{some }G\in\mathcal R\text{ satisfies }v_j(G)>0\}$\;
    Choose $i\in\argmin_{j\in N^0} \{s_j\}$\;
    Choose $G\in\argmax_{G'\in\mathcal R}\{v_i(G')\}$\;
    Update $Y_i\gets Y_i\cup\{G\}$, $s_i\gets s_i+1/w_i$, and $\mathcal R\gets\mathcal R\setminus\{G\}$\;
}
\For{$G\in\mathcal G^0$}{
    Choose $i\in\argmax_{j\in N'} \{v_j(G)\}$ and update $Y_i\gets Y_i\cup\{G\}$\;
}
\KwOut{An allocation $\bY$.}
\end{algorithm}

The formal properties of Algorithm~\ref{alg:forward-picking} are summarized in the following lemma, with its proof deferred to Appendix~\ref{appendix: 3.2}.

\begin{lemma}\label{lem:forward-metagoods}
Given $N' \subseteq N$ and meta goods set $\mathcal{G}$ over $N'$, Algorithm~\ref{alg:forward-picking} computes a $\WEFone$ allocation among $N'$ such that $v_i(Y_i) \geq 0$ for every $i \in N'$.
\end{lemma}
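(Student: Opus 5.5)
The plan is to verify the two claims separately: first non-negativity, then $\WEFone$. For $\WEFone$ we argue at the meta-good level with the standard weighted-picking argument of Chakraborty et al., and then lift the meta-level witness to a single original item using condition (ii) of Definition~\ref{def:metagood}.

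\emph{Non-negativity.} In the while loop, the selected agent $i$ lies in $N^0$, so some $G\in\mathcal R$ has $v_i(G)>0$. The $\argmax$ choice therefore gives her a meta-good of strictly positive value. For $G\in\mathcal G^0$, every agent values $G$ at most $0$. Condition (i) of Definition~\ref{def:metagood} gives $\max_{j\in N'}v_j(G)\ge0$, so this maximum equals $0$, and $G$ goes to an agent $j$ with $v_j(G)=0$. Hence $v_i(Y_i)\ge0$ for every $i\in N'$. In fact every meta-good in $Y_i$ has value at least $0$ to $i$.

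\emph{Meta-level $\WEFone$.} Fix $i,j\in N'$. Split $Y_j$ into the meta-goods $j$ picked in the loop and those from $\mathcal G^0$. The latter have $v_i\le0$ and can only lower $v_i(Y_j)$, so we may drop them. Likewise $i$'s meta-goods from $\mathcal G^0$ have $v_i=0$ and change nothing. Let $g_1,\dots,g_k$ be $j$'s loop picks in chronological order.

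Let $\tau$ be the first time $i\notin N^0$. After $\tau$, every remaining meta-good in $\mathcal R$ has $v_i\le0$, since membership in $N^0$ only shrinks as $\mathcal R$ shrinks. So $j$'s picks after $\tau$ can also be dropped from the comparison.

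For $j$'s $t$-th pick before $\tau$, both $i$ and $j$ are in $N^0$ and $j$ is the $\argmin$. Hence $(t-1)/w_j=s_j\le s_i$, so $i$ has already made at least $\lceil (t-1)w_i/w_j\rceil$ picks. Each of these earlier picks of $i$ was $i$'s favourite among a superset of the meta-goods still available, so each has $v_i$-value at least $v_i(g_{t'})$ for all $t'\ge t$.

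The usual charging argument then gives
$$\frac{v_i(Y_i)}{w_i}\ge\frac{v_i(Y_j\setminus g_1)}{w_j}.$$
Here we pair each $g_t$ with $t\ge2$ against a $w_i/w_j$-fraction of $i$'s picks made between $j$'s picks $t-1$ and $t$, and sum a telescoping/Abel bound over the sorted values. This step reproduces the Chakraborty et al.\ lemma. The main care needed is the tie-breaking in $\argmin$ and the fact that $i$ may leave $N^0$ early, and both are handled by the monotonicity of $N^0$ noted above. I expect this bookkeeping to be the main obstacle. If a direct citation does not apply verbatim, I would redo the standard proof: write $v_i(Y_i)/w_i \ge \sum_{t\ge2} v_i(g_t)/w_j$ by assigning weight to $i$'s picks in the interval $(s=(t-2)/w_j,\,(t-1)/w_j]$ of the normalized time axis.

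\emph{Lifting to original items.} If $v_i(Y_j)\le v_i(Y_i)\cdot w_j/w_i$, then agent $i$ is $\WEF$ towards $j$ and we are done. Otherwise the meta-level inequality must use $g_1$ with $v_i(g_1)>0$. Condition (ii) of Definition~\ref{def:metagood} then provides $e\in g_1$ with $v_i(e)\ge0$ and $v_i(g_1\setminus\{e\})\le0$. In fact $v_i(e)>0$, since otherwise $v_i(g_1\setminus\{e\})=v_i(g_1)>0$. Therefore
$$v_i(Y_j\setminus\{e\})\le v_i(Y_j\setminus g_1)\le \frac{w_j}{w_i}\,v_i(Y_i),$$
which is case (ii) of Definition~\ref{def:wef1} for the original items.
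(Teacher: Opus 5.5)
Your proposal is correct and follows essentially the same route as the paper: a meta-level weighted-picking comparison driven by $s_j\le s_i$ whenever $i$ is still active, followed by lifting to a single original good via condition (ii) of Definition~\ref{def:metagood}. Your $\tau$ plays the role of the paper's observation that a still-available $z_s$ positive to $i$ keeps $i$ active.

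The differences are minor. You delete $j$'s chronologically first pick and infer $v_i(g_1)>0$ from the envy, whereas the paper restricts attention to $j$'s picks that are positive to $i$. Also, the pairing should be done by normalized position $((t-2)/w_j,(t-1)/w_j]$, as in your fallback, which is exactly the paper's area argument. Pairing with $i$'s picks made in real time between $j$'s $(t-1)$-th and $t$-th picks does not work in general: $i$ may have made some of the relevant picks before $j$'s $(t-1)$-th pick.
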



We next present the RWPS algorithm that allocates a family $\mathcal{Q}$ of indivisible objects to a group of agents $N'$.
In our framework, each object may be a singleton objective chore or a bundle of base items $Q\in \mathcal Q$ such that $v_i(Q) < 0$ for all $i\in N'$.
In Algorithm~\ref{alg:reduced-allocation}, each object processed by RWPS is constructed by combining one residual chore with a set of meta-goods.
Unlike goods allocation, RWPS first computes a picking sequence $\sigma(t)$ and then executes the picking process in reverse order.
\begin{algorithm}[htbp]
\caption{Reversed Weighted Picking Sequence}\label{alg:reverse-picking}
\KwIn{Agents $N$, entitlements $(w_i)_{i\in N'}$, and a nonempty family $\mathcal Q$ of indivisible objects satisfying $v_i(Q)<0$ for every $i\in N'$ and $Q\in\mathcal Q$.}
Initialize $s_i\gets0$ and $Y_i\gets\varnothing$ for every $i\in N'$, and set $\mathcal R\gets\mathcal Q$\;
\For{$t=1,2,\ldots,|\mathcal Q|$}{
    Choose $\sigma(t)\in\arg\min_{j\in N'} \{s_j\}$ and let $i\gets\sigma(t)$\;
    Update $s_{i}\gets s_{i}+1/w_{i}$\;
}
\For{$t=|\mathcal Q|,|\mathcal Q|-1,\ldots,1$}{
    Let $i\gets\sigma(t)$\;
    Choose $Q^*\in\arg\max_{Q\in\mathcal R}\{v_i(Q)\}$\;
    Update $Y_i\gets Y_i\cup\{Q^*\}$ and $\mathcal R\gets\mathcal R\setminus\{Q^*\}$\;
}
\KwOut{An allocation $\bY$ of $\mathcal Q$.}
\end{algorithm}

The following lemma is obtained directly from Theorem~3.1 of~\cite{journals/ai/WuZZ25} by negating the valuation functions.

\begin{lemma}[{\cite{journals/ai/WuZZ25}}]\label{lem:reverse-picking}
For any family $\mathcal{Q}$ satisfying the input conditions of Algorithm~\ref{alg:reverse-picking}, the algorithm outputs an allocation $\mathbf{Y}$ such that for every pair of agents $i, j \in N'$, either $v_i(Y_i)/w_i \geq v_i(Y_j) / w_j$, or there exists some $Q \in Y_i$ such that $v_i(Y_i \setminus \{Q\}) / w_i \ge v_i(Y_j) / w_j$.
\end{lemma}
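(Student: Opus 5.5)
The plan is to derive the statement from Theorem~3.1 of~\cite{journals/ai/WuZZ25}. That theorem says the reversed weighted picking sequence yields a WEF1 allocation of chores: every item has nonpositive (cost-)value to every agent, and envy is removed by deleting one chore from the envious agent's own bundle. The reduction has three steps.

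\emph{Step 1: the chore instance.} Treat each object $Q\in\mathcal Q$ as a single indivisible chore. Define the cost function $c_i(Q):=-v_i(Q)$, which is strictly positive for every $i\in N'$ and $Q\in\mathcal Q$ by the input condition of Algorithm~\ref{alg:reverse-picking}. Extend $c_i$ additively over sets of objects; this is consistent because $v_i$ is additive over base items and the objects are pairwise disjoint. The result is a legitimate weighted additive chore instance on agent set $N'$ with entitlements $(w_i)_{i\in N'}$.

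\emph{Step 2: the algorithm is the same.} In Algorithm~\ref{alg:reverse-picking}, the first loop builds $\sigma$ by repeatedly choosing an agent with minimal $s_j$ and adding $1/w_j$. In the reversed loop, agent $\sigma(t)$ picks $\arg\max_Q v_i(Q)$, which is exactly $\arg\min_Q c_i(Q)$. I will match this against the RWPS of~\cite{journals/ai/WuZZ25}, including how ties in $\sigma$ are broken. If their tie-breaking differs, I will check that their proof works for arbitrary tie-breaking, or pick the tie-breaking in our algorithm to match theirs.

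\emph{Step 3: translate the guarantee.} Their theorem gives, for all $i,j$, either $c_i(Y_i)/w_i\le c_i(Y_j)/w_j$, or there is a $Q\in Y_i$ with $c_i(Y_i\setminus\{Q\})/w_i\le c_i(Y_j)/w_j$. Multiplying by $-1$ turns these into exactly the two alternatives in Lemma~\ref{lem:reverse-picking}. Since every $Q$ is a chore to $i$, the removed object is automatically negatively valued, although the lemma as stated does not need this.

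The main obstacle I expect is confirming that the normalization and tie-breaking in the cited RWPS match this one exactly. For example, their version might use counts $s_i/w_i$ rather than increments of $1/w_i$, or a priority rule like $(s_i+1)/w_i$. If the rules do not match, I will re-prove the guarantee directly with the standard argument. Fix $i,j$. Map each pick of $j$ to a pick of $i$ that occurs later in the reversed order, i.e., earlier in $\sigma$. Agent $i$ preferred its own pick at that point, so each matched object of $i$ has value at least that of the corresponding object of $j$. The weighted counts of the sequence ensure that $i$ has at most $\lfloor w_i/w_j\cdot(\text{number of $j$-picks})\rfloor+1$ picks. The one extra pick is removed, and the weighted averaging inequality then gives the bound after normalizing by $w_i$ and $w_j$.
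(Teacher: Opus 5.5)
Your main route matches the paper's proof, which is the single remark that the lemma follows from Theorem~3.1 of~\cite{journals/ai/WuZZ25} by negating the valuations. That is exactly your Steps~1--3: each $Q\in\mathcal Q$ becomes a chore with cost $-v_i(Q)>0$, and the two alternatives translate back by multiplying by $-1$. On normalization, incrementing $s_i$ by $1/w_i$ from $0$ is just the count rule $t_i/w_i$, so there is nothing to reconcile there. A priority like $(t_i+1)/w_i$ would not be a harmless variant but would make the statement false. With $w=(0.9,0.1)$ and eight identical chores, every position of $\sigma$ would go to agent~1, who would then take all eight chores and envy agent~2's empty bundle even after removing one.

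Your fallback direct argument has its temporal direction reversed, and as written it would not go through. When agent $i$ picks greedily in the reversed loop, her choice beats only the objects still available. Those are the objects picked \emph{later} in the reversed loop, i.e.\ \emph{earlier} in $\sigma$. An object that $j$ took before $i$'s pick in the reversed loop gives no comparison under $v_i$. So the matching must go the other way: each of $i$'s objects, except one, is dominated by objects of $j$ that come earlier in $\sigma$.

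Concretely, let $p_1<\cdots<p_k$ be $i$'s positions in $\sigma$ and let $\ell=|Y_j|$. Remove $i$'s object from position $p_1$, which is the last one she takes. Compare her object at $p_r$, for $r\ge 2$, with $j$'s objects at positions before $p_r$. Since $i\in\arg\min_{h\in N'} s_h$ at $p_r$, agent $j$ already has at least $(r-1)w_j/w_i$ picks before $p_r$. Taking $r=k$ gives $(k-1)/w_i\le \ell/w_j$. These two facts are what the interval comparison needs: it is the mirror image of the appendix proof of Lemma~\ref{lem:forward-metagoods}, with costs $-v_i$ in place of values and the shift of $1/w_i$ on $i$'s side rather than $1/w_j$ on $j$'s. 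This argument uses only $i\in\arg\min s$, so it holds for any tie-breaking in $\sigma$, and the tie-breaking check you planned is unnecessary.
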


\subsection{Allocating the Reduced Instance}\label{sec:remaining}
We are now ready to complete the computation of WEF1 allocations.
Let $(\mathcal{G},\mathcal{D})$ be the output of Algorithm~\ref{alg:metagood-construction}.
The allocation procedure consists of two steps (see Algorithm~\ref{alg:reduced-allocation} for details).

\paragraph{Step 1: Joint Allocation of Meta-Goods and Chores.}
We first jointly allocate the meta-goods and residual chores to all agents in $N$ using the reversed picking sequence.
Specifically, the algorithm constructs the picking sequence and subsequently processes it in reverse order.
At round $t$, agent $\sigma(t)$ receives her most preferred remaining residual chore, together with all currently unassigned meta-goods for which she has non-negative valuation.
\paragraph{Step 2: Allocation of Remaining Meta-Goods.}
Following the joint allocation phase, any remaining meta-goods in $\mathcal{R}$ are distributed via Algorithm~\ref{alg:forward-picking} among agents who are absent from the sequence.
For simplicity, $\mathrm{FWPS}(N',\mathcal{G})$ denotes the execution of Algorithm~\ref{alg:forward-picking} on an agent subset $N' \subseteq N$ and a family of meta-goods $\mathcal{G}$.
Finally, the resulting allocation of original items can be obtained by unbundling the allocated meta-goods.
\medskip

\begin{algorithm}[!htb]
\caption{Allocating the Reduced Instance}\label{alg:reduced-allocation}
\KwIn{Agents $N$, entitlements $(w_i)_{i\in N}$, $(\mathcal G, \mathcal D)$ output by Algorithm~\ref{alg:metagood-construction}.}
Initialize $X_i\gets\varnothing$ and $s_i\gets0$ for every $i\in N$, and set $\mathcal D^0\gets\mathcal D$\;
\tcp{Step 1: Joint Allocation of Meta-Goods and Chores}
\For{$t=1,2,\ldots,|\mathcal D^0|$}{
    Choose $\sigma(t)\in\argmin_{i\in N} \{s_i\}$\;
    Update $s_{\sigma(t)}\gets s_{\sigma(t)}+1/w_{\sigma(t)}$\;
}
\For{$t=|\mathcal D^0|,|\mathcal D^0|-1,\ldots,1$}{
    Let $i\gets\sigma(t)$\;
    Choose $e_t\in\argmax_{e\in\mathcal D} \{v_i(e)\}$\;
    Set
    $\mathcal G_t\gets\{G\in\mathcal G:v_i(G)\ge0\}$\;
    Set
    $Q_t\gets\{e_t\}\cup\bigcup_{G\in\mathcal G_t}G$ and
    $X_i\gets X_i\cup Q_t$\;
    Update
    $\mathcal D\gets\mathcal D\setminus\{e_t\}$ and
    $\mathcal G\gets\mathcal G\setminus\mathcal G_t$\;
}

\tcp{Step 2: Allocation of Remaining Meta-Goods}
Let $\mathcal{R} \gets \mathcal{G}$;\\
Set $N'\gets\{\sigma(t):t=1,\ldots,|\mathcal D^0|\}$\;
\If{$\mathcal R\neq\varnothing$}{
    $\bY\gets\mathrm{FWPS}(N\setminus N',\mathcal G)$\;
    $X_i\gets X_i\cup Y_i$ for every $i\in N\setminus N'$\;
}
Update $\bX$ by unbundling all meta-goods to the original items\;
\KwOut{An WEF1 allocation $\bX$.}
\end{algorithm}

Next, we establish the fairness guarantees provided by Algorithm~\ref{alg:reduced-allocation}.
To facilitate the analysis, we first introduce the following notation.

Let $\mathcal{G}^0$ and $\mathcal{D}^0$ denote the sets of input meta-goods and residual chores, respectively.
For each $t \in \{1, \ldots, |\mathcal{D}^0|\}$, let $e_t$, $\mathcal{G}_t$, and $Q_t$ denote the objects constructed during iteration $t$ of the reverse loop.
We let $\mathcal{Q} = \{Q_t : t = 1, \ldots, |\mathcal{D}^0|\}$ denote the family of bundled objects created throughout this reverse loop.
Finally, at the end of Step 1 (i.e., after the reverse loop), let $\mathcal{R}$ denote the set of remaining meta-goods, and for each agent $i \in N$, let $X'_i$ denote the bundle assigned to agent $i$.

\begin{lemma}\label{lem:reverse-loop-structure}
At the end of Step 1, every pair of agents $i, j \in N$ satisfies the $\WEFone$ condition in the original instance.
\end{lemma}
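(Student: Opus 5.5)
The plan is to reduce the statement to the chore guarantee of Lemma~\ref{lem:reverse-picking}, applied to the bundles $Q_t$ viewed as indivisible objects, and then to lift the resulting one-object comparison to a one-item comparison by removing the residual chore $e_t$. Two observations drive this. First, every object created in Step 1 is strictly negative for every agent. Second, each pick in the reverse loop dominates, for the picking agent, every object formed later in that loop.

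\emph{Step 1 (all objects are negative).} Each $Q_t$ has the form $\{e_t\}\cup\bigcup_{G\in\mathcal G_t}G$ with $e_t\in\mathcal D^0$ and $\mathcal G_t\subseteq\mathcal G^0$. By Lemma~\ref{lem:construction}(ii), $v_k(Q_t)=v_k(e_t)+v_k(\mathcal G_t)<0$ for every agent $k\in N$, not just for $\sigma(t)$. Hence $\mathcal Q$ consists of objects that are negative to all agents, as Algorithm~\ref{alg:reverse-picking} requires.

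\emph{Step 2 (domination property).} Fix $t$ with $\sigma(t)=i$ and any $s<t$, so that $Q_s$ is formed after $Q_t$ in the reverse loop.
\begin{itemize}
\item At time $t$ both $e_s$ and every $G\in\mathcal G_s$ were still available.
\item Agent $i$ chose $e_t$ maximizing $v_i$ over the remaining chores, so $v_i(e_t)\ge v_i(e_s)$.
\item Agent $i$ absorbed every remaining meta-good she values nonnegatively. Since each $G\in\mathcal G_s$ was left behind, $v_i(G)<0$.
\end{itemize}
Therefore
\[
v_i(Q_t)=v_i(e_t)+v_i(\mathcal G_t)\ge v_i(e_t)\ge v_i(e_s)\ge v_i(e_s)+v_i(\mathcal G_s)=v_i(Q_s).
\]
This says the reverse loop behaves exactly like RWPS run on the fixed family $\mathcal Q$, even though the objects are built adaptively. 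Whenever $\sigma(t)$ picks, the object she takes is weakly her favourite among all objects that will still be formed afterwards.

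\emph{Step 3 (transfer the counting argument).} This is the main obstacle, because Lemma~\ref{lem:reverse-picking} is stated for a fixed family $\mathcal Q$ chosen greedily, not for adaptively built objects. I would reopen the proof of Theorem~3.1 of~\cite{journals/ai/WuZZ25} and check that it uses only two facts: each object is negative to the evaluating agent, and that agent's pick at time $t$ is weakly better for her than every object picked at times $s<t$. Their proof pairs $i$'s picks with $j$'s picks via the forward counts $s_i,s_j$ and the min-$s$ rule defining $\sigma$. Both facts are supplied by Steps 1 and 2, and the sequence $\sigma$ is built exactly as in Algorithm~\ref{alg:reverse-picking}. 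The argument should therefore go through verbatim. If there is a subtlety, it would be objects of $j$ picked before $i$'s picks in the reverse loop, which may contain items that $i$ likes. However, such objects are still negative to $i$ by Step 1, and negativity is all the counting argument uses for them. The conclusion is that for all $i,j$, either $v_i(X'_i)/w_i\ge v_i(X'_j)/w_j$, or some $Q_t\in X'_i$ satisfies $v_i(X'_i\setminus Q_t)/w_i\ge v_i(X'_j)/w_j$.

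\emph{Step 4 (unpacking).} In the second case, remove the single item $e_t$ instead of the whole bundle $Q_t$. Agent $i=\sigma(t)$ values every meta-good in $\mathcal G_t$ nonnegatively, so
\[
v_i(X'_i\setminus\{e_t\})=v_i(X'_i\setminus Q_t)+v_i(\mathcal G_t)\ge v_i(X'_i\setminus Q_t).
\]
Moreover $e_t\in X'_i$ is an objective chore, so $v_i(e_t)<0$. Condition (iii) of Definition~\ref{def:wef1} therefore holds. In the first case, condition (i) holds directly. This establishes WEF1 for every pair at the end of Step 1.
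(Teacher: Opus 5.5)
Your proposal is correct and follows the paper's proof essentially step for step: you get negativity of each $Q_t$ from Lemma~\ref{lem:construction}(ii), use the domination chain $v_i(Q_t)\ge v_i(e_t)\ge v_i(e_u)\ge v_i(Q_u)$ for $u<t$, invoke the RWPS guarantee, and unpack by deleting $e_t$ instead of $Q_t$. The one difference is that the obstacle you raise in Step 3 does not arise: your Step 2 already shows the reverse loop is a valid run of Algorithm~\ref{alg:reverse-picking} on the family $\mathcal Q$ (fixed after the fact, with ties broken toward $Q_t$), so the paper applies Lemma~\ref{lem:reverse-picking} directly rather than reopening the proof in~\cite{journals/ai/WuZZ25}.
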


\begin{proof}
For every $t\in\{1,\ldots,|\mathcal D^0|\}$, we have $e_t\in\mathcal D^0$ and $\mathcal G_t\subseteq\mathcal G^0$.
By Property~(ii) of Lemma~\ref{lem:construction}, we have $v_i(Q_t)<0$ for every $i\in N$.
Thus, $Q_t$ is strictly negative for every agent and is regarded as one indivisible object by RWPS.

Now suppose $1\le u<t\le|\mathcal D^0|$ and let $i=\sigma(t)$.
When iteration $t$ is executed, chore $e_u$ remains unselected because iteration $u$ is processed later in the reverse order.
By the greedy choice in Algorithm~\ref{alg:reduced-allocation}, it follows that $v_i(e_t)\ge v_i(e_u)$.
Note that by the design of the algorithm, every meta-good in $\mathcal{G}_t$ yields a non-negative value to agent $i$.
Furthermore, every meta-good in $\mathcal G_u$ is available during iteration $t$ but excluded from $\mathcal G_t$, which implies that all meta-goods in $\mathcal G_u$ yield strictly negative values to agent $i$.
Hence, we have
\begin{align*}
  v_i(Q_t)
  &=v_i(e_t)+\sum_{G\in\mathcal G_t}v_i(G) \ge v_i(e_t)
   \ge v_i(e_u)\\
  &\ge v_i(e_u)+\sum_{G\in\mathcal G_u}v_i(G)
   =v_i(Q_u),
\end{align*}
which implies that when iteration $t$ is processed, object $Q_t$ is the most preferred among all available objects for agent $i = \sigma(t)$.
Hence the reverse loop is precisely a valid execution of \cref{alg:reverse-picking} on $\mathcal Q$ with the order $\sigma(1),\ldots,\sigma(|\mathcal D^0|)$.

Next, we prove that $\bX'$ satisfies WEF1. 
This holds trivially if $\mathcal{D}^0=\varnothing$.
For $\mathcal{D}^0 \neq \varnothing$, applying Lemma~\ref{lem:reverse-picking} establishes that for any two agents $i,j\in N'$, either $v_i(X'_i)/w_i \ge v_i(X'_j)/w_j$ holds, or there exists some bundle $Q_t \subseteq X'_i$ received in iteration $t$ that satisfies
\begin{align*}
  \frac{v_i(X'_i \setminus Q_t)}{w_i} \ge \frac{v_i(X'_j)}{w_j}.
\end{align*}
Since $v_i(G) \ge 0$ for every $G \in \mathcal{G}_t$, and $e_t$ is the original chore contained in $Q_t$, we have
\begin{align*}
  \frac{v_i(X'_i \setminus \{e_t\})}{w_i}
  &= \frac{v_i(X'_i \setminus Q_t) + \sum_{G \in \mathcal{G}_t} v_i(G)}{w_i} 
  \ge \frac{v_i(X'_i \setminus Q_t)}{w_i} 
  \ge \frac{v_i(X_j')}{w_j}.
\end{align*}
Consequently, at the end of Step 1, the partial allocation $\mathbf{X}'$ satisfies the $\mathrm{WEF1}$ condition in the original instance for all agents in $N$.
\end{proof}

It remains to show that Step 2 preserves the $\mathrm{WEF1}$ condition.
\begin{lemma}\label{lem:leftover-metagoods}
The allocation $\bX$ returned by~\cref{alg:reduced-allocation} satisfies $\WEFone$.
\end{lemma}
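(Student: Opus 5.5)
The plan is a case analysis on the membership of $i$ and $j$ in $N'$, the set of agents who appear in the picking sequence $\sigma$. Two structural facts from Step 1 drive everything.

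\emph{Fact (a).} Every meta-good left in $\mathcal R$ after Step 1 is strictly negative for every agent in $N'$. Indeed, whenever $i=\sigma(t)$ is processed, she absorbs every remaining meta-good $G$ with $v_i(G)\ge 0$. Any meta-good still available afterwards therefore satisfies $v_i(G)<0$.

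\emph{Fact (b).} Agents outside $N'$ receive nothing in Step 1, so $X'_k=\varnothing$ for $k\notin N'$. Moreover, $\mathcal R$ consists of meta-goods over $N$, hence also over $N\setminus N'$. This makes the call to FWPS legitimate.

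\textbf{Case $i,j\in N'$.} Both bundles are unchanged in Step 2, so \cref{lem:reverse-loop-structure} applies directly.

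\textbf{Case $i,j\notin N'$.} Here $X_i=Y_i$ and $X_j=Y_j$. \cref{lem:forward-metagoods} gives WEF1 at the meta level, with a witness meta-good $G\in Y_j$ satisfying $v_i(G)>0$. Condition (ii) of \cref{def:metagood} (invoking $i\in N\setminus N'$) then supplies an original item $e\in G$ with $v_i(e)\ge 0$ and $v_i(G\setminus\{e\})\le 0$, so $v_i(X_j\setminus\{e\})\le v_i(X_j\setminus G)$. I need $v_i(e)>0$ strictly. If $v_i(e)=0$, then $v_i(X_j\setminus\{e\})=v_i(X_j)$ and $v_i(G\setminus\{e\})=v_i(G)>0$, contradicting $v_i(G\setminus\{e\})\le 0$. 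The goods-type witness used in \cref{lem:forward-metagoods} should have $v_i(G)>0$; if it only gives WEF directly, that case is trivial anyway.

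\textbf{Case $i\in N'$, $j\notin N'$.} Agent $j$'s bundle is $Y_j\subseteq\bigcup\mathcal R$, so by Fact (a), $v_i(X_j)\le 0$. For agent $i$'s own bundle, the claim is $v_i(X_i)/w_i\ge$ something nonnegative, up to removing one chore. \cref{lem:reverse-loop-structure} compares $i$ only with agents whose Step-1 bundle is $\varnothing$, i.e.\ with $X'_j=\varnothing$. It yields either $v_i(X'_i)\ge 0$, or some $e_t$ with $v_i(X'_i\setminus\{e_t\})\ge 0$. Since $X_i=X'_i$ and $v_i(X_j)\le 0$, WEF1 follows. The one detail to check is that \cref{lem:reverse-picking} and its lift in \cref{lem:reverse-loop-structure} also cover agents $j$ with empty bundles. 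RWPS is run over all of $N$, with agents absent from $\sigma$ simply receiving nothing, so the comparison with $\varnothing$ is included.

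\textbf{Case $i\notin N'$, $j\in N'$.} This is the main obstacle. \cref{lem:forward-metagoods} gives $v_i(X_i)\ge 0$, so it suffices to show $v_i(X_j)\le 0$ up to removing one good. Agent $j$'s bundle is a union of objects $Q_t$, each with $v_i(Q_t)<0$ by Property (ii) of \cref{lem:construction}. Hence $v_i(X_j)<0\le v_i(X_i)$ and WEF holds outright. The subtle point is that $N'$ may fail to contain every agent, since $\sigma$ has length $|\mathcal D^0|$. The argument above handles this; I would also double-check the degenerate case $\mathcal D^0=\varnothing$, where $N'=\varnothing$ and the result reduces to \cref{lem:forward-metagoods} over $N$.
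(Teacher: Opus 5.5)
Your proof is correct and follows essentially the paper's argument: the same four-way case split on membership in $N'$, driven by the same fact that every meta-good surviving Step~1 is strictly negative to all agents of $N'$. The only real deviation is in the case $i\in N'$, $j\notin N'$, where you apply the Step~1 WEF1 comparison against the empty bundle $X'_j$, whereas the paper observes that $|\mathcal D^0|<n$ forces $X_i=Q_t$ for a single $t$ and computes directly. One small fix: $\mathcal R$ being a family of meta-goods over $N\setminus N'$ (and $N\setminus N'\neq\varnothing$ when $\mathcal R\neq\varnothing$) follows from your Fact (a) together with condition (i) over $N$, not from being meta-goods over $N$ alone.
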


\begin{proof}
We begin by considering the special case where $\mathcal{R} = \varnothing$. In this case, all meta-goods are completely allocated during Step 1, meaning Step 2 makes no further assignment. The claim then follows directly from \cref{lem:reverse-loop-structure}.

Therefore, it suffices to assume that $\mathcal{R} \neq \varnothing$.
We first claim that $N \setminus N' \neq \emptyset$. Otherwise, suppose that $N = N'$, this implies that every agent is involved in Step 1. 
By the definition of meta-goods, each meta-good has at least one agent with a non-negative valuation for it. 
Consequently, by the design of our algorithm, all meta-goods would be allocated in Step 1, contradicting the assumption that $\mathcal{R} \neq \varnothing$.
Thus, we must have $N \setminus N' \neq \varnothing$.

Next, we show that all remaining meta-goods in $\mathcal{R}$ are strictly negative for all agents in $N'$.
Consider any agent $i \in N'$. Since every $G \in \mathcal{R}$ has survived the entire reverse loop, it was available at the iteration $t$ where agent $i$ made their choice. 
If $v_i(G) \ge 0$, then $G$ would have been included in $\mathcal{G}_t$ by definition, contradicting $G \in \mathcal{R}$. 
Hence, we have
\begin{equation}\label{eq:leftover-negative-to-Nprime}
  v_i(G) < 0 \qquad \text{for all } i \in N' \text{ and } G \in \mathcal{R}.
\end{equation}

On the other hand, since every $G \in \mathcal{R}$ is a meta-good over $N$, it must be non-negatively valued by at least one agent in $N$. By \eqref{eq:leftover-negative-to-Nprime}, any such agent must belong to $N \setminus N'$. Hence, every $G \in \mathcal{R}$ is in fact a meta-good over $N \setminus N'$.
Since $N \setminus N' \neq \varnothing$, it naturally follows that $|\mathcal{D}^0| < n$, which ensures that the $\sigma(1), \ldots, \sigma(|\mathcal{D}^0|)$ are pairwise distinct i.e., each agent in $N'$ is assigned a unique receiving iteration. 
Furthermore, we have $X_i' = \varnothing$ for all $i \in N \setminus N'$.

Now we are ready to analyze the fairness guarantee of the allocation returned by Algorithm~\ref{alg:reduced-allocation}.
Fix any pair of agents $i, j \in N$. We present a case analysis based on their group memberships:

\begin{itemize}
    \item \textbf{Case 1: $i, j \in N'$}. 
    Neither agent receives any item during Step~2. Consequently, agent $i$ preserves $\mathrm{WEF1}$ towards $j$ directly by Lemma~\ref{lem:reverse-loop-structure}.

    \item \textbf{Case 2: $i, j \in N \setminus N'$}. 
    Since $X_i' = X_j' = \varnothing$, both agents receive items exclusively through the execution of $\mathrm{FWPS}(N \setminus N', \mathcal{R})$. Thus, the claimed fairness guarantee follows immediately from \cref{lem:forward-metagoods}.
    \item \textbf{Case 3: $i$ and $j$ belong to different groups}. 
    First, consider the subcase where $i \in N'$ and $j \in N \setminus N'$. Let $t$ be the unique iteration such that $\sigma(t) = i$. Then $X_i = X'_i = Q_t$ and $X_j = Y_j$. By \eqref{eq:leftover-negative-to-Nprime}, every meta-good in $Y_j$ yields strictly negative utility for agent $i$, which implies that $v_i(X_j) \le 0$. By removing the unique original chore $e_t$ from $X_i$, we have
    \begin{equation*}
      v_i(X_i \setminus \{e_t\}) = \sum_{G \in \mathcal{G}_t} v_i(G) \ge 0 \ge v_i(X_j),
    \end{equation*}
    which implies $v_i(X_i \setminus \{e_t\}) / w_i \ge v_i(X_j) / w_j$, and thus agent $i$ satisfies $\mathrm{WEF1}$ towards $j$.
    
    Conversely, consider the subcase where $i \in N \setminus N'$ and $j \in N'$. By \cref{lem:forward-metagoods}, we have $v_i(X_i) = v_i(Y_i) \ge 0$. Furthermore, by \cref{lem:construction}, we have $v_i(X_j) = v_i(Q_t) < 0$. It follows that $v_i(X_i) / w_i \ge 0 > v_i(X_j) / w_j$, so agent $i$ satisfies $\mathrm{WEF1}$ (in fact, $\mathrm{WEF}$) towards $j$.
\end{itemize}
Thus, agent $i$ satisfies $\mathrm{WEF1}$ towards $j$ in all cases, which completes the proof.
\end{proof}

\begin{lemma}\label{lem:reduced-allocation}
\Cref{alg:reduced-allocation} runs in $O(nm^2)$.
\end{lemma}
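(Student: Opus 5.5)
The bound follows from a step-by-step cost accounting of \cref{alg:reduced-allocation}, using $|\mathcal G^0|\le m$ and $|\mathcal D^0|\le m$ (meta-goods are disjoint nonempty subsets of $M$, and $\mathcal D^0\subseteq M^-$). I will charge each step separately and show every step fits within $O(nm^2)$.

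\textbf{Sequence computation.} The first loop of Step~1 runs $|\mathcal D^0|\le m$ times. Each iteration finds an argmin of $s_i$ over $N$ in $O(n)$ time, so this loop costs $O(nm)$.

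\textbf{Reverse loop.} There are again at most $m$ iterations. In iteration $t$, with $i=\sigma(t)$:
\begin{itemize}
\item choosing $e_t$ scans the remaining $\mathcal D$ in $O(m)$ time;
\item forming $\mathcal G_t$ scans the remaining meta-goods, each evaluated by $v_i(G)$;
\item building $Q_t$ and deleting from $\mathcal D,\mathcal G$ is linear in the items touched.
\end{itemize}
The values $v_i(G)$ for all $i\in N$ and $G\in\mathcal G^0$ can be precomputed once in $O(nm)$ time, since the meta-goods partition a subset of $M$. With these values, each iteration costs $O(m)$ (or $O(|\mathcal G|+|\mathcal D|)$ plus unpacking). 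The whole reverse loop is therefore $O(m^2)$, well within the bound.

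\textbf{Step 2 (FWPS).} This is the step needing care. Algorithm~\ref{alg:forward-picking} runs on at most $n$ agents and at most $m$ meta-goods. Computing $\mathcal G^0$ takes $O(nm)$ time using the precomputed values. The while loop makes at most $m$ picks. For each pick:
\begin{itemize}
\item compute $N^0$ naively in $O(nm)$ time;
\item take the argmin of $s_j$ in $O(n)$ time;
\item take the argmax over $\mathcal R$ in $O(m)$ time.
\end{itemize}
Naively this is $O(nm)$ per pick and $O(nm^2)$ overall, which matches the target, so no data structure is needed. Alternatively, keep for each agent a count of remaining meta-goods she values positively; updating all $n$ counts after each removal costs $O(n)$. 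Finally, the loop over $\mathcal G^0$ costs $O(nm)$.

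\textbf{Unbundling.} Merging and unbundling touch each original item a constant number of times, costing $O(m)$.

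Summing, the total is $O(nm)+O(m^2)+O(nm^2)+O(m)=O(nm^2)$. The main obstacle is the bookkeeping inside FWPS, namely recomputing $N^0$ per pick, and the naive bound already suffices. It is also worth stating explicitly that every count of meta-goods and chores is at most $m$ by disjointness from \cref{lem:construction}'s construction.
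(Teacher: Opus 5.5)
Your accounting is correct and follows the paper's proof essentially line by line: $O(nm)$ to build $\sigma$, $O(m)$ per reverse-loop iteration for $O(m^2)$ in total, and $O(nm)$ per FWPS pick (dominated by recomputing $N^0$) for $O(nm^2)$ overall. Your explicit $O(nm)$ precomputation of all $v_i(G)$, justified by the disjointness of the meta-goods, fills in a detail the paper leaves implicit when it charges only $O(m)$ per iteration.
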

\begin{proof}
We analyze the time complexity phase by phase. 
In Step~1, the algorithm first determines the ordering $\sigma(1), \dots, \sigma(|\mathcal{D}^0|)$ by repeatedly selecting an agent $i \in N$ with the minimum $s_i$. 
Since $|\mathcal{D}^0| \le m$, scanning all $n$ agents at each step requires $O(nm)$ time in total. 
Next, the algorithm processes $\sigma(1), \dots, \sigma(|\mathcal{D}^0|)$ in reverse order for at most $m$ iterations. 
Within each iteration, $e_t$ and $\mathcal{G}_t$ are identified by scanning the remaining residual chores and meta-goods, respectively. 
Consequently, each iteration takes $O(m)$ time, yielding a total runtime of $O(m^2)$ for this step.
In Step~2, executing Algorithm~\ref{alg:forward-picking} involves up to $m$ iterations over at most $m$ meta-goods. In each iteration, identifying the agents in $N^0$ takes $O(nm)$ time by checking all pairs of agents and meta-goods.
Subsequently, selecting the agent $i$ with the minimum $w_i$ requires $O(n)$ time, while assigning her favorite remaining meta-good takes $O(m)$ time. 
Summing over all iterations yields a complexity of $O(nm^2)$ for this step.
Hence, \cref{alg:reduced-allocation} runs in $O(nm^2)$ time in total.
\end{proof}

Combining all these results together, we obtain the following theorem.
\begin{theorem}\label{thm:main}
Given any mixed manna instance, there exists an algorithm that computes a $\WEFone$ allocation in $O(nm^2\log m)$ time.
\end{theorem}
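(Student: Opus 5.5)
The algorithm is the composition of \cref{alg:metagood-construction} followed by \cref{alg:reduced-allocation}, so \cref{thm:main} follows by chaining the lemmas already established. Correctness and running time are handled separately.

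\emph{Correctness.} Run \cref{alg:metagood-construction} on $\cI$ to obtain $(\mathcal G,\mathcal D)$.
\begin{enumerate}[label=(\arabic*)]
\item By \cref{lem:construction}(i), every element of $\mathcal G$ is a meta-good over $N$.
\item The family $\mathcal G$ together with $\mathcal D$ partitions $M$. This holds because the algorithm only merges existing meta-goods with one chore removed from $\mathcal D$, and it starts from the partition $M^+\cup M^-$.
\item By \cref{lem:construction}(ii), the objects $Q_t$ formed in Step~1 are strictly negative for every agent. This is exactly what \cref{lem:reverse-loop-structure} requires.
\item By (1) and (2), the input to \cref{alg:reduced-allocation} is legitimate. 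Then \cref{lem:leftover-metagoods} shows that the unbundled output $\bX$ is $\WEFone$ in the original instance.
\end{enumerate}

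The output must also be a complete allocation. Every residual chore is assigned in the reverse loop. Every meta-good is assigned either in Step~1 or by FWPS in Step~2. In Step~2 the leftover family $\mathcal R$ is a family of meta-goods over $N\setminus N'$, and $N\setminus N'\neq\varnothing$, as argued in the proof of \cref{lem:leftover-metagoods}. Hence \cref{lem:forward-metagoods} applies, and FWPS allocates all of $\mathcal R$.

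\emph{Running time.} \cref{lem:construction-runtime} gives $O(nm^2\log m)$ for the construction. \cref{lem:reduced-allocation} gives $O(nm^2)$ for the allocation. Unbundling costs $O(m)$. The total is therefore $O(nm^2\log m)$.

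The step needing the most care is the hand-off between the two phases. One must check that the hypotheses of \cref{lem:reverse-loop-structure,lem:leftover-metagoods} are exactly the termination guarantees of \cref{lem:construction}. One must also check that the FWPS input in Step~2 consists of meta-goods over the restricted agent set, not merely over $N$. Both facts were already established inside those proofs, so the theorem reduces to citing them in order.
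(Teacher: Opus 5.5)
Your proposal is correct and follows the paper's proof. Run Algorithm~\ref{alg:metagood-construction}, feed $(\mathcal G,\mathcal D)$ into Algorithm~\ref{alg:reduced-allocation}, invoke \cref{lem:leftover-metagoods} for $\WEFone$, and add the bounds of \cref{lem:construction-runtime,lem:reduced-allocation} for the running time; your extra checks (that $\mathcal G$ and $\mathcal D$ partition $M$, and that every item ends up allocated) are left implicit in the paper but are accurate and harmless.
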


\begin{proof}
Given the mixed manna instance $\mathcal{I}$, we first apply \cref{alg:metagood-construction} to construct the meta-goods and residual chores $(\mathcal G,\mathcal D)$.
The resulting pair $(\mathcal G,\mathcal D)$ serves as a valid input to \cref{alg:reduced-allocation}, which outputs an allocation that is $\WEFone$ by \cref{lem:leftover-metagoods}.
Regarding computational complexity, \cref{lem:construction-runtime,lem:reduced-allocation} establish that the two steps take $O(nm^2\log m)$ time in total. 
Thus, the algorithm framework computes a $\WEFone$ allocation within the claimed running time.
This completes the proof.
\end{proof}

\section{The Existence of WEF1T and fPO Allocations}\label{sec:fpo}
We next study the compatibility of weighted fairness and fPO in mixed manna.
We first show that relaxing $\WEFone$ to $\WEFoneT$ is necessary when fPO is also required.
To state the impossibility result, we use the following fractional-transfer relaxation of WEF, which was introduced for goods by Chakraborty et al.~\cite{journals/teco/ChakrabortySS24}.

\begin{definition}[WEF$(x,y)$]
For $x,y\in[0,1]$, an allocation $\bX$ satisfies $\mathrm{WEF}(x,y)$ if, for any two agents $i,j\in N$, at least one of the following holds:
\begin{enumerate}[label=(\roman*),ref=(\roman*)]
    \item $v_i(X_i)/w_i\ge v_i(X_j)/w_j$;
    \item there exists $e\in X_j$ with $v_i(e)>0$ such that
$(v_i(X_i)+y \cdot v_i(e))/w_i\ge (v_i(X_j)-x \cdot v_i(e))/w_j$;
    \item there exists $e\in X_i$ with $v_i(e)<0$ such that
$(v_i(X_i)-x \cdot v_i(e))/w_i\ge (v_i(X_j)+y \cdot v_i(e))/w_j$.
\end{enumerate}
\end{definition}

Observe that by definition, setting $x=y=1$ yields WEF(1,1), which coincides precisely with $\WEFoneT$. 
On the other hand, setting $x=1$ and $y=0$ recovers the classical $\WEFone$ concept.
The following lemma shows that relaxing to one transfer is unavoidable when fPO is also required.

\begin{lemma}\label{thm:wef1-fpo-impossibility}
For any $x,y\in[0,1]$ with $x+y<2$, an allocation that satisfies both $\mathrm{WEF}(x,y)$ and fPO may not exist.
\end{lemma}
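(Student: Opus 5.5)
The plan is to give an explicit counterexample, in the spirit of the unweighted instance of Mackenzie and Suzuki~\cite{journals/corr/abs-2607-17811}. It will have two agents, a handful of items mixing one objective good and objective chores, and entitlements $w_1,w_2$ chosen as functions of the slack $\delta := 2-x-y>0$. The key tool is the standard characterization of fPO for additive valuations: an integral allocation $\bX$ is fPO if and only if there are multipliers $\lambda_1,\lambda_2>0$ such that every item $e$ goes to an agent in $\argmax_{i}\lambda_i v_i(e)$. Equivalently, there is no improving cycle of exchange ratios. With two agents this means that the items given to agent~$1$ must have weakly larger ratios $v_1(e)/v_2(e)$ (appropriately signed for chores) than those given to agent~$2$. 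The plan is to use this to cut the fPO allocations down to a short explicit list, and then check each one against $\mathrm{WEF}(x,y)$.

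\textbf{Step 1 (instance).} Take a good $g$ with $v_1(g)=v_2(g)=1$, and a chore $c$ with $v_1(c)=-1$ and $v_2(c)=-\varepsilon$ for a small $\varepsilon>0$. If needed, add several identical copies of $c$ to amplify the imbalance.

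\textbf{Step 2 (fPO allocations).} Check the possible allocations using the ratio ordering. Giving $g$ to agent~$2$ and $c$ to agent~$1$ requires both $\lambda_2\ge\lambda_1$ and $\lambda_1\le \varepsilon\lambda_2$ while also keeping the good at agent~$2$; this is a strict improving swap, so that allocation is not fPO. The surviving fPO allocations should therefore be exactly: everything to agent~$1$, everything to agent~$2$, or $g$ to agent~$1$ and the chores to agent~$2$.

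\textbf{Step 3 (each fPO allocation violates $\mathrm{WEF}(x,y)$).} For each surviving allocation, write out the envious agent's three conditions as inequalities in $w_1,w_2,\varepsilon,x,y$.
\begin{itemize}
\item For $g$ to agent~$1$ and the chores to agent~$2$: agent~$2$ is envious, and condition~(ii) reads $(-k\varepsilon+y)/w_2\ge (1-x)/w_1$, where $k$ is the number of chore copies.
\item For the bundles where one agent holds everything: the other agent's envy should be checked in the same way, as should the chore-removal condition~(iii) for whichever agent holds the chores.
\end{itemize}
Then choose $w_1$, $w_2$ and $\varepsilon$ (and $k$) depending on $\delta$ so that all of these fail simultaneously. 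The intended mechanism is that $x+y<2$ leaves a gap of $\delta$ between "one full transfer" and what the relaxation allows, and the weight ratio magnifies this gap so that it cannot be closed.

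\textbf{Main obstacle.} The main obstacle is Step~3. The two extreme regimes pull in opposite directions:
\begin{itemize}
\item when $x<1$, a small $w_1$ kills condition~(ii);
\item when $x=1$ (so $y<1$), one needs $k\varepsilon>y$, yet the chores must stay cheap enough to agent~$2$ that the fPO structure of Step~2 is unchanged.
\end{itemize}
A single two-item instance may not handle both regimes at once. If it does not, I would first try to enlarge the instance: add more chore copies, or a third agent acting as a sink, so that the relevant inequalities decouple. Failing that, I would split into two instances, one for the case $x<1$ and one for the case $y<1$, since the statement is existential for each fixed pair $(x,y)$.
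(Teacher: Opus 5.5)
\textbf{The proposal has a genuine gap.} Step~3 is left open, and the instance of Step~1 cannot be completed into a counterexample for any choice of $w_1,w_2,\varepsilon,k$.

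The trouble starts in Step~2, whose list of fPO allocations is wrong. With $\varepsilon<1$:
\begin{enumerate}[label=(\alph*)]
\item ``Everything to agent~1'' would need both $\lambda_1\ge\lambda_2$ and $\lambda_1\le\varepsilon\lambda_2$, so it is \emph{not} fPO.
\item ``$g$ to agent~2, all chores to agent~1'' \emph{is} fPO; any $\lambda_1<\varepsilon\lambda_2$ certifies it, and there is no improving trade because agent~2 already holds the only good.
\item Most importantly, at the tie $\lambda_1=\varepsilon\lambda_2$ the good goes to agent~2 and the $k$ chore copies may be split arbitrarily. So for every $m\in\{0,\ldots,k\}$, the allocation $C_m$ giving agent~1 exactly $m$ chores and agent~2 the good plus $k-m$ chores is fPO.
\end{enumerate}
You did not consider these tie-splits, and they defeat the construction.

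Work out condition~(iii) of $\mathrm{WEF}(x,y)$ in $C_m$, using $w_1+w_2=1$. For $m\ge1$, agent~1 satisfies it if and only if $m\le xw_2+(k-1+y)w_1$; for $m=0$ she does not envy at all, since $k\ge1$. For $m\le k-1$, agent~2 satisfies it if and only if $m\ge kw_1-w_1/\varepsilon-xw_1-yw_2$. The window between these bounds has length
\[
x+y+w_1(1/\varepsilon-1)\ \ge\ x+y .
\]
For $k\ge2$ its upper end lies in $[0,k)$, and for $k=1$ its lower end is negative. So whenever $x+y\ge1$, some $m\in\{0,\ldots,k-1\}$ lies in the window, and $C_m$ is both $\mathrm{WEF}(x,y)$ and fPO.

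Consequently, your instance can never refute $\WEFone=\mathrm{WEF}(1,0)$, nor any pair near the boundary $x+y=2$, which is exactly what the lemma asserts. Your proposed remedies do not help:
\begin{enumerate}[label=(\alph*)]
\item Adding chore copies does nothing, since the window length is independent of $k$.
\item Shrinking $w_1$ only attacks the allocation ``$g$ to~1, chores to~2''.
\item Splitting into the cases $x<1$ and $y<1$ fails too, since both contain pairs with $x+y\ge1$.
\end{enumerate}
The tie-splits let envy be adjusted in steps of one item, which a single transfer can absorb.

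The missing idea is an instance whose fPO allocations skip every balanced configuration. In each fPO allocation, some envious pair $(i,j)$ must have gap $\Delta$ at least $(2-1/r)$ times the largest single eligible item $h_{ij}$, so that no transfer of total weight $x+y<2-1/r$ closes it. The paper achieves this as follows:
\begin{enumerate}[label=(\alph*)]
\item It uses three agents with equal entitlements and six items: three objective goods and three objective chores, with values on two scales, $10r^2$ and $5r$.
\item It enumerates all $43$ fPO owner words via the positive-multiplier characterization.
\item It certifies each of them with an explicit pair $(i,j)$.
\end{enumerate}
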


The proof of Lemma~\ref{thm:wef1-fpo-impossibility} is deferred to Appendix~\ref{app:wefxy-fpo-impossibility}.
Motivated by this impossibility result, we explore whether WEF1T can be achieved alongside fPO, which yields the following theorem.
\begin{theorem}\label{thm:wef1t-fpo}
Given any mixed manna instance, there exists an allocation that satisfies both $\WEFoneT$ and fPO.
\end{theorem}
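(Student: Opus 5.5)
We will follow the KKM-based approach outlined in the introduction for Result~2. The first step is a reduction. We restrict to instances where every item is either an objective good or an objective chore, following~\cite{conf/ijcai/0001X025}. Concretely, every item $e$ that is a good for some agents and a chore for others is handed to the set of agents with $v_i(e)\ge0$. In any fPO allocation such an item must go to such an agent. We plan to show that replacing $v_i(e)$ by a very negative (or zero) value for the other agents changes neither fPO nor the $\WEFoneT$ witnesses. Next we augment the instance with one private good per agent, valued $\epsilon>0$ only by its owner, and with zero-valued dummy items. These additions ensure that on the boundary of the simplex an agent with weight parameter zero receives nothing essential. This gives the boundary condition KKM needs. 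Removing the augmentation at the end should preserve $\WEFoneT$, because zero items are irrelevant and private goods only raise the owner's own value.

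For each point $\btheta$ of the simplex $\Delta_n$ we consider a weighted-welfare maximizer. It assigns each objective good to an agent maximizing $\theta_i v_i(e)$ and each objective chore to an agent maximizing $\theta_i v_i(e)$, i.e.\ minimizing $\theta_i|v_i(e)|$. Ties are broken by a strictly concave perturbation or lexicographic potential, so the maximizer is essentially unique away from a measure-zero tie set and fPO holds by the first welfare theorem. Prices are $p_e=\max_i \theta_i v_i(e)$, and each agent is MBB: $v_i(e)\le p_e/\theta_i$ with equality on her items. Define closed sets $C_i\subseteq\Delta_n$ as the points where some welfare maximizer at $\btheta$ gives agent $i$ a bundle with normalized spending $p(X_i)/w_i$ that is \emph{weakly maximal up to one transfer} against every $j$. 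For goods this means $p(X_i\cup\{e\})/w_i\ge p(X_j\setminus\{e\})/w_j$; for chores we use the symmetric statement. We then verify the KKM covering conditions: closedness comes from upper semicontinuity of the maximizer correspondence, and the face condition $\Delta_{S}\subseteq\bigcup_{i\in S}C_i$ comes from the private and dummy items. KKM then yields $\btheta^*\in\bigcap_i C_i$.

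The main obstacle is that the memberships $\btheta^*\in C_i$ may be witnessed by \emph{different} welfare maximizers $\bX^{(i)}$. To merge them, we will pick, among all welfare maximizers at $\btheta^*$ (all of them are MBB allocations on the same price vector), one that extremizes a tie-breaking potential $\Phi(\bX)=\sum_i w_i\,\phi\bigl(p(X_i)/w_i\bigr)$ with $\phi$ strictly convex. Suppose this optimum violates agent $i$'s comparison. We then exhibit an alternating path in the MBB graph between it and $\bX^{(i)}$ along which a single item transfer strictly decreases $\Phi$, a contradiction. The delicate part is handling goods and chores simultaneously, since a chore transfer moves spending in the opposite direction; we would treat chores as negative-price goods and check that the exchange argument is sign-consistent.

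Once one allocation $\bX$ satisfies every price comparison, MBB converts the price inequality into a valuation inequality for agent $i$. For a good $e\in X_j$, $v_i(X_i\cup\{e\})/w_i=p(X_i\cup\{e\})/(\theta_iw_i)\ge p(X_j\setminus\{e\})/(\theta_iw_j)\ge v_i(X_j\setminus\{e\})/w_j$, and chores are handled symmetrically. This gives $\WEFoneT$ in the augmented, perturbed instance. Finally we let the perturbation $\epsilon\to0$. Finiteness of the set of allocations gives a single allocation that is $\WEFoneT$ for a sequence $\epsilon_k\to0$. Since fPO is characterized by the existence of supporting weights, which form a closed condition once we ensure $\btheta$ stays bounded away from zero via the private goods, the limit allocation is fPO. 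We then project back through the reduction.
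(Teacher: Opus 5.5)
You have the right overall architecture (reduce, augment, KKM, merge the witnesses, take a limit), but there is a genuine gap in how the labels are defined. You put $\btheta\in C_i$ whenever \emph{some} welfare maximizer makes agent $i$ price-$\WEFoneT$ toward everyone, and then try to merge the witnesses. A point of $\bigcap_i C_i$ need not admit any single maximizer that works for all agents.

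Here is a counterexample. Take three agents with equal entitlements. Goods $u_1,u_2$ are worth $3$ to agents $1,2$ and $0$ to agent $3$; goods $c_1,\dots,c_4$ are worth $3$ to agent $3$ and $0$ to agents $1,2$; add your private and dummy items. At $\btheta=(1/3,1/3,1/3)$ all six goods have price $1$, the $c$'s must go to agent $3$, and the $u$'s are split $k$ versus $2-k$ between agents $1$ and $2$. Agent $1$'s price comparison with agent $3$ holds iff $k+1\ge 3$, agent $2$'s holds iff $(2-k)+1\ge 3$, and agent $3$ is always fine. So $\btheta\in C_1\cap C_2\cap C_3$. Yet every maximizer violates agent $1$'s or agent $2$'s comparison; your $\Phi$-minimizer ($k=1$) violates both. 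KKM may return exactly this point, so no exchange argument can repair the merge.

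The paper instead uses a one-sided extremal label: $\btheta\in C_i$ iff, in some maximizer of the $\varepsilon$-perturbed score (welfare plus $\varepsilon$ times the prefix-sum price potential, over equal-cardinality allocations), agent $i$ attains $\min_k\widetilde p(X_k^+)/w_k$. It then proves the interlacing Lemma~\ref{lem:signed-interlacing}: $\widehat p(X_i^+)\le\widetilde p(Y_i^+)$ for any two such maximizers. Choosing the maximizer that maximizes $\min_k\widetilde p(X_k^+)/w_k$ then gives $\widehat p(X_j^+)/w_j\le\widetilde p(X_i^+)/w_i$ for all $i,j$ in one allocation. In the example, agent $3$ never attains that minimum, so the symmetric point is not in $C_3$.

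A second gap: using $\theta_i$ itself as the welfare weight breaks the conversion to values at the boundary, and under your relaxed labels the private goods do not keep the KKM point interior. Take an all-chores instance at the vertex $(1,0,\dots,0)$. Every chore has price $0$ and may go to any zero-weight agent. Every agent satisfies your label (agent $2$ versus agent $1$ just transfers $e_1$), so the vertex lies in $\bigcap_i C_i$. A $\Phi$-minimizer may then give all chores to agent $2$, and your step $v_i(e)\le p_e/\theta_i$ divides by zero.

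The paper avoids this with weights $f(\theta_i)=1/(\rho+(1-n\rho)\theta_i)\in[1/(1-(n-1)\rho),1/\rho]$ and the reversed orientation. A zero coordinate gives the largest weight, so that agent's private item costs $\delta/\rho$ and she can never attain the minimum label. Keeping the weights bounded away from $0$ is also what yields fPO in the $\varepsilon\to0$ limit; a perturbed maximizer is not itself an exact welfare maximizer, so the first welfare theorem alone does not suffice.

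There are also two smaller issues. In the reduction, only truncating negative values at $0$ is safe. Making them very negative lets agent $i$'s true value of $X_j$ exceed her reduced value, which breaks the lifting. Items whose maximum value is exactly $0$ also have to be set aside. Finally, removing a private good lowers its owner's value, so your stated reason why projection preserves $\WEFoneT$ is backwards. The paper fixes this by choosing $\delta<\gamma w_{\min}$, where $\gamma$ is the minimum $\WEFoneT$ violation. Your $\epsilon\to0$ limit, with a pigeonhole argument on the witness items, would also work.
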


The remainder of this section is devoted to proving \cref{thm:wef1t-fpo}.
The proof starts with a structural reduction that converts the original mixed instance into a reduced instance containing only objective goods and objective chores.
A similar reduction appears in~\cite{conf/ijcai/0001X025}.
We then augment the reduced instance by giving each agent a private item that only she values positively and adding universal zero dummies to equalize bundle sizes.
After this augmentation, we apply the Knaster--Kuratowski--Mazurkiewicz (KKM) lemma, following recent applications of the lemma in fair division~\cite{conf/wine/BarmanHSS25,journals/geb/IgarashiM26,journals/corr/abs-2509-18673,journals/corr/EF1andPO}.

The key existence step uses the KKM lemma~\cite{knaster1929beweis}, a fixed-point theorem for coverings of a simplex.
Informally, KKM says that if sets cover the faces of a simplex in the appropriate way, then all these sets have a common point.
For each agent, we define a set of weights for which some corresponding welfare-maximizing allocation satisfies the comparison required for that agent.
We show that these sets cover every face of the simplex, so KKM provides a weight vector lying in all of them.
A tie-breaking argument then combines the corresponding allocations into a single allocation that satisfies $\WEFoneT$, and a limiting argument establishes fPO.

\subsection{Reduction and Augmentation}
In this subsection, we perform a series of reduction and instance augmentation steps to simplify the problem structure.
This thereby allows us to restrict our attention to a well-behaved subclass of instances without loss of generality.

We begin by partitioning the items based on the sign of their maximum valuation across all agents. 
Let $M^0 = \{e \in M : \max_{k \in N} \{v_k(e)\} = 0\}$, and let $\bar{M} = M \setminus M^0$. 
Observe that it suffices to establish how to allocate items in $\bar{M}$ while guaranteeing $\mathrm{WEF1T}$. After doing so, each item $e \in M^0$ can be assigned to an agent $i$ with $v_i(e) = 0$ without violating the $\mathrm{WEF1T}$ condition.
For every agent $i\in N$ and item $e\in\bar M$, we further define the modified valuation
$$
\bar v_i(e)=
\begin{cases}
\max\{v_i(e),0\}, & \text{if }\max_{k\in N}\{v_k(e)\}>0,\\
v_i(e), & \text{if }\max_{k\in N}\{v_k(e)\}<0.
\end{cases}
$$

Note that under the modified valuation functions $\mathbf{\bar{v}}$, every item $e\in\bar M$ is either an objective good, with $\bar v_i(e)>0$ for some agent $i$, or an objective chore.
Intuitively, to achieve efficiency, an objective good should ideally be allocated to an agent $i$ who values it positively, i.e., $\bar{v}_i(e) > 0$. 

Guided by this principle, we formally introduce the notion of a non-wasteful allocation for the reduced instance $\mathcal{I'} = (N, \bar{M}, \mathbf{w}, \mathbf{\bar{v}})$. 
\begin{definition}[Non-wasteful Allocation]
An allocation $\bX' = (X_1', \dots, X_n')$ of the reduced instance $\mathcal{I'}$ is \emph{non-wasteful} if every item $e \in \bar{M}$ with $\max_{k \in N} \{\bar{v}_k(e)\} > 0$ is assigned to an agent $i \in N$ that satisfies $\bar{v}_i(e) > 0$.
\end{definition}

In what follows, to establish the existence of a $\mathrm{WEF1T}$ and $\mathrm{fPO}$ allocation, we show that it suffices to consider the reduced instance $\mathcal{I'}$.

\begin{lemma}\label{lem:reduction-lifting}
Let $\bX$ be obtained from a non-wasteful allocation $\bX'$ by allocating every item in $M^0$ to an agent who values it at zero.
If $\bX'$ satisfies $\WEFoneT$ and fPO for the reduced instance $\mathcal I'$, then $\bX$ satisfies $\WEFoneT$ and fPO for the original instance $\mathcal I$.
\end{lemma}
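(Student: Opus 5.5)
The plan is to first relate the original valuations $v_i$ on $\bX$ to the modified valuations $\bar v_i$ on $\bX'$, then transfer fPO and $\WEFoneT$ separately. The key structural fact is the following. In a non-wasteful allocation, every item $e\in\bar M$ with $\max_k v_k(e)>0$ lies in the bundle of an agent $i$ with $\bar v_i(e)=v_i(e)>0$. Every item with $\max_k v_k(e)<0$ has $\bar v_k(e)=v_k(e)$ for all $k$. Every item of $M^0$ assigned in $\bX$ has value $0$ to its owner. Hence $v_i(X_i)=\bar v_i(X_i')$ for every $i$. For cross-bundle values we have $v_i(X_j)\le \bar v_i(X_j')$: the items of $M^0$ contribute $v_i(e)\le 0$, and objective-good items with $v_i(e)<0$ are rounded up to $0$ under $\bar v_i$. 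Moreover, $v_i(e)=\bar v_i(e)$ whenever $\bar v_i(e)>0$ or $\bar v_i(e)<0$; the two can differ only when $\bar v_i(e)=0$.

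\textbf{fPO.} Suppose a fractional allocation $\mathbf{z}$ of $M$ Pareto dominates $\bX$ under $\mathbf v$. Restrict $\mathbf z$ to $\bar M$ and evaluate it under $\bar{\mathbf v}$. Since $\bar v_k(e)\ge v_k(e)$ for all $k,e\in\bar M$, and items of $M^0$ contribute at most $0$ under $v_k$, we get $\bar v_k(\mathbf z|_{\bar M})\ge v_k(\mathbf z)\ge v_k(X_k)=\bar v_k(X_k')$ for all $k$, with strict inequality for some $k$. This contradicts fPO of $\bX'$ in $\mathcal I'$. I expect this to be routine.

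\textbf{$\WEFoneT$.} Fix $i,j$ and apply $\WEFoneT$ of $\bX'$ under $\bar v$. In case (i), $v_i(X_i)/w_i=\bar v_i(X_i')/w_i\ge \bar v_i(X_j')/w_j\ge v_i(X_j)/w_j$.

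In case (ii), there is $e\in X_j'$ with $\bar v_i(e)>0$, so $v_i(e)=\bar v_i(e)$. Then
\[
v_i(X_i\cup\{e\})=\bar v_i(X_i')+\bar v_i(e)
\qquad\text{and}\qquad
v_i(X_j\setminus\{e\})\le \bar v_i(X_j'\setminus\{e\}),
\]
so the same $e$ witnesses (ii) for $\bX$.

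In case (iii), there is $e\in X_i'$ with $\bar v_i(e)<0$. By non-wastefulness, $e$ must be an objective chore (an objective good in $X_i'$ has $\bar v_i(e)>0$), so $\bar v_k(e)=v_k(e)$. Then $v_i(X_i\setminus\{e\})=\bar v_i(X_i'\setminus\{e\})$ and $v_i(X_j\cup\{e\})\le \bar v_i(X_j'\cup\{e\})$, and (iii) transfers.

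The main point needing care is the bookkeeping that $v_i(X_i)=\bar v_i(X_i')$ exactly. This equality is where non-wastefulness is essential: without it, an agent could hold an objective good she dislikes, and then $\bar v_i$ would overstate her own utility. The cross-bundle inequalities all go in the favorable direction.
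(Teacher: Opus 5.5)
Your proof is correct and follows the same route as the paper. You use non-wastefulness to get $v_i(X_i)=\bar v_i(X_i')$, note the cross-bundle inequality $v_i(X_j)\le \bar v_i(X_j')$, and for fPO restrict a dominating fractional allocation to $\bar M$. Your check that each witness item satisfies $v_i(e)=\bar v_i(e)\neq 0$ makes explicit the step the paper compresses into ``all $\WEFoneT$ conditions are preserved.'' In case (iii) you do not need non-wastefulness, since $\bar v_i(e)<0$ already forces $e$ to be an objective chore.
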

\begin{proof}
Fix any pair of agents $i, j \in N$. 
By non-wastefulness, we have $v_i(X_i) = \bar{v}_i(X_i')$.
In addition, by definition of the modified valuations, $v_i(e) \le \bar{v}_i(e)$ holds for every item $e \in X_j'$.
Consequently, under $\mathbf{v}$, agent $i$'s valuation of her own bundle remains identical to that under $\mathbf{\bar{v}}$, while her evaluation of agent $j$'s bundle does not increase.
Therefore, all WEF1T conditions under $\mathbf{\bar{v}}$ are thus preserved under $\mathbf{v}$. 
Moreover, allocating items in $M^0$ leaves each receiver's total valuation unchanged while only non-increasing other agents' valuations of those bundles. Thus, $\bX$ satisfies $\mathrm{WEF1T}$.

It remains to show that $\bX$ is $\mathrm{fPO}$.
Suppose for contradiction that there exists a fractional allocation $\bY$ that Pareto-dominates $\bX$ under $\mathbf{v}$. By discarding the fractional shares of items in $M^0$, we obtain a valid fractional allocation $\bY'$ of $\bar{M}$. For every agent $i$, their utility under $\mathbf{\bar{v}}$ satisfies $\bar{v}_i(Y_i') \ge v_i(Y_i)$, since items in $M^0$ have non-positive valuations for every agent and $\bar{v}_i(e) \ge v_i(e)$ holds for every item $e \in \bar{M}$. Because $v_i(Y_i) \ge v_i(X_i) = \bar{v}_i(X_i')$ for all agents $i$, with strict inequality for at least one agent, $\bY'$ Pareto-dominates $\bX'$ under $\mathbf{\bar{v}}$. This contradicts the assumption that $\bX'$ is $\mathrm{fPO}$ for the reduced instance $\mathcal{I}'$.
\end{proof}

Therefore, in the following, we focus on proving the existence of a $\mathrm{WEF1T}$ and $\mathrm{fPO}$ allocation for the reduced instance $\mathcal{I}'$. 
To this end, we augment the reduced instance with two types of auxiliary items: for each agent $i \in N$, we introduce a private item $e_i$ that is valued positively solely by agent $i$, along with a set of universal zero-dummy items that are valued at zero by all agents.
To specify the value of each private item, we first introduce the following definition. For any pair of agents $i, j \in N$, let
$$
\Gamma_{i,j}(\bX') =
\frac{\bar v_i(X_j')}{w_j} - \frac{\bar v_i(X_i')}{w_i}
- \left(\frac{1}{w_i} + \frac{1}{w_j}\right) \cdot
\max\Bigl\{
0,\
\max_{\substack{e \in X_j' \\ \bar v_i(e) > 0}} \{\bar v_i(e)\},\
\max_{\substack{e \in X_i' \\ \bar v_i(e) < 0}} \{-\bar v_i(e)\}
\Bigr\},
$$
where the maximum over an empty set is defined as $-\infty$. 
Intuitively, $\Gamma_{i, j}(\bX')$ measures the violation of $\mathrm{WEF1T}$ from $i$ to $j$.
In particular, $\bX'$ violates $\mathrm{WEF1T}$ from agent $i$ to agent $j$ if and only if $\Gamma_{i,j}(\bX') > 0$.
If every integral allocation satisfies $\mathrm{WEF1T}$, we simply set $\gamma = 1$. Otherwise, we define $\gamma > 0$ as the minimum value of $\Gamma_{i,j}(\bX')$ over all agents $i, j \in N$ and all integral allocations $\bX'$ violating $\mathrm{WEF1T}$, i.e.,
$$
\gamma = \min_{\bX' \in \prod_n(\bar M)} \left\{\min_{i, j \in N} \{\Gamma_{i, j}(\bX')\}\right\},
$$
where $\prod_n(\bar M)$ denotes all $n$-partition of $\bar M$ such that is not WEF1T.

We are now ready to augment the reduced instance $\mathcal{I}'$. 
Set $w_{\min} = \min_{i \in N} \{w_i\}$ and choose $0 < \delta < \gamma \cdot w_{\min}$. 
For each agent $i \in N$, we introduce a private item $e_i$ with valuation $\bar{v}_i(e_i) = \delta$ and $\bar{v}_j(e_i) = 0$ for all $j \neq i$. 
In addition, let $D^+$ be a set of $n(|\bar{M}| + n) - |\bar{M}|$ dummy items with $\bar{v}_i(e) = 0$ for all $i \in N$. 
We then define the augmented item set $M^+ = \bar{M} \cup \{e_i \mid i \in N\} \cup D^+$, which yields the resulting canonical instance $\mathcal{I}^+ = (N, M^+, \mathbf{w}, \mathbf{\bar{v}})$.

Intuitively, the private items ensure that every agent receives at least one item she uniquely values positively, while dummy items fill out the required number of items without altering any agent's utility.

In the following analysis, we temporarily restrict our attention to allocations where each agent receives exactly $|\bar{M}| + n + 1$ items. 
For simplicity, we refer to such an allocation as an \emph{equal-cardinality allocation}.
This restriction does not alter the achievable utility space of the reduced instance.
Indeed, once the original and private items are allocated, each agent $i \in N$ can be allocated with sufficient number of dummy items to meet the restriction without affecting $i$'s utility.
The following lemma shows that it suffices to prove the existence of an allocation that satisfies both $\mathrm{WEF1T}$ and $\mathrm{fPO}$ in the canonical instance $\mathcal{I}^+$.

\begin{lemma}\label{lem:canonical-projection}
Let $\bX^+$ be an allocation of $\mathcal{I}^+$ that assigns each private item $e_i$ to agent $i$, and let $\bX'$ be the allocation of $\mathcal{I}'$ obtained by discarding all auxiliary items from $\bX^+$.
If $\bX^+$ satisfies $\mathrm{WEF1T}$ and $\mathrm{fPO}$ for $\mathcal{I}^+$, then $\bX'$ guarantees both properties for $\mathcal{I}'$.
\end{lemma}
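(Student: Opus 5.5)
We prove the two properties separately: fPO by lifting any dominating allocation of $\mathcal I'$ back to $\mathcal I^+$, and $\WEFoneT$ by contradiction, with the choice $\delta<\gamma w_{\min}$ controlling the effect of the private items. Since $\bX^+$ gives $e_i$ to agent $i$, and dummies are worth zero to everyone, for every pair $i,j$ we have
\[
\bar v_i(X_i^+)=\bar v_i(X_i')+\delta,\qquad \bar v_i(X_j^+)=\bar v_i(X_j'),
\]
because $\bar v_i(e_j)=0$ for $j\neq i$.

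\textbf{fPO.} Suppose a fractional allocation $\bY'$ of $\bar M$ Pareto dominates $\bX'$ under $\bar{\mathbf v}$. Extend it to $\bY^+$ on $M^+$ by giving each $e_i$ wholly to $i$ and distributing the dummies arbitrarily. Then $\bar v_i(Y_i^+)=\bar v_i(Y_i')+\delta$ and $\bar v_i(X_i^+)=\bar v_i(X_i')+\delta$. Hence $\bY^+$ Pareto dominates $\bX^+$, a contradiction. This step is immediate because fPO is defined over all fractional allocations, not only equal-cardinality ones.

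\textbf{$\WEFoneT$.} Suppose $\bX'$ violates $\WEFoneT$ from $i$ to $j$. By the characterization preceding the definition of $\gamma$, $\Gamma_{i,j}(\bX')>0$, and since $\bX'$ is an integral non-$\WEFoneT$ partition of $\bar M$, the definition of $\gamma$ gives $\Gamma_{i,j}(\bX')\ge\gamma$.

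Now compare with $\bX^+$. Because $\bX^+$ satisfies $\WEFoneT$, $\Gamma_{i,j}(\bX^+)\le 0$. The transfer candidates for $i$ in $\bX^+$ are the same as in $\bX'$: the items $e\in X_j^+$ with $\bar v_i(e)>0$ are the original ones, since $e_j$ and the dummies are worth $0$ to $i$, and the chores in $X_i^+$ are the original chores. So the max-term in $\Gamma_{i,j}$ is unchanged, and only agent $i$'s own value shifts:
\[
\Gamma_{i,j}(\bX^+)=\Gamma_{i,j}(\bX')-\frac{\delta}{w_i}\ \ge\ \gamma-\frac{\delta}{w_{\min}}\ >\ 0,
\]
using $\delta<\gamma w_{\min}$. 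This contradicts $\Gamma_{i,j}(\bX^+)\le 0$. Hence $\bX'$ satisfies $\WEFoneT$.

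The main care point is the well-definedness of $\Gamma_{i,j}$ on $\bX^+$, an allocation of the augmented instance, together with the equivalence "$\WEFoneT$ iff $\Gamma_{i,j}\le 0$" there. I would verify this directly from \Cref{def:wef1t}. A transfer of a good $e\in X_j$ changes the gap by $(1/w_i+1/w_j)\bar v_i(e)$, and a transfer of a chore $e\in X_i$ changes it by $(1/w_i+1/w_j)(-\bar v_i(e))$. So $\WEFoneT$ holds exactly when the gap minus $(1/w_i+1/w_j)$ times the maximal available magnitude is $\le 0$. The inner $0$ in the max accounts for exact $\WEF$, case (i).
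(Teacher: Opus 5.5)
Your proof is correct and follows essentially the same route as the paper. For $\WEFoneT$, both arguments note that the auxiliary items leave the max-term in $\Gamma_{i,j}$ unchanged, so $\Gamma_{i,j}(\bX^+)\ge\gamma-\delta/w_i>0$. For fPO, both lift a dominating $\bY'$ by giving each $e_i$ to $i$; the paper keeps $\bX^+$'s dummy shares while you distribute dummies arbitrarily, which is equivalent since dummies are worth zero to everyone. Your direct check that $\WEFoneT$ is equivalent to $\Gamma_{i,j}\le 0$ is a useful detail that the paper only asserts.
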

\begin{proof}
Fix any pair of agents $i, j \in N$. 
By construction, the private item $e_i$ increases agent $i$'s valuation of her own bundle by $\delta$, whereas all auxiliary items in agent $j$'s bundle have zero value to agent $i$. 
Moreover, none of these auxiliary items changes the maximum term in the definition of $\Gamma_{i,j}$. 
Consequently, if $\bX'$ were to violate $\mathrm{WEF1T}$ from agent $i$ to agent $j$, we would have $\Gamma_{i,j}(\bX') \ge \gamma$. This implies $\Gamma_{i,j}(\bX^+) \ge \gamma - \delta / w_i > 0$ as $\delta < \gamma \cdot w_{\min}$, which contradicts the assumption that $\bX^+$ satisfies $\mathrm{WEF1T}$. 
Thus, $\bX'$ must satisfy $\mathrm{WEF1T}$.

It remains to establish that $\mathbf{X}'$ guarantees $\mathrm{fPO}$. 
Suppose for contradiction that a fractional allocation $\mathbf{Y}'$ Pareto dominates $\mathbf{X}'$ in the reduced instance $\mathcal{I}'$. 
We can extend $\mathbf{Y}'$ to a fractional allocation $\mathbf{Y}^+$ of $\mathcal{I}^+$ by fully assigning each private item $e_i$ to agent $i$ and keeping the fractional dummy-item shares from $\mathbf{X}^+$. 
However, by construction, the allocation $\mathbf{Y}^+$ Pareto dominates $\mathbf{X}^+$, contradicting the $\mathrm{fPO}$ property of $\mathbf{X}^+$.
\end{proof}

The canonical instance $\mathcal{I}^+$ offers key structural advantages that enable the application of the KKM lemma, which we will introduce in Section~\ref{subsec: kkm}.

\subsection{Simplex Selection via KKM}\label{subsec: kkm}
In this subsection, we introduce the KKM lemma and apply it to establish the existence of allocation that ensures both $\mathrm{WEF1T}$ and $\mathrm{fPO}$ for the canonical instance $\mathcal{I}^+$. 
Let $\Delta^{n-1} = \{ (\theta_1,\dots,\theta_n) \in \mathbb{R}_{\ge 0}^n : \sum_{i \in N} \theta_i = 1 \}$ denote the $(n-1)$-dimensional standard simplex. For a parameter $\rho \in (0, 1/n)$, which we choose below, define $f : [0,1] \to \mathbb{R}_{>0}$ as
$$
  f(x) = \frac{1}{\rho + (1 - n\rho)x}.
$$
The function $f$ maps each coordinate in $[0, 1]$ to a strictly positive welfare weight. Since $f$ is strictly decreasing, smaller coordinates yield strictly larger welfare weights. In particular, we have $f(0) = 1/\rho$ and $f(1/n) = n$. 
This separation is used below to verify the boundary condition of the KKM lemma.
Fix $\mathbf{\theta} \in \Delta^{n-1}$ and consider the following primal-dual pair of linear programs:
$$
\renewcommand{\arraystretch}{1.25}
\begin{array}{@{}l@{\qquad\qquad\qquad}l@{}}
\textbf{LP}(\btheta,\rho)
& \textbf{Dual-LP}(\btheta,\rho) \\[1.5ex]
\begin{array}[t]{r@{\;}l}
\text{Max.} & \displaystyle\sum_{i\in N}\sum_{e\in M^+} f(\theta_i) \cdot \bar v_i(e) \cdot x_{i,e} \\[3 ex]
\text{s.t.} & \displaystyle\sum_{i\in N}x_{i,e}=1 \quad \forall e\in M^+,\\[3 ex]
& x_{i,e}\ge0 \quad \forall i\in N,\ \forall e\in M^+
\end{array}
&
\begin{array}[t]{r@{\;}l}
\text{Min.} & \displaystyle\sum_{e\in M^+} p(e) \\[3ex]
\text{s.t.} & p(e)\ge f(\theta_i) \cdot \bar v_i(e) \quad \forall i\in N,\ \forall e\in M^+
\end{array}
\end{array}
$$

Since the primal feasible region is the fractional allocation polytope, the linear program admits an integral optimal solution. 
Furthermore, because $f(\theta_i) > 0$ for all $i \in N$, any such integral optimum is necessarily $\mathrm{fPO}$.
The dual variables $p(e)$ for $e \in M^+$ are interpreted as item prices. Let $\mathbf p=(p(e))_{e\in M^+}$ denote the optimal dual price vector.
Since the dual constraints decompose itemwise, this price function is uniquely determined by
$$
   p(e) = \max_{i \in N} \left\{ f(\theta_i) \cdot \bar v_i(e) \right\}
  \qquad \text{for each } e \in M^+.
$$

For any non-empty $S \subseteq M^+$ with total price $p(S) = \sum_{e \in S} p(e)$, we define
$$
  \widehat p(S) = p(S) - \max_{e \in S} \{ p(e) \},
  \qquad
  \widetilde p(S) = p(S) - \min_{e \in S} \{ p(e) \}.
$$

Recall that we restrict our attention to allocations where every agent receives exactly $|\bar M| + n + 1$ items. That is, for any equal-cardinality allocation $\mathbf{X}^+$, we have $|X_i^+| = |\bar M| + n + 1$ for all $i \in N$.
We then define the tie-breaking potential function as
$$
\Phi(\mathbf{X}^+) = \sum_{i \in N} \sum_{\ell=1}^{|\bar M| + n}
\max \left\{ p(S) : S \subseteq X_i^+, \, |S| = \ell \right\}.
$$
For a given $\varepsilon > 0$, let $\mathcal{A}_\varepsilon$ denote the set of equal-cardinality allocations that maximize
$$
  \operatorname{Score}(\mathbf{X}^+)
  = \sum_{i \in N} f(\theta_i) \cdot \bar v_i(X_i^+) + \varepsilon \cdot \Phi(\mathbf{X}^+).
$$
Here, the primary term represents the weighted social welfare, while the second term is a perturbation that favors allocations with larger sums of expensive price prefixes.
The parameter $\varepsilon$ is chosen small enough so that this preference causes only a slight perturbation to the welfare.

Consider an equal-cardinality allocation that assigns each non-dummy item $e$ to an agent that maximizes $f(\theta_i)\cdot \bar v_i(e)$, which achieves the maximum weighted social welfare $\sum_{e\in M^+}p(e)$. Since $M^+$ is finite and $\rho$ is fixed, $\Phi(\mathbf{X}^+)$ is bounded by a constant independent of $\mathbf{\theta}$.
Therefore, any $\mathbf{X}^+\in\mathcal A_\varepsilon$ suffers at most an $O(\varepsilon)$ welfare loss relative to this maximum:
\begin{equation}\label{eq:price-welfare-gap}
0\le \sum_{e\in M^+}p(e)-\sum_{i\in N} f(\theta_i) \cdot \bar v_i(X_i^+)
 \le O(\varepsilon)
 \qquad \text{for all } \mathbf{X}^+\in\mathcal A_\varepsilon,
\end{equation}
where the lower bound follows from the definition of $p(e)$, and the $O(\varepsilon)$ bound holds uniformly over $\mathbf{\theta}$ and $\mathbf{X}^+$.
Next, we establish a structural property satisfied by allocations in $\mathcal{A}_\varepsilon$.
\begin{lemma}\label{lem:signed-interlacing}
Given $\mathbf{\theta} \in \Delta^{n-1}$, for any $\mathbf{X}^+, \mathbf{Y}^+ \in \mathcal{A}_\varepsilon$ and $i \in N$, we have $\widehat p(X_i^+)\le \widetilde p(Y_i^+)$.
\end{lemma}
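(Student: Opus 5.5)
The plan is an exchange argument between two members of $\mathcal A_\varepsilon$, using the tie-breaking potential $\Phi$ to exclude the forbidden configuration.

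\textbf{Step 1: reduce to exact welfare maximizers and record the price identities.} Only finitely many equal-cardinality allocations exist. The welfare term takes finitely many values, and $\Phi$ is bounded uniformly. So for $\varepsilon$ below the minimum positive welfare gap, every $\bX^+\in\mathcal A_\varepsilon$ maximizes $\sum_i f(\theta_i)\bar v_i(X_i^+)$ exactly, and the $O(\varepsilon)$ in \eqref{eq:price-welfare-gap} is in fact $0$. Consequently every item $e\in X_k^+$ satisfies $f(\theta_k)\bar v_k(e)=p(e)$, and the same holds for $\bY^+$. Hence, for fixed $\btheta$, $\mathcal A_\varepsilon$ is exactly the set of equal-cardinality allocations that
\begin{enumerate}[label=(\alph*)]
\item give every item to a price-attaining agent, and
\item among these, maximize $\Phi$.
\end{enumerate}
For each bundle $B$ of size $K+1$, where $K=|\bar M|+n$, write its prices in non-increasing order $q_1\ge\dots\ge q_{K+1}$. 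Then
\[
\phi(B):=\sum_{\ell=1}^{K}\max\{p(S):S\subseteq B,|S|=\ell\}=\sum_{r=1}^{K+1}(K+1-r)\,q_r ,
\]
so $\Phi=\sum_k\phi(X_k^+)$.

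\textbf{Step 2: the exchange structure.} Build a directed multigraph on $N$. For each item $e$ with different owners, put an arc from its $\bY^+$-owner to its $\bX^+$-owner. Because all bundles have the same cardinality, every agent has equal in- and out-degree, so the arcs decompose into directed cycles. Rotating the items along any sub-collection $\mathcal C$ of these cycles produces two new allocations:
\begin{enumerate}[label=(\alph*)]
\item $\bZ$, which is $\bX^+$ with the items on $\mathcal C$ moved to their $\bY^+$-owners;
\item $\bW$, which is $\bY^+$ with those items moved to their $\bX^+$-owners.
\end{enumerate}
Both are equal-cardinality allocations. Every item in them still sits with an agent who owns it in $\bX^+$ or in $\bY^+$, hence with a price-attaining agent. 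So both are welfare-maximal, and optimality gives $\Phi(\bZ)\le\Phi(\bX^+)$ and $\Phi(\bW)\le\Phi(\bY^+)$. It therefore suffices to find a $\mathcal C$ with $\Phi(\bZ)+\Phi(\bW)>\Phi(\bX^+)+\Phi(\bY^+)$.

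\textbf{Step 3: using the violated inequality.} Suppose $\widehat p(X_i^+)>\widetilde p(Y_i^+)$. Both sides are sums of $K$ prices: $X_i^+$ without its most expensive item, and $Y_i^+$ without its cheapest item. Comparing them in sorted order, I would locate an item $x\in X_i^+$ and an item $y\in Y_i^+$ with $p(x)>p(y)$. Here $x$ is not the maximum of $X_i^+$ and $y$ is not the minimum of $Y_i^+$, so the two are strictly "interlaced" in agent $i$'s price ranks. Then I choose a cycle through $i$ along which $i$ receives $x$ in $\bW$ and gives up $y$ in $\bW$, and symmetrically in $\bZ$. In the rank-weighted form of $\phi$, the expensive item $x$ moves into a bundle where it ranks strictly higher, and $y$ moves to a bundle where it ranks lower. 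The non-extremal positions are what make this gain strict.

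\textbf{Main obstacle.} The hard part is controlling the other agents $k$ on the cycle. For each such $k$ one must show
\[
\phi(X_k-a+b)+\phi(Y_k-b+a)\ge\phi(X_k)+\phi(Y_k),
\]
where $a,b$ are the items exchanged at $k$. This is an exchange or supermodularity-type inequality for the rank-weighted sum $\phi$. I would first try to prove it by a sorting/rearrangement argument on merged price lists. If it does not hold for arbitrary cycles, the fallback is to choose the cycle greedily so that at each intermediate agent the item kept has price at least that of the item released. Such a "monotone" cycle makes each term nonnegative, and I would argue it exists by following maximal-price arcs starting from $x$.
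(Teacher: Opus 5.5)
Your exchange framework is sound (the paper's own recombination moves an Eulerian set of your arcs). However, the step that carries all the content is missing, and the fallback you propose for it fails.

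Write $\mathrm{top}_\ell(S)$ for the sum of the $\ell$ largest prices in $S$, so that $\phi(S)=\sum_{\ell=1}^{K}\mathrm{top}_\ell(S)$. Take an agent $k\neq i$ on your cycle who in $\mathbf{Z}$ releases $a\in X_k^+\setminus Y_k^+$ and receives $b\in Y_k^+\setminus X_k^+$. For such $k$, the inequality $\phi(X_k^+-a+b)+\phi(Y_k^+-b+a)\ge\phi(X_k^+)+\phi(Y_k^+)$ is false in general. A monotonicity rule does not rescue it. Take bundles with prices $\{10,1\}$ and $\{9,2\}$ (so $K=1$). Swapping the items of prices $1$ and $9$, or those of prices $10$ and $2$, lowers the sum from $19$ to $12$, so imposing monotonicity in either direction does not help.

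More generally, suppose $k$'s two bundles are already interlaced: each contains exactly one of every consecutive pair in the merged sorted price list, with prices pairwise distinct. Then any swap between different pairs strictly decreases $\phi(X_k^+)+\phi(Y_k^+)$. The reason is that, for some $\ell\le K$, it breaks the identity stating that $\mathrm{top}_\ell(X_k^+)+\mathrm{top}_\ell(Y_k^+)$ equals the sum of the $2\ell$ largest merged prices. A single cycle through $i$ built by a local greedy rule can therefore be blocked by such agents, and you give no argument that a harmless cycle exists.

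The missing idea is to let the rearrangement argument on merged price lists \emph{define} the exchange rather than verify it. The paper proceeds as follows.
\begin{enumerate}
\item It takes one copy of each item from $\bX^+$ and one from $\bY^+$.
\item For every agent simultaneously, it pairs her $2(K+1)$ copies consecutively in non-increasing price order.
\item It adds the pair joining the two copies of each item.
\item It 2-colors the resulting union of two perfect matchings, which consists of even alternating cycles.
\end{enumerate}
The color classes are equal-cardinality allocations $\bX^{(0)},\bX^{(1)}$ with the same total weighted welfare as $\bX^+,\bY^+$, and in them every agent's two bundles are interlaced. Since $\mathrm{top}_\ell(A)+\mathrm{top}_\ell(B)$ never exceeds the sum of the $2\ell$ largest prices of $A\uplus B$, with equality for an interlaced split, no agent loses. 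Hence $\Phi(\bX^{(0)})+\Phi(\bX^{(1)})\ge\Phi(\bX^+)+\Phi(\bY^+)$ holds with no case analysis over the other agents.

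Score-optimality then forces equality for every agent and every $\ell\le K$. Let $x_1\ge\cdots\ge x_{K+1}$ and $y_1\ge\cdots\ge y_{K+1}$ be the sorted prices of $X_i^+$ and $Y_i^+$. Equality at $\ell$ gives $x_{\ell+1}\le y_\ell$, since otherwise replacing $y_\ell$ by $x_{\ell+1}$ would beat a top-$2\ell$ sum. Summing over $\ell\in[K]$ gives $\widehat p(X_i^+)\le\widetilde p(Y_i^+)$ directly, so no strict-gain argument at $i$ (your Step~3) is needed.

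Separately, your Step~1 is not valid in the form needed. The smallest positive welfare gap depends on $\btheta$ and can be arbitrarily small, whereas $\varepsilon$ is fixed before KKM selects $\btheta$. This is why \eqref{eq:price-welfare-gap} only gives $O(\varepsilon)$ and tightness appears only in the limit. Step~1 is also unnecessary. Since $\mathbf{Z},\mathbf{W}$ jointly carry exactly the weighted welfare of $\bX^+,\bY^+$, the bound $\operatorname{Score}(\mathbf{Z})+\operatorname{Score}(\mathbf{W})\le\operatorname{Score}(\bX^+)+\operatorname{Score}(\bY^+)$ alone gives $\Phi(\mathbf{Z})+\Phi(\mathbf{W})\le\Phi(\bX^+)+\Phi(\bY^+)$.
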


\begin{proof}
To establish the claim, we construct a multigraph by taking one copy of every item from $\bX^+$ and one copy from $\bY^+$. 
For each agent, order her item copies in non-increasing order of price and pair consecutive copies. 
In addition, pair the two copies of each item. 
Every vertex has degree two, and the resulting multigraph therefore decomposes into disjoint even cycles whose edges alternate between the two types of pairs.

We now alternate the edges on every cycle, which yields two equal-cardinality allocations $\bX^{(0)}$ and $\bX^{(1)}$. 
Since each cycle merely exchanges item copies along alternating edges, the total weighted social welfare of $\bX^{(0)}$ and $\bX^{(1)}$ equals that of $\bX^+$ and $\bY^+$. Moreover, the pairing by non-increasing price ensures that, for every agent $i$ and every $\ell<|\bar M|+n+1$, the sum of the $\ell$ largest prices in $X_i^{(0)}$ and $X_i^{(1)}$ is at least the corresponding sum for $X_i^+$ and $Y_i^+$.

Consequently, we have $\Phi(\bX^{(0)})+\Phi(\bX^{(1)})\ge\Phi(\bX^+)+\Phi(\bY^+)$. Since $\bX^+$ and $\bY^+$ both maximize $\operatorname{Score}(\cdot)$, this inequality cannot be strict. Hence every prefix-sum comparison above holds with equality. Applying the equality for the relevant prefixes, together with the order in which the item copies are paired, it follows that $\widehat p(X_i^+)\le \widetilde p(Y_i^+)$.
\end{proof}

We are now ready to prove the existence of a $\mathrm{WEF1T}$ and $\mathrm{fPO}$ allocation for any canonical instance $\mathcal{I^+}$. 
To this end, we recall the classical KKM lemma~\cite{knaster1929beweis}, which serves as a set-covering analogue of Sperner's lemma.

\begin{lemma}[KKM Lemma~\cite{knaster1929beweis}]\label{lem:kkm}
Let $C_1,\ldots,C_n$ be closed subsets of $\Delta^{n-1}$.
Suppose that, for every $\btheta\in\Delta^{n-1}$, there is an agent $i\in N$ with $\theta_i>0$ and $\btheta\in C_i$.
Then $\bigcap_{i\in N}C_i\neq\varnothing$.
\end{lemma}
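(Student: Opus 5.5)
The plan is to reduce the statement to the classical form of the KKM lemma and then prove that form via Sperner's lemma and a compactness argument.

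\textbf{Reduction to the classical KKM covering condition.} For a nonempty index set $I\subseteq N$, write $F_I=\{\btheta\in\Delta^{n-1}:\theta_k=0 \text{ for all } k\notin I\}$ for the face spanned by the vertices indexed by $I$. The classical KKM lemma assumes that $F_I\subseteq\bigcup_{i\in I}C_i$ for every such $I$. This follows from the hypothesis as stated. Take any $\btheta\in F_I$; the hypothesis gives an agent $i$ with $\theta_i>0$ and $\btheta\in C_i$. Since $\theta_k=0$ for every $k\notin I$, the positivity $\theta_i>0$ forces $i\in I$. So the stated hypothesis is exactly the face-covering condition, written pointwise.

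\textbf{Sperner step.} Fix $r\in\mathbb N$ and take a triangulation $T_r$ of $\Delta^{n-1}$ with mesh at most $1/r$, for example the standard barycentric subdivision iterated enough times. Label each vertex $\bv$ of $T_r$ by an agent $\lambda(\bv)=i$ with $v_i>0$ and $\bv\in C_i$; the hypothesis guarantees such an $i$ exists. This labeling is a Sperner labeling:
\begin{itemize}
\item each vertex of $\Delta^{n-1}$, say the $k$-th unit vector, must receive label $k$;
\item more generally, a vertex lying in the face $F_I$ receives a label in $I$, because its label has a positive coordinate.
\end{itemize}
Sperner's lemma then gives a fully labeled cell $\sigma_r$ of $T_r$. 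Its vertices are $\bv^{r,1},\ldots,\bv^{r,n}$ with $\bv^{r,i}\in C_i$ for every $i\in N$.

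\textbf{Limit step.} By compactness of $\Delta^{n-1}$, pass to a subsequence along which $\bv^{r,1}$ converges to some $\btheta^*$. The cell $\sigma_r$ has diameter at most $1/r$, so every $\bv^{r,i}$ converges to the same $\btheta^*$. Each $C_i$ is closed, hence $\btheta^*\in C_i$ for all $i$, and therefore $\bigcap_{i\in N}C_i\neq\varnothing$.

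The only nontrivial ingredient is Sperner's lemma itself. I would cite it as classical, or prove it by the standard parity induction on dimension: count the $(n-2)$-cells carrying labels $\{1,\ldots,n-1\}$, which are either on the boundary face or shared by two cells. The remaining care needed is confirming that the labeling rule respects faces, which is exactly the reduction step above.
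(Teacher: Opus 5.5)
Your proof is correct. The paper gives no proof of this statement: it quotes the classical result of Knaster, Kuratowski, and Mazurkiewicz and uses it as a black box. What you have written is therefore a self-contained proof of the cited lemma, not an alternative to an argument in the paper.

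Your route is the standard one, essentially the original derivation: a Sperner labelling on triangulations of mesh at most $1/r$, followed by compactness of $\Delta^{n-1}$ and closedness of the $C_i$. Your opening step correctly shows that the paper's pointwise hypothesis is equivalent to the usual face-covering condition $F_I\subseteq\bigcup_{i\in I}C_i$. This matches how the paper verifies the hypothesis in Lemma~\ref{lem:kkm-selection}: it first checks the face condition and then deduces the pointwise form.

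The only ingredient you outsource is Sperner's lemma. Your parity sketch for it is the standard one: doors lie only on the face opposite the missing label, and their number is odd by induction.

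One cosmetic point. Writing $\mathbf{v}$ and $v_i$ for a triangulation vertex and its coordinates clashes with the paper's valuation profile and valuations, so a different letter would be clearer.
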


For every $\btheta \in \Delta^{n-1}$, let $p(\mathbf{\theta}, \cdot)$, $\Phi(\mathbf{\theta},\cdot)$, and $\mathcal{A}_\varepsilon(\mathbf{\theta})$ denote the price function, tie-breaking potential, and set of maximizers associated with $\mathbf{\theta}$ from the construction above. When $\mathbf{\theta}$ is clear from context, we drop this explicit dependence and write $p(\cdot)$, $\Phi(\cdot)$, and $\mathcal{A}_\varepsilon$, respectively.
Let $V = \max_{i \in N, e \in M} \{ |\bar{v}_i(e)| \}$ denote the maximum absolute valuation over the reduced original items, with $V = 0$ if no such item exists.
To verify the KKM boundary condition, it suffices to show that, at every point of each face of the simplex, every minimum label belongs to the support of that face. 
To this end, we choose $\rho$ sufficiently small so that
$$
0<\rho<\min\left\{\frac{1}{n},
 \frac{\delta}{(|\bar M|+n-1)\cdot nV+(n(|\bar M|+n+1)\cdot \max\{V,\delta\}+nV+1)/w_{\min}}\right\}.
$$

The central challenge lies in identifying an appropriate weight vector $\btheta$. 
Fix a parameter $\varepsilon>0$. 
For every $i\in N$, let $C_i\subseteq\Delta^{n-1}$ consist of the points $\btheta$ for which there exists $\bX^+\in\mathcal A_\varepsilon(\btheta)$ that satisfies $i\in\argmin_{k\in N}\{\widetilde p(X_k^+)/w_k\}$.

The following lemma establishes that these label sets are closed.
\begin{lemma}\label{lem:kkm-label-closed}
For every $i\in N$, the set $C_i$ is closed.
\end{lemma}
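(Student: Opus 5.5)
We prove closedness by a standard finiteness-plus-continuity argument. Fix $i\in N$ and take a sequence $\btheta^{(t)}\in C_i$ converging to some $\btheta^*\in\Delta^{n-1}$; we must show $\btheta^*\in C_i$. The set of equal-cardinality allocations of $M^+$ is finite. So for each $t$ we pick a witness $\bX^{(t)}\in\mathcal A_\varepsilon(\btheta^{(t)})$ with $i\in\argmin_k\{\widetilde p(\btheta^{(t)},X^{(t)}_k)/w_k\}$, and pass to a subsequence on which this witness is a single fixed allocation $\bX^+$. It then suffices to show two things: $\bX^+\in\mathcal A_\varepsilon(\btheta^*)$, and $i$ still attains the minimum of $\widetilde p(\btheta^*,X_k^+)/w_k$.

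Both follow once the relevant quantities are continuous in $\btheta$ for a fixed allocation. First, $f$ is continuous and strictly positive on $[0,1]$, since $\rho>0$ and $1-n\rho>0$ keep the denominator in $[\rho,1-(n-1)\rho]$. Hence each price $p(\btheta,e)=\max_k f(\theta_k)\bar v_k(e)$ is continuous in $\btheta$, being a finite maximum of continuous functions. Second, for a fixed bundle $S$, the quantities $p(\btheta,S)$, $\max_{e\in S}p(\btheta,e)$ and $\min_{e\in S}p(\btheta,e)$ are continuous. So $\widetilde p(\btheta,S)$ is continuous, and each prefix term $\max\{p(\btheta,S):S\subseteq X_k^+,|S|=\ell\}$ is a finite maximum of continuous functions. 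Consequently $\Phi(\btheta,\bX)$ and $\operatorname{Score}_{\btheta}(\bX)=\sum_k f(\theta_k)\bar v_k(X_k)+\varepsilon\Phi(\btheta,\bX)$ are continuous in $\btheta$ for every fixed equal-cardinality $\bX$.

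With continuity in hand, the optimality of $\bX^+$ is a closed condition. For every equal-cardinality $\bY$ we have $\operatorname{Score}_{\btheta^{(t)}}(\bX^+)\ge\operatorname{Score}_{\btheta^{(t)}}(\bY)$ for all $t$ in the subsequence. Letting $t\to\infty$ gives $\operatorname{Score}_{\btheta^*}(\bX^+)\ge\operatorname{Score}_{\btheta^*}(\bY)$. Since there are finitely many such $\bY$, this holds for all of them at once, so $\bX^+\in\mathcal A_\varepsilon(\btheta^*)$. Likewise, the finitely many inequalities $\widetilde p(\btheta^{(t)},X_i^+)/w_i\le\widetilde p(\btheta^{(t)},X_k^+)/w_k$ pass to the limit, giving $i\in\argmin_k\widetilde p(\btheta^*,X_k^+)/w_k$. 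Therefore $\btheta^*\in C_i$.

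The only subtle point is that the maximizer set $\mathcal A_\varepsilon(\btheta)$ is not continuous in $\btheta$: it can jump. The argument avoids this because a weak inequality between continuous functions is preserved under limits, which makes $\btheta\mapsto\mathcal A_\varepsilon(\btheta)$ upper hemicontinuous. Together with the pigeonhole step that fixes a single witness allocation, this is all that is needed. The one thing left to check carefully is that the bundle size $|\bar M|+n+1\ge1$, so every bundle is nonempty and $\widetilde p$ and the prefix maxima are well defined.
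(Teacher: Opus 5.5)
Your proof is correct and follows essentially the same route as the paper. Both arguments use finiteness to pass to a subsequence with a single fixed witness allocation $\bX^+$. Both then use continuity in $\btheta$ of $\operatorname{Score}$ and of the prices to carry the maximality of $\bX^+$ and the minimality of agent $i$ to the limit point. You additionally spell out details the paper leaves implicit: the continuity of $f$, $p$ and $\Phi$, and the passage to the limit of finitely many weak inequalities.
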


\begin{proof}
Let $\btheta^1,\btheta^2,\ldots\in C_i$ be a sequence converging to some $\btheta\in\Delta^{n-1}$.
For every $k\ge1$, we choose $\bX^k\in\mathcal A_\varepsilon(\btheta^k)$ for which $i$ attains the minimum in the definition of $C_i$.
Since there are only finitely many allocations, after passing to a subsequence, we may assume that $\bX^k=\bX^+$ for every $k$.
Since $\operatorname{Score}(\bX^+)$ is continuous in $\btheta$, $\bX^+$ also maximizes $\operatorname{Score}(\cdot)$ at $\btheta$.
Hence, $\bX^+\in\mathcal A_\varepsilon(\btheta)$.
Moreover, the price vector is continuous in $\btheta$, so $i$ remains a minimizer for $\bX^+$ at $\btheta$.
Thus, we have $\btheta\in C_i$, which implies that $C_i$ is closed.
\end{proof}

We now use these closed label sets to select the parameter $\btheta$.

\begin{lemma}\label{lem:kkm-selection}
There is an $\varepsilon_0>0$ such that, for every $\varepsilon\in(0,\varepsilon_0)$, we can choose $\btheta\in\Delta^{n-1}$ and $\bX^+\in\mathcal A_\varepsilon$ so that for any two agents $i,j\in N$,
$\widehat p(X_j^+)/{w_j} \le \widetilde p(X_i^+)/{w_i}$.
\end{lemma}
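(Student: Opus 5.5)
We apply the KKM lemma (Lemma~\ref{lem:kkm}) to the label sets $C_1,\ldots,C_n$, which are closed by Lemma~\ref{lem:kkm-label-closed}, and then use Lemma~\ref{lem:signed-interlacing} to merge the agent-wise witnesses into one allocation.

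\textbf{Covering condition.} Fix $\btheta\in\Delta^{n-1}$ and any $\bX^+\in\mathcal A_\varepsilon(\btheta)$; such a maximizer exists since there are finitely many equal-cardinality allocations. Some agent $i$ minimizes $\widetilde p(X_k^+)/w_k$, so $\btheta\in C_i$. We must show that every minimizer satisfies $\theta_i>0$. Suppose instead that $\theta_i=0$, so $f(\theta_i)=1/\rho$.
\begin{itemize}
\item \emph{Lower bound for $i$.} By \eqref{eq:price-welfare-gap}, for $\varepsilon$ small the allocation is almost price-maximal. Any item whose price is attained only by $i$ must essentially go to $i$. In particular $e_i$, priced $\delta/\rho$, must sit in $X_i^+$, because no other agent values it. Hence $\widetilde p(X_i^+)$ is at least roughly $\delta/\rho$, minus the contribution of any chores in $X_i^+$, each of price at least $-nV$.
\item \emph{Upper bound for some $k$.} Since $\sum_k\theta_k=1$, some $k$ has $\theta_k\ge 1/n$, so $f(\theta_k)\le n$. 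Every positive item in $X_k^+$ other than $e_k$ should be priced at most $f(\theta_k)\max\{V,\delta\}\le n\max\{V,\delta\}$; otherwise a higher-weight agent would take it and welfare would drop by more than $O(\varepsilon)$. Removing the minimum-priced item, $\widetilde p(X_k^+)$ is at most about $n(|\bar M|+n+1)\max\{V,\delta\}+nV$. Here $e_k$ counts only $n\delta$, and the dummies count $0$.
\item \emph{Conclusion.} Dividing by weights, the choice of $\rho$ gives $\widetilde p(X_i^+)/w_i>\widetilde p(X_k^+)/w_k$. This contradicts the minimality of $i$.
\end{itemize}
Bounding the number of chores in $X_i^+$ uses $|X_i^+|=|\bar M|+n+1$ and accounts for the $(|\bar M|+n-1)nV$ term in the definition of $\rho$. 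This is where the constant in $\rho$ must be matched exactly, and we take $\varepsilon_0$ small enough to absorb the $O(\varepsilon)$ slack from \eqref{eq:price-welfare-gap}.

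\textbf{Merging the witnesses.} By KKM there is $\btheta\in\bigcap_i C_i$, and for each $i$ an $\bX^{(i)}\in\mathcal A_\varepsilon(\btheta)$ in which $i$ minimizes $\widetilde p(X^{(i)}_k)/w_k$. Take any single $\bX^+\in\mathcal A_\varepsilon(\btheta)$, say $\bX^{(1)}$. For agents $i,j$, apply Lemma~\ref{lem:signed-interlacing} twice:
\[
\frac{\widehat p(X_j^+)}{w_j}\le\frac{\widetilde p(X^{(i)}_j)}{w_j}\quad\text{(pair $\bX^+$ and $\bX^{(i)}$ at agent $j$)},\qquad
\frac{\widetilde p(X^{(i)}_j)}{w_j}\ge\frac{\widetilde p(X^{(i)}_i)}{w_i}\quad\text{(minimality of $i$ in $\bX^{(i)}$)}.
\]
Chaining these gives $\widehat p(X_j^+)/w_j\le\widetilde p(X^{(i)}_i)/w_i$. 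To close the chain we need $\widetilde p(X^{(i)}_i)$ replaced by $\widetilde p(X_i^+)$, which requires $\widetilde p$ of agent $i$'s bundle to agree across all maximizers. I would obtain this from the equality case in the proof of Lemma~\ref{lem:signed-interlacing}. That proof shows that all prefix price sums of agent $i$'s bundle are equal across any two maximizers, so the multiset of prices of $X_i^+$ is invariant over $\mathcal A_\varepsilon$, and hence $\widetilde p(X_i^+)=\widetilde p(X^{(i)}_i)$. Extracting this invariance rigorously from that equality argument is the step I expect to be the main obstacle. The covering condition is the other delicate point, mainly in tracking the constants so that they match the chosen $\rho$.
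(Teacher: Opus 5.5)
Your covering argument follows the paper's route: the welfare slack is below $\delta/\rho$, so $e_i$ is forced to agent $i$, which gives $\widetilde p(X_i^+)\ge\delta/\rho-(|\bar M|+n-1)nV$, and you compare this against an upper bound for an agent with $\theta_k\ge 1/n$. One correction there: derive that upper bound from the total slack $\sum_{e\in X_k^+}(p(e)-f(\theta_k)\bar v_k(e))<1$, which is the $+1$ in the choice of $\rho$, rather than from an itemwise price cap. The genuine gap is the merging step, for two reasons.

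\emph{The chain does not close.} You have $\widehat p(X_j^+)/w_j\le\widetilde p(X^{(i)}_j)/w_j$ and $\widetilde p(X^{(i)}_i)/w_i\le\widetilde p(X^{(i)}_j)/w_j$. These bound both sides by the same quantity, so they give no comparison between $\widehat p(X_j^+)/w_j$ and $\widetilde p(X^{(i)}_i)/w_i$.

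\emph{The invariance is false.} The equality case in Lemma~\ref{lem:signed-interlacing} says only this: for each $\ell$, the sum of the $\ell$ largest prices in $X_i^+$ plus that in $Y_i^+$ equals the corresponding sum for $X_i^{(0)}$ and $X_i^{(1)}$. So the two sorted price sequences interlace; they need not coincide. That is why the lemma yields only $\widehat p(X_i^+)\le\widetilde p(Y_i^+)$. For a concrete failure, take two identical equal-weight agents at $\btheta=(1/2,1/2)$ and two goods with $p(a)>p(b)>0$. Consider giving $a$ to agent~1 and $b$ to agent~2, or the reverse, with private items to their owners and dummies filling up. Both allocations lie in $\mathcal A_\varepsilon(\btheta)$, since giving both goods to one agent lowers $\Phi$ by $p(b)$. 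Yet agent~1's $\widetilde p$ differs by $p(a)-p(b)$ between them. In fact $\btheta\in C_1\cap C_2$ here only because the minimizer switches between these two maximizers.

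The missing idea is a selection rule inside $\mathcal A_\varepsilon(\btheta)$. Choose $\bX^+\in\mathcal A_\varepsilon(\btheta)$ maximizing $\min_k\widetilde p(X_k^+)/w_k$. For each $j$, use the witness $\bY^+\in\mathcal A_\varepsilon(\btheta)$ in which $j$ is a minimizer. Note it is $j$, the agent whose $\widehat p$ you are bounding, not $i$. Then
\[
\frac{\widehat p(X_j^+)}{w_j}\le\frac{\widetilde p(Y_j^+)}{w_j}=\min_{k\in N}\frac{\widetilde p(Y_k^+)}{w_k}\le\min_{k\in N}\frac{\widetilde p(X_k^+)}{w_k}\le\frac{\widetilde p(X_i^+)}{w_i}.
\]
This uses Lemma~\ref{lem:signed-interlacing} once, and the max-min choice replaces the invariance. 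With an arbitrary $\bX^+$, even using the correct witness, you would still need $\widetilde p(Y_i^+)\le\widetilde p(X_i^+)$, which Lemma~\ref{lem:signed-interlacing} does not provide.
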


\begin{proof}
Since the upper bound in \eqref{eq:price-welfare-gap} is $O(\varepsilon)$ uniformly, there is an $\varepsilon_0>0$ such that it is smaller than $\min\{1,\delta/\rho\}$ for every $\varepsilon\in(0,\varepsilon_0)$.
Fix any $\varepsilon\in(0,\varepsilon_0)$ and consider the corresponding sets $C_i$ defined above.
By \cref{lem:kkm-label-closed}, every $C_i$ is closed.
In the following, we claim that
$$
 \Delta^{n-1}(N')\subseteq\bigcup_{i\in N'}C_i
 \quad\text{for every nonempty }N'\subseteq N,
$$
where $\Delta^{n-1}(N')=\{\btheta\in\Delta^{n-1}:\theta_i=0\text{ for }i\notin N'\}$.
Fix $\boldsymbol{\theta} \in \Delta^{n-1}(N')$ and let $\mathbf{X}^+$ be an allocation that maximizes $\operatorname{Score}(\cdot)$.
Since $\sum_{i \in N'} \theta_i = 1$, there exists some agent $j \in N'$ with $\theta_j \ge 1/n$, which implies $f(\theta_j) \le n$. 
Consequently, every objective chore $e$ receives a price that satisfies  $p(e) \ge f(\theta_j) \cdot \bar{v}_j(e) \ge -nV$, whereas each remaining non-private item $e$ have non-negative prices $p(e) \ge f(\theta_i) \cdot \bar{v}_i(e) \ge 0$.
On the other hand, for $i\notin N'$, we have $\theta_i = 0 $ and $f(\theta_i)=1/\rho$. 

Note that the welfare loss in \eqref{eq:price-welfare-gap} is smaller than $\delta/\rho$, whereas assigning $e_i$ to any agent other than $i$ contributes exactly $\delta/\rho$ to that loss. 
Hence $e_i$ is assigned to $i$. 
Since its price is $\delta/\rho$ and $X_i^+$ contains a zero dummy item, the private item $e_i$ cannot be the minimum-price item. 
Consequently, every remaining item, with the possible exception of the minimum-price item, has a price bounded below by $-nV$.
It follows that 
$$
\widetilde p(X_i^+)/w_i\ge \bigl(\delta/\rho-(|\bar M|+n-1)nV\bigr)/w_i\ge \delta/\rho-(|\bar M|+n-1)nV,$$
where the last inequality holds by $w_i\le1$.

For the upper bound for $j$, the part of the welfare loss in \eqref{eq:price-welfare-gap} contributed by $X_j^+$ is less than $1$, so $\sum_{e\in X_j^+}\bigl(p(e)-f(\theta_j) \cdot \bar v_j(e)\bigr)<1$.
The bundle $X_j^+$ has exactly $|\bar M|+n+1$ items, each worth at most $\max\{V,\delta\}$ to $j$, and hence $\bar v_j(X_j^+)\le (|\bar M|+n+1)\max\{V,\delta\}$. Moreover, $\min_{e\in X_j^+}p(e)\ge-nV$: for an objective chore, the dual constraint for $j$ gives $p(e)\ge f(\theta_j)\bar v_j(e)\ge-nV$, while every other item has nonnegative price. Since $f(\theta_j)\le n$, these observations give 
$$
p(X_j^+)\le n(|\bar M|+n+1) \cdot \max\{V,\delta\}+1.
$$ 
After subtracting the minimum item price and using $w_j\ge w_{\min}$, we obtain 
$$
\widetilde p(X_j^+)/w_j\le \bigl(n(|\bar M|+n+1)\max\{V,\delta\}+nV+1\bigr)/w_{\min}.
$$
By the choice of $\rho$, the lower bound above exceeds this upper bound. Thus no agent outside $N'$ is a minimum label, proving the claimed face condition.
In particular, every $\btheta\in\Delta^{n-1}$ belongs to some $C_i$ with $\theta_i>0$.
By \cref{lem:kkm}, there is a point $\btheta\in\bigcap_iC_i$.
Fix such a point $\btheta$. For the remainder of the proof, write $\mathbf p=p(\btheta,\cdot)$ and $\mathcal A_\varepsilon=\mathcal A_\varepsilon(\btheta)$.

Now we choose $\bX^+\in\mathcal A_\varepsilon$ that maximizes
$\min_{k\in N}\{\widetilde p(X_k^+)/w_k\}$ over all allocations in $\mathcal A_\varepsilon$.
Fix any $j\in N$. Since $\btheta\in C_j$, there exists an allocation $\bY^+\in\mathcal A_\varepsilon$ such that $j\in\argmin_{k\in N}\{\widetilde p(Y_k^+)/w_k\}$.
Applying \cref{lem:signed-interlacing}, it follows that for every $i \in N$, 
\begin{equation*}
 \frac{\widehat p(X_j^+)}{w_j}
 \le \frac{\widetilde p(Y_j^+)}{w_j}
 \le \min_{k\in N}\left\{\frac{\widetilde p(X_k^+)}{w_k}\right\}
 \le \frac{\widetilde p(X_i^+)}{w_i}. \qedhere
\end{equation*}
\end{proof}

\subsection{The Limiting Argument}
In this subsection, we assemble the established lemmas to complete the proof of Theorem~\ref{thm:wef1t-fpo} via a limiting argument.

We first construct an allocation that satisfies WEF1T and fPO of the canonical instance $\mathcal I^+$.
Once we obtain an allocation $\bX^+$ that is $\WEFoneT$ and fPO and assigns every private item $e_i$ to agent $i$, \cref{lem:canonical-projection} yields a $\WEFoneT$ and fPO allocation $\bX'$ of the reduced instance $\mathcal I'$.
We next show that $\bX'$ is non-wasteful.
We can then assign every item in $M^0$ to an agent who values it at zero and apply \cref{lem:reduction-lifting} to obtain the desired allocation of the original instance $\mathcal I$.

Let $\varepsilon_1,\varepsilon_2,\ldots\in(0,\varepsilon_0)$ be a sequence converging to $0$, where $\varepsilon_0$ is given by \cref{lem:kkm-selection}.
For every $t$, that lemma provides a point $\btheta^t\in\Delta^{n-1}$ and an allocation $\bX^t\in\mathcal A_{\varepsilon_t}(\btheta^t)$ satisfying its price comparison; write $\mathbf p^t=p(\btheta^t,\cdot)$ for the associated price vector.
By finiteness of the set of allocations and compactness of $\Delta^{n-1}$, after passing to a subsequence, we may assume that $\bX^t=\bX^+$ for every $t$ and that $\btheta^t\to\btheta^*$.
Set $\mathbf p^*=p(\btheta^*,\cdot)$.

Because $\bX^+\in\mathcal A_{\varepsilon_t}(\btheta^t)$ for every $t$, the uniform bound in \eqref{eq:price-welfare-gap} gives
$$
0\le \sum_{e\in M^+}p^t(e)-\sum_{i\in N}f(\theta_i^t) \cdot\bar v_i(X_i^+)\le O(\varepsilon_t).
$$
For every item $e\in X_i^+$, dual feasibility gives $p^t(e)\ge f(\theta_i^t)\cdot \bar v_i(e)$, and the sum of these nonnegative slacks is the expression above.
By the continuity of the price function, each assigned edge becomes tight in the limit, i.e., 
$p^*(e)=f(\theta_i^*) \cdot \bar v_i(e)$ for every $i\in N$ and $e\in X_i^+$.
Thus, $\bX^+$ is primal feasible for $\textbf{LP}(\btheta^*,\rho)$, $\mathbf p^*$ is dual feasible for $\textbf{Dual-LP}(\btheta^*,\rho)$, and their objective values agree.
By strong duality, $\bX^+$ maximizes $\sum_i f(\theta_i^*) \cdot \bar v_i(X_i^+)$ over all fractional allocations.
Since all weights $f(\theta_i^*)$ are strictly positive, $\bX^+$ is fPO for $\mathcal I^+$.

The price comparisons guaranteed by \cref{lem:kkm-selection} also pass to the limit: for every $i,j\in N$, $\widehat p^*(X_j^+)/w_j\le \widetilde p^*(X_i^+)/w_i$.
Fix any two agents $i,j\in N$.
Let $c\in X_i^+\cup\{\varnothing\}$ minimize $\bar v_i$ and let $g\in X_j^+\cup\{\varnothing\}$ maximize $\bar v_i$.
Thus, $\bar v_i(c)\le0\le \bar v_i(g)$.
The tightness of the assigned edges in $X_i^+$ gives $\widetilde p^*(X_i^+)=f(\theta_i^*)\cdot\bar v_i(X_i^+\setminus\{c\})$, while dual feasibility on $X_j^+$ gives $f(\theta_i^*)\cdot\bar v_i(X_j^+\setminus\{g\})\le\widehat p^*(X_j^+)$.
Indeed, for the second inequality, remove an item of maximum price from $X_j^+$; its value to $i$ is no larger than that of $g$, and dual feasibility applies to every remaining item.
Combining these two bounds with the limiting comparison yields
\begin{equation}\label{eq:two-deletions}
 \frac{\bar v_i(X_i^+\setminus\{c\})}{w_i}
 \ge
 \frac{\bar v_i(X_j^+\setminus\{g\})}{w_j}.
\end{equation}
In what follows, we establish that $\mathbf{X}^+$ satisfies $\WEFoneT$ via case analysis.
\begin{itemize}
    \item \textbf{Case 1: $\bar{v}_i(c) = \bar{v}_i(g) = 0$.} Note that this naturally includes the case $c = g = \emptyset$. Here, Inequality~\eqref{eq:two-deletions} directly implies that agent $i$ ensures WEF towards agent $j$.
    
    \item \textbf{Case 2: $\bar{v}_i(g) > 0$ and $\bar{v}_i(g) \ge -\bar{v}_i(c)$.} 
    In this case, it holds that
    $$
      \bar{v}_i(X_i^+ \cup \{g\}) = \bar{v}_i(X_i^+) + \bar{v}_i(g)
      \ge \bar{v}_i(X_i^+) - \bar{v}_i(c) = \bar{v}_i(X_i^+ \setminus \{c\}).
    $$
    Combining this with \eqref{eq:two-deletions} yields
    $\bar{v}_i(X_i^+ \cup \{g\}) / {w_i} \ge {\bar{v}_i(X_j^+ \setminus \{g\})} / {w_j}$.
    
    \item \textbf{Case 3:} $-\bar{v}_i(c) > \bar{v}_i(g)$. This condition implies that $c \neq \emptyset$ and $\bar{v}_i(c) < 0$. Consequently,
    $$
      \bar{v}_i(X_j^+ \cup \{c\}) = \bar{v}_i(X_j^+) + \bar{v}_i(c)
      < \bar{v}_i(X_j^+) - \bar{v}_i(g) = \bar{v}_i(X_j^+ \setminus \{g\}).
    $$
    Together with \eqref{eq:two-deletions}, it follows that
    ${\bar{v}_i(X_i^+ \setminus \{c\})}/{w_i} \ge {\bar{v}_i(X_j^+ \cup \{c\})}/{w_j}$.
\end{itemize}
Hence, we conclude that $\mathbf{X}^+$ guarantees $\mathrm{WEF1T}$ for the canonical instance $\mathcal{I}^+$.

The tightness condition also implies that every private item $e_i$ is assigned to agent $i$, since $i$ uniquely maximizes $f(\theta_k^*)\cdot\bar v_k(e_i)$ over all $k\in N$.
Let $\bX'$ be obtained from $\bX^+$ by removing all private items and dummy items.
Since $\bX^+$ is $\WEFoneT$ and fPO and assigns each $e_i$ to agent $i$, \cref{lem:canonical-projection} implies that $\bX'$ is $\WEFoneT$ and fPO for $\mathcal I'$.
Moreover, consider any $e\in\bar M$ with $\max_{k\in N}\{v_k(e)\}>0$.
Then $\max_{k\in N}\{f(\theta_k^*)\bar v_k(e)\}>0$; the tightness of the edge assigning $e$ therefore implies that its owner values $e$ positively under both $\bar v$ and $v$.
Hence $\bX'$ is non-wasteful.
Finally, assign every item in $M^0$ to an agent who values it at zero.
Applying \cref{lem:reduction-lifting} to the resulting allocation of $\mathcal I$ proves that it is $\WEFoneT$ and fPO.

\section{Best-of-both-worlds Guarantee for Weighted Envy-freeness}\label{sec:bobw}
In this section, we shift our attention to randomized allocations and investigate the existence of Bobw guarantees. 
Specifically, we focus on ex-ante WEF and explore its compatible ex-post guarantees. 
Hoefer et al.~\cite{journals/jair/HoeferSV24} showed that, even in the goods-only setting, ex-ante WEF and ex-post $\mathrm{WEF}(x, y)$ are incompatible whenever $x + y < 2$ for $x, y \in [0, 1]$.
Since the goods setting constitutes a special case of mixed manna, this impossibility result extends immediately to our setting.
\begin{lemma} [\cite{journals/jair/HoeferSV24}]\label{lemma: bobw_impossibility}
    For any $x, y \in [0, 1]$ with $x +y < 2$, ex-ante WEF and ex-post WEF($x, y$) are incompatible.
\end{lemma}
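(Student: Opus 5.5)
The plan is to reduce directly to the goods-only impossibility of Hoefer et al.\ and then check that the reduction is faithful. An instance in which every item is valued nonnegatively by every agent is a mixed-manna instance. For such an instance, the definition of $\mathrm{WEF}(x,y)$ in this paper coincides with the goods-only notion used by Hoefer et al.\ (and Chakraborty et al.), because condition (iii) requires an item $e\in X_i$ with $v_i(e)<0$ and so can never be triggered. Ex-ante WEF is likewise defined identically: the expected utilities $\mathbb{E}[v_i(X_i)]/w_i\ge \mathbb{E}[v_i(X_j)]/w_j$ do not depend on the sign structure. A lottery that is ex-ante WEF and ex-post $\mathrm{WEF}(x,y)$ for the goods instance in our sense is therefore exactly such a lottery in their sense.

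The key steps, in order, are as follows.
\begin{enumerate}
\item Fix $x,y\in[0,1]$ with $x+y<2$.
\item Take the weighted goods-only instance from Hoefer et al.\ witnessing incompatibility for this pair $(x,y)$.
\item Observe that it is a valid instance of our model. The weights are positive and can be renormalized to sum to $1$, which changes no inequality since every condition is homogeneous in $\bw$. The valuations are additive.
\item Verify the definitional equivalence above, including the strictness conventions: condition (ii) uses $v_i(e)>0$, and items of value $0$ cannot help in either version.
\item Conclude that no lottery over allocations of this instance can be ex-ante WEF and ex-post $\mathrm{WEF}(x,y)$ in our setting.
\end{enumerate}

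The only point needing care is step 4. The transfer relaxation in Hoefer et al.\ must be parametrized in the same way as here, namely $(v_i(X_i)+y\,v_i(e))/w_i\ge (v_i(X_j)-x\,v_i(e))/w_j$. If their convention swaps the roles of $x$ and $y$, the hypothesis $x+y<2$ is symmetric, so the statement is unaffected. If I wanted the argument to be self-contained, I would instead recall their concrete counterexample, which uses two agents with very unequal weights and identical valuations on a single good or a few goods. I would then directly check that ex-ante WEF forces each realization in which the low-weight agent's envy toward the heavy agent exceeds what a fractional transfer with $x+y<2$ can cover to occur with positive probability.
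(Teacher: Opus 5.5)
Your proposal is correct and takes the same route as the paper, which simply cites Hoefer et al.'s goods-only impossibility and notes that goods-only instances are a special case of mixed manna; your check that condition (iii) is vacuous under nonnegative valuations, and that zero-valued items never help in (ii), spells out that embedding explicitly. The optional self-contained sketch is not needed, and its guess at the counterexample is off: with a single good and $x=1$, moving that good always satisfies condition (ii), and for $x<1$ the violation that holds for every $y\le 1$ is the heavy agent envying the light agent who received the good (not the reverse), so rely on the citation rather than on that sketch.
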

Given the impossibility result introduced in Lemma~\ref{lemma: bobw_impossibility}, a natural question is whether there exists a randomized allocation that guarantees ex-ante $\mathrm{WEF}$ and ex-post $\mathrm{WEF1T}$.

A finite lottery over allocations is a family $\mathcal L=\{(p^t,\bX^t):t\in[T]\}$, where $p^t>0$ is the probability assigned to allocation $\bX^t$ and $\sum_{t\in[T]}p^t=1$.
The support of $\mathcal L$ is the set $\{\bX^t:t\in[T]\}$ of allocations with positive probability.
Let $\mathbf{X} \sim \mathcal{L}$ be a random allocation drawn from a lottery $\mathcal{L}$. 
The distribution $\mathcal{L}$ is \emph{ex-ante WEF} if, for all $i, j \in N$,
\begin{equation*}
\mathbb{E}_{\bX\sim\mathcal L}\left[\frac{v_i(X_i)}{w_i}\right]
\ge
\mathbb{E}_{\bX\sim\mathcal L}\left[\frac{v_i(X_j)}{w_j}\right].
\end{equation*}
When $\mathcal{L}$ is a finite lottery over allocations $\{ \mathbf{X}^t \}_{t \in [T]}$ with probabilities $p^t$, 
this is equivalently expressed as
\begin{equation*}
\sum_{t\in[T]}p^t\frac{v_i(X_i^t)}{w_i}
\ge
\sum_{t\in[T]}p^t\frac{v_i(X_j^t)}{w_j}.
\end{equation*}
Furthermore, $\mathcal{L}$ is \emph{ex-post $\mathrm{WEF1T}$} if every allocation in its support satisfies $\mathrm{WEF1T}$. 
For any coalition $S \subseteq N$, we write $w(S) = \sum_{i \in S} w_i$.

Our main finding is formalized in the following theorem.
\begin{theorem}\label{thm:bobw}
Given any mixed manna instance, there exists a lottery that is ex-ante WEF and ex-post $\WEFoneT$.
\end{theorem}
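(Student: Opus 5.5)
We will follow the strategy outlined in the introduction. First, apply \cref{alg:metagood-construction} to obtain $(\mathcal G,\mathcal D)$. By \cref{lem:construction}, every $G\in\mathcal G$ is a meta-good over $N$. Moreover, every residual chore $e\in\mathcal D$ satisfies $v_i(e)+v_i(\mathcal G')<0$ for every $i$ and every subfamily $\mathcal G'\subseteq\mathcal G$.

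\emph{Ex-ante part.} We target a lottery whose marginals are entitlement-proportional. Concretely, every meta-good and every residual chore should be received by agent $i$ with probability exactly $w_i$. Then $\mathbb E[v_i(X_j)]/w_j=v_i(M)$ for every $j$, so ex-ante WEF holds with equality.

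\emph{Chore lottery.} For the residual chores, let $d=|\mathcal D|$. We start from the DSE-based lottery of Wu et al.\ for chores, with each agent expected to receive $w_i d$ chores and realized counts in $\{\lfloor w_i d\rfloor,\lceil w_i d\rceil\}$. We then take a maximum-entropy decomposition of the fractional assignment. This gives negative-correlation type bounds on the event that an agent receives the ``extra'' chore, namely
\[
\Pr\bigl[\text{all of } S \text{ receive } \lceil w_i d\rceil\bigr]\le \prod_{i\in S}\Pr\bigl[\,i \text{ receives } \lceil w_i d\rceil\,\bigr]
\]
for coalitions $S\subseteq N$.

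\emph{Meta-good lottery.} Conditioned on each realized chore allocation, we build a randomized allocation of the meta-goods in which agents holding more normalized chores are favoured. Each realization should be ex-post WEF1T at the meta level, for example by running a weighted picking order in which agents with an extra chore pick first. The conditional marginals must average to $w_i$ for every meta-good. The existence of such conditional distributions reduces to a fractional Hall condition: for every coalition $S$, the demand $w(S)$ must be coverable given which agents are favoured. We verify this using the probability bounds above.

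\emph{Ex-post part.} Fix a realization and agents $i,j$. Agent $i$'s chore count, normalized by $w_i$, is within one chore of $j$'s. Strong negativity implies that a single residual chore outweighs all of $i$'s nonnegative meta-goods. Hence either transferring one chore from $i$ to $j$, or transferring one meta-good from $j$ to $i$ (obtained from the meta-level picking guarantee), suffices. By condition (ii) of \cref{def:metagood}, a meta-level transfer of $G$ can be replaced by a single item $e\in G$ with $v_i(e)\ge 0$ and $v_i(G\setminus\{e\})\le 0$. Moving $e$ raises $i$'s side and lowers $j$'s side at least as much as moving $G$, which yields WEF1T in the original instance. Finiteness of the lottery follows from the finite support of each conditional decomposition (Birkhoff--von Neumann type).

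\emph{Main obstacle.} The hardest step is the coupling in the meta-good lottery: verifying the fractional Hall condition so that the conditional meta-good allocations keep exact $w_i$ marginals while favouring agents with extra chores. My first attempt would be to bound $\Pr[\text{extra chore set}\supseteq S]$ via the entropy-maximizing product structure, and then check Hall's inequality coalition by coalition.
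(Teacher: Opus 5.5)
Your ex-ante target is the step that fails. Requiring every meta-good to reach each agent $i$ with probability exactly $w_i>0$ forces agents, with positive probability, to own meta-goods they value negatively. For such an agent a meta-good can be a bundle of many chores that no single-item transfer offsets. For instance, take $w_1=w_2=\tfrac12$ and items $g_1,\dots,g_k,e$ with $v_1(g_t)=1$, $v_1(e)=-(k-\tfrac12)$ and $v_2(g_t)=v_2(e)=-10$. Algorithm~\ref{alg:metagood-construction} merges everything into one meta-good $G$ with $v_1(G)=\tfrac12$, $v_2(G)=-10(k+1)$ and $\mathcal D=\varnothing$. In any outcome where agent~2 receives $G$, moving one of her items to agent~1 compares $-10k$ with $-10$, so $\WEFoneT$ fails for $k\ge2$. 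Your later steps already assume otherwise: a picking order never hands a meta-good to an agent who dislikes it, and your ex-post step uses that each agent's meta-goods are nonnegative to her. The paper instead gives each $G$ with $P_G=\{i:v_i(G)>0\}\neq\varnothing$ only to agents in $P_G$, with marginals $w_j/w(P_G)$. Ex-ante WEF then holds observer by observer. If $i\in P_G$, $G$ adds $v_i(G)/w(P_G)$ to her own normalized expectation and to that of each target in $P_G$, and $0$ to other targets. If $i\notin P_G$, it adds $0$ to her side and a nonpositive amount to every target. Similarly, the residual chores use the non-uniform WEF DSE marginals, so your ``exactly $w_i$'' for chores contradicts your own choice of DSE.

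Even with the corrected target, the coupling must enforce something sharper than ``normalized counts within one chore''. For $q_i,q_j>0$, the ex-post step needs $(q_i-1)/w_i\le q_j/w_j$ whenever $j$ owns a meta-good positive to $i$, and rounding alone does not give this. Take $w_i=0.7$, $w_j=0.3$ and five chores worth $-1$ to both. Then the outcome $q_i=4$, $q_j=1$ has probability $\tfrac12$, yet $3/0.7>1/0.3$; if $j$ also holds a good worth $0.99$ to $i$, $\WEFoneT$ fails from $i$ to $j$. The paper enforces the inequality by drawing owners from the safe set $\mathcal S_T(B)$. Its fractional Hall condition $\Pr[\mathcal S_T(B)\cap J\neq\varnothing]\ge w(J)/w(T)$ rests on \eqref{eq:bobw-paired-one} and \eqref{eq:bobw-paired-two}: an upper bound on the event ``$B$ misses $J$ but meets agents with smaller $f_k/w_k$'' and a lower bound on hitting $J$. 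Both come from an alternating-path orbit analysis of two independent max-entropy samples. The product bound on joint inclusion you propose is unjustified and is not the shape the Hall check needs. Meta-goods whose positive set contains only agents with $\lfloor|\mathcal D|w_k\rfloor=0$ also need separate treatment via the goods-only lottery, since their safe set can be empty. Finally, ``one chore outweighs the meta-goods'' does not close the case $q_i,q_j>0$, because chores differ in value. The paper shifts $i$'s chore values by $h$, the total positive value to $i$ of $j$'s meta-goods; by \cref{lem:construction} the shifted values stay negative, and they keep the same order. It then applies \eqref{eq:bobw-ordinal-chore} to the shifted values and absorbs the $h$ terms with the count inequality.
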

The remainder of this section is devoted to the proof of Theorem~\ref{thm:bobw}.

\subsection{The Construction of Lottery}

The proof reuses the meta-goods constructed by \cref{alg:metagood-construction}.
Let $(\mathcal G,\mathcal D)$ be its output, and write $P_G=\{i\in N:v_i(G)>0\}$ for every $G\in\mathcal G$.
Let $\mathcal G^0=\{G\in\mathcal G:P_G=\varnothing\}$ and $\mathcal G^+=\mathcal G\setminus\mathcal G^0$.
For each $G\in\mathcal G^0$, fix an agent who values $G$ at zero and assign $G$ to this agent deterministically.
Such a meta-good has zero value to its owner and nonpositive value to every other agent.
We henceforth construct the lottery only for the meta-goods in $\mathcal G^+$.
Moreover, for every $i\in P_G$, condition (ii) in \cref{def:metagood} supplies an item $g_i(G)\in G$ such that
\begin{equation}\label{eq:bobw-witness}
 v_i(g_i(G))\ge v_i(G)>0
 \qquad\text{and}\qquad
 v_i(G\setminus\{g_i(G)\})\le0.
\end{equation}

\begin{lemma}\label{lem:bobw-subjective-goods}
Let $N'\subseteq N$ be nonempty, and let $\mathcal G'\subseteq\mathcal G^+$ be a subfamily such that every $G\in\mathcal G'$ is strictly positive to some agent in $N'$.
There is a finite lottery over allocations of $\mathcal G'$ among $N'$ that is ex-ante WEF and ex-post $\WEFoneT$.
Moreover, every $G\in\mathcal G'$ is assigned only to an agent who values it strictly positively.
\end{lemma}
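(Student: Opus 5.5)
We reduce the lemma to the known goods-only best-of-both-worlds results of Aziz et al.~\cite{conf/atal/0001GM23} and Hoefer et al.~\cite{journals/jair/HoeferSV24}, applied to an auxiliary goods instance whose items are the meta-goods in $\mathcal G'$. For each $i\in N'$ and $G\in\mathcal G'$, define the truncated valuation $u_i(G)=\max\{v_i(G),0\}$, so that $u_i(G)>0$ exactly when $i\in P_G$. Every $G\in\mathcal G'$ has $u_i(G)>0$ for some $i\in N'$ by assumption. Thus $(N',\mathcal G',(w_i)_{i\in N'},(u_i))$ is a weighted instance of indivisible goods; entitlements can be renormalized over $N'$, since WEF and WEF1T are invariant under scaling all weights. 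These papers give a finite lottery over integral allocations of $\mathcal G'$ that is ex-ante WEF and ex-post WEF1T with respect to $u$. We need a version in which every good goes to an agent with $u_i(G)>0$. To get it, we first run their construction (e.g.\ the fractional eating/PS-type allocation with weighted rates followed by a Birkhoff-style decomposition, or the weighted picking-sequence lottery) restricted to positive-valued items. Alternatively, we post-process: any item with $u_i(G)=0$ assigned to $i$ is worthless to its owner. Reassigning it to some agent in $P_G$ could break WEF1T, so instead we require the underlying construction to be non-wasteful. The Hoefer et al.\ lottery via weighted picking sequences (randomized over the tie order) only ever picks items the picker values positively while such items exist. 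We adapt it by removing an agent from the sequence once no remaining item is positive for her, exactly as in Algorithm~\ref{alg:forward-picking}.

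The second step lifts the guarantees from $u$ to $v$. The ownership property gives $v_i(Y_i)=u_i(Y_i)$ on every realized bundle, because each owned meta-good has positive value to its owner. For any other bundle, $v_i(Y_j)\le u_i(Y_j)$. Ex-ante WEF therefore transfers directly: $\mathbb E[v_i(Y_i)]/w_i=\mathbb E[u_i(Y_i)]/w_i\ge \mathbb E[u_i(Y_j)]/w_j\ge\mathbb E[v_i(Y_j)]/w_j$. For ex-post WEF1T, the goods-only guarantee provides, whenever $i$ envies $j$ under $u$, a meta-good $G\in Y_j$ with $u_i(G)>0$ and
\[
u_i(Y_i)+u_i(G)\over w_i \;\ge\; u_i(Y_j)-u_i(G)\over w_j .
\]
We read this as $\bigl(u_i(Y_i)+u_i(G)\bigr)/w_i\ge\bigl(u_i(Y_j)-u_i(G)\bigr)/w_j$. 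Here $i\in P_G$ and $u_i(G)=v_i(G)$, so the same inequality holds with $v$ on the left and, since $v\le u$, on the right. If $i$ does not envy $j$ under $u$, it does not envy under $v$ either.

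The third step passes to original items, which is where \eqref{eq:bobw-witness} is used. Set $g=g_i(G)$. Then $v_i(X_i\cup\{g\})\ge v_i(X_i)+v_i(G)$ because $v_i(g)\ge v_i(G)$, and $v_i(X_j\setminus\{g\})=v_i(X_j\setminus G)+v_i(G\setminus\{g\})\le v_i(X_j)-v_i(G)$. Hence $g$ is a single original item with $v_i(g)>0$ that witnesses condition (ii) of WEF1T.

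I expect the main obstacle to be certifying the non-wastefulness property inside the cited goods-only lottery while keeping ex-ante WEF, since the cited results are stated for arbitrary nonnegative valuations. If adapting their proof is delicate, a fallback is to perturb $u$: give each agent value $\eta>0$ for items with $u_i=0$ only if needed. The cleaner route, which I would try first, is a direct argument. Items with $u_i(G)=0$ contribute nothing to agent $i$'s ex-ante comparison. The weighted eating process in which agents only eat positively-valued items is still ex-ante WEF, because agent $i$ envies nobody over the items she eats, and items she does not eat are worth zero to her. The integral decomposition of the eating process, as in Aziz et al., then preserves both the ex-post WEF1T guarantee and the support of each item.
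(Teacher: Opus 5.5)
\textbf{There is a genuine gap, in the step you flag yourself as the main obstacle.} Your lifting steps are correct and match the paper. Once every realized owner values her meta-goods strictly positively, own values under $u$ and $v$ coincide and target values only drop, so ex-ante WEF and ex-post WEF1T pass from $u$ to $v$. Your conversion of a meta-level witness $G$ into $g_i(G)$ via \eqref{eq:bobw-witness} is also fine; the paper does it later, in the proof of Theorem~\ref{thm:bobw}. None of your three routes closes the gap:
\begin{itemize}
\item \emph{Picking-sequence route.} The paper uses the lottery of Hoefer et al.~\cite{journals/jair/HoeferSV24} as an exact hierarchy-and-quota decomposition of the different-speeds eating (DSE) matrix, not as a tie-randomized weighted picking sequence. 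A randomized picking sequence would need its own ex-ante WEF proof, which you do not give.
\item \emph{$\eta$-perturbation.} It points the wrong way: it makes agents \emph{more} willing to take objects they do not value positively.
\item \emph{``Direct'' route.} An eating process in which an agent leaves once nothing positive remains does give a WEF matrix $x$ with $x_{i,G}=0$ whenever $v_i(G)\le 0$. The correct reason is that at every moment $j$ eats an object that $i$ values at most as much as her own current object, and after $i$ leaves everything remaining is worth zero to her; it is not that items $i$ does not eat are worthless to her. Any exact decomposition of $x$ then automatically respects these zero cells. But you only assert that ``the integral decomposition'' preserves ex-post WEF1T. The cited results prove this for the standard DSE, in which every agent eats at rate $w_i$ until all objects are exhausted. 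Under truncated values that process does feed zero-valued objects to agents, and its quotas and one-transfer comparison rely on that synchronized schedule. A process with early exits is not an instance of it, so the citation does not apply as stated, and you give no replacement argument.
\end{itemize}

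\textbf{The missing idea} is to turn the early exits into a genuine DSE run on an augmented goods instance, which is how the paper proceeds.
\begin{itemize}
\item Let $\delta=\min_{i\in N'}w_i$. Add more than $|\mathcal G'|(w(N')-\delta)/\delta$ dummy objects of value zero to everyone, ranked above every real object of zero truncated value.
\item While a real object remains, some agent values it positively and so eats a real object at speed at least $\delta$. Hence the real mass is exhausted by time $|\mathcal G'|/\delta$.
\item In that time at most $|\mathcal G'|(w(N')-\delta)/\delta$ units of dummies are eaten. So a dummy is always available, and no agent ever eats a real object she values nonpositively.
\end{itemize}
The cited DSE decomposition then applies verbatim, giving ex-ante WEF and ex-post WEF1T under $u$, and its exactness preserves the zero cells. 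Discarding the dummies and then applying your lifting argument completes the proof.
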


\begin{proof}
Let $\delta=\min_{i\in N'}\{w_i\}$.
For the purpose of applying the goods decomposition, define the truncated valuations by $v_i^+(G)=\max\{v_i(G),0\}$.
We further add more than $|\mathcal G'|(w(N')-\delta)/\delta$ dummy items that have value zero to all agents in $N'$, and we rank every dummy item above every real object with zero truncated value.
We use the different-speeds eating (DSE) procedure on the real and dummy objects under the truncated valuations.
At every time, agent $i$ continuously consumes an available object maximizing $v_i^+$ at speed $w_i$ and moves to a new favorite object whenever the current object is exhausted, with ties broken consistently.

The dummy objects ensure that this truncation does not change the DSE trajectory, since an agent prefers a dummy to every real object with nonpositive original value.
Let $x_{i,G}$ denote the fraction of object $G$ consumed by agent $i$ under this procedure.
While a real object remains, at least one agent consumes a real object at speed at least $\delta$.
Thus all real mass is exhausted within time $|\mathcal G'|/\delta$.
During this time, the total consumed mass is at most $w(N')|\mathcal G'|/\delta$, so at most $|\mathcal G'|(w(N')-\delta)/\delta$ dummy objects can be consumed.
A dummy therefore remains until every real object is exhausted, and the DSE matrix $x$ satisfies $x_{i,G}=0$ whenever the original value $v_i(G)$ is nonpositive.

Apply the exact hierarchy-and-quota decomposition for DSE on goods from~\cite{journals/jair/HoeferSV24}.
It is ex-ante WEF, every outcome is $\WEFoneT$, and its singleton quotas preserve the zero cells of $x$.
Consequently, every real object is assigned only on a strictly positive edge.
Remove the dummies and restore the negative values.
No agent receives an object whose value was changed, so every own value is unchanged, and every target value can only decrease.
Both the expected and outcome-wise inequalities are therefore preserved.
\end{proof}

We first consider the case in which no residual chore remains.
If $\mathcal D=\varnothing$, apply \cref{lem:bobw-subjective-goods} to $\mathcal G^+$ over $N$, add the fixed assignments of $\mathcal G^0$, and expand every meta-good to its constituent items.
This already gives the asserted lottery.
We now consider the case in which $\mathcal D\neq\varnothing$.
For the residual chores, we start from the DSE-based chores-only BoBW lottery of Theorem 4.13 in Wu et al.~\cite{journals/ai/WuZZ25}.
Their result supplies the ex-ante WEF and ex-post $\WEFoneT$ guarantees for chores.
Our mixed-instance proof additionally needs to couple the residual-chore lottery with meta-good ownership, so we retain the structural properties of the DSE decomposition that enable this coupling.
Every residual chore has negative value to every agent.
For residual chores, we use the analogous value version of DSE: agent $i$ continuously consumes an available chore with largest value to $i$ at speed $w_i$ and moves to a new largest-value chore whenever the current chore is exhausted, with ties broken consistently.
Let $x_{i,e}$ be the resulting fraction.
Then
\begin{equation*}
 \sum_{i\in N}x_{i,e}=1\quad\text{for all }e\in\mathcal D,
 \qquad
 \sum_{e\in\mathcal D}x_{i,e}=|\mathcal D|w_i\quad\text{for all }i\in N.
\end{equation*}
The continuous comparison underlying the chores-only BoBW construction of Wu et al.~\cite{journals/ai/WuZZ25} shows that the fractional allocation $x$ is WEF.

Write $|\mathcal D|w_i=a_i+f_i$, where $a_i=\lfloor |\mathcal D|w_i\rfloor$ and $0\le f_i<1$.
For every agent $i$, order the residual chores from worst to best as $e^i_1,\ldots,e^i_{|\mathcal D|}$, using a fixed refinement of ties, and put $s_{i,k}=\sum_{r\le k}x_{i,e^i_r}$.
Introduce $a_i$ mandatory slots and, when $f_i>0$, one optional slot for agent $i$.
The edge between $e^i_k$ and slot $(i,r)$ has weight
\begin{equation*}
 z_{e^i_k,(i,r)}=\bigl|[s_{i,k-1},s_{i,k})\cap[r-1,r)\bigr|.
\end{equation*}
We call a slot matching feasible if it uses only edges with positive weight.
Every chore and every mandatory slot has degree one, whereas optional slot $i$ has degree $f_i$.
Consequently, $z$ lies in a bipartite partial-matching polytope, whose feasible integral points are precisely the feasible slot matchings.

The following lemma records the part of the chores-only BoBW construction that we use, written in our notation.
\begin{lemma}[\cite{journals/ai/WuZZ25}]\label{lem:bobw-chore-law}
Let $\mu$ be a distribution over feasible integral slot matchings whose expected incidence vector is $z$.
If $\bY=(Y_i)_{i\in N}$ is the residual-chore allocation induced by a matching in the support of $\mu$, then
\begin{enumerate}[label=(\alph*),ref=(\alph*)]
\item $\Pr_{\mu}[e\in Y_i]=x_{i,e}$ for every $i\in N$ and $e\in\mathcal D$;
\item every $e\in Y_i$ satisfies $x_{i,e}>0$, and, for every $i\in N$ and $k\in[|\mathcal D|]$,
\begin{equation*}
 \lfloor s_{i,k}\rfloor
 \le |Y_i\cap\{e^i_1,\ldots,e^i_k\}|
 \le \lceil s_{i,k}\rceil;
\end{equation*}
\item if $Y_i\neq\varnothing$ and $e_i$ is $i$'s worst chore in $Y_i$, then, for every $j\in N$ and every valuation function $\widehat v_i$ that is negative on $\mathcal D$ and has the same tie-refined order as $v_i$,
\begin{equation}\label{eq:bobw-ordinal-chore}
 \frac{\widehat v_i(Y_i\setminus\{e_i\})}{w_i}
 \ge \frac{\widehat v_i(Y_j\cup\{e_i\})}{w_j};
\end{equation}
\end{enumerate}
\end{lemma}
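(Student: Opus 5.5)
The plan is to establish (a), (b) and (c) in order. Each uses only the slot structure of $z$ and the fact that $\mu$ is supported on feasible integral slot matchings with mean $z$.

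\emph{Part (a).} For every chore $e$ and agent $i$, the event $e\in Y_i$ is the disjoint union, over the slots $(i,r)$, of the events that $e$ is matched to $(i,r)$. Since $\mu$ has mean $z$,
\[
\Pr_\mu[e\in Y_i]=\sum_r z_{e,(i,r)}.
\]
Write $e=e^i_k$. The intervals $[r-1,r)$ for $r\le a_i+1$ cover $[0,|\mathcal D|w_i)$, which contains $[s_{i,k-1},s_{i,k})$. So the sum of the overlap lengths telescopes to $s_{i,k}-s_{i,k-1}=x_{i,e}$.

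\emph{Part (b).} The first claim holds because a feasible matching uses only positive-weight edges, and $z_{e^i_k,(i,r)}>0$ forces $[s_{i,k-1},s_{i,k})$ to have positive length, i.e.\ $x_{i,e}>0$. For the prefix bounds, we use the interval structure. Slot $(i,r)$ is adjacent only to chores whose intervals meet $[r-1,r)$, so the chores in $e^i_1,\dots,e^i_k$ can only be matched to slots $r\le\lceil s_{i,k}\rceil$. Each slot receives at most one chore, which gives the upper bound. For the lower bound, every slot $r\le\lfloor s_{i,k}\rfloor$ is mandatory: it has $r\le s_{i,k}\le a_i+f_i$ with $r$ an integer, so $r\le a_i$. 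Hence it is matched to exactly one chore. Its neighbours all have intervals meeting $[r-1,r)\subseteq[0,s_{i,k})$, so they lie in the prefix $e^i_1,\dots,e^i_k$. This gives $\lfloor s_{i,k}\rfloor$ distinct prefix chores.

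\emph{Part (c).} This is the substantive step, and I would reduce it to the chores-only argument of Wu et al. The key is that (b) pins down $Y_i$ ordinally: the $r$-th worst chore of $i$ in $Y_i$ lies in the window of $i$'s worst-to-best order whose cumulative $x_{i,\cdot}$-mass straddles $r$. Fix $j$, and write $e_i$ for the chore in the lowest slot, which is the worst chore of $Y_i$. I would compare, for each $\widehat v_i$, the quantity
\[
\widehat v_i(Y_i\setminus\{e_i\})/w_i
\]
with the DSE fractional bundle of $i$, using $i$'s own prefix bounds. Removing the worst chore shifts $i$'s slots so that slot $r+1$ dominates the fractional mass in $[r-1,r)$. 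The quantity $\widehat v_i(Y_j\cup\{e_i\})/w_j$ is bounded through $j$'s prefix bounds evaluated in $i$'s order. This works because DSE with speeds $w$ means that by time $t$ agent $j$ has eaten mass $tw_j$ of chores that $i$ ranks no better than the chores $i$ eats later, which is the WEF comparison of $x$.

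The main obstacle is that the prefix bounds for $j$ are stated in $j$'s order, not $i$'s. I would transfer them via the eating-time parametrization. Slot $(j,r)$ corresponds to eating times in $[(r-1)/w_j,r/w_j)$, and at each such time, the chore $j$ eats is one that $i$ valued at most as much as what $i$ was eating then (or was eaten by $i$ already). The ordinal comparison then holds time-slot by time-slot, with the one-chore slack absorbing the integer rounding. Since this is exactly Wu et al.'s Theorem 4.13 argument, and since it uses only the ordering of $\widehat v_i$ together with negativity on $\mathcal D$, the statement holds for every such $\widehat v_i$. In the write-up I would cite it after checking that our slot construction matches theirs.
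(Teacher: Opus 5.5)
\textbf{Parts (a) and (b).} The paper gives no argument for this lemma; it imports (a)--(c) wholesale from Wu et al. Your direct verification of (a) and (b) is correct. The overlap lengths sum to $x_{i,e}$. Every slot $r\le\lfloor s_{i,k}\rfloor$ is mandatory and adjacent only to prefix chores. Prefix chores reach only slots $r\le\lceil s_{i,k}\rceil$. What this buys over the bare citation is that both properties visibly hold for \emph{every} feasible integral slot matching. That is what the paper needs, since the $\mu$ used later is the maximum-entropy distribution of \cref{lem:bobw-chore-coupling}, not Wu et al.'s particular decomposition.

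\textbf{Part (c).} Here you fall back on the same citation as the paper, which is acceptable. Do not write up your heuristic, though, because its time orientation is reversed. DSE eats each agent's best chore first, while slots are indexed from worst to best. So slot $(j,r)$ corresponds to eating times $(T-r/w_j,\,T-(r-1)/w_j]$ with $T=|\mathcal D|$, not to $[(r-1)/w_j,r/w_j)$. Likewise, what $j$ eats at time $t$ is no better for $i$ than what $i$ eats at time $t$ or earlier, not later.

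If you want (c) without the citation, the missing idea is a one-window shift on $j$'s side, with $e_i$ filling the last window.
\begin{itemize}
\item Let $\phi(t)$ be the $\widehat v_i$-value of the chore $i$ eats at time $t$; this is non-increasing.
\item List the realized chores by slot as $y_1,y_2,\dots$ for $i$ and $z_1,z_2,\dots$ for $j$. The slot-$1$ chore is a worst chore of $Y_i$, so take $e_i=y_1$.
\item Set $F=\widehat v_i(y_r)$ on $(T-(r-1)/w_i,\,T-(r-2)/w_i]$ for $r\ge2$.
\item Set $H=\widehat v_i(e_i)$ on $(T-1/w_j,T]$, and $H=\widehat v_i(z_r)$ on $(T-(r+1)/w_j,\,T-r/w_j]$.
\end{itemize}
Then the two sides of \eqref{eq:bobw-ordinal-chore} are exactly $\int F$ and $\int H$. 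Three pointwise comparisons follow.
\begin{itemize}
\item $F\ge\phi$ wherever $F$ is defined, because $i$ eats $y_r$ during its own slot-$r$ window.
\item $\phi\ge H$ on $j$'s shifted windows, because $z_r$ is still available at some time at least $T-r/w_j$.
\item $F\ge\widehat v_i(e_i)=H$ on $(T-1/w_j,T]$, because $e_i$ is $i$'s worst realized chore.
\end{itemize}
Finally, $H\le0$ and $(|Y_i|-1)/w_i\le a_i/w_i\le T<(a_j+1)/w_j\le(|Y_j|+1)/w_j$. So the domain of $F$ lies inside that of $H$, which gives $\int F\ge\int H$.
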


We next select, among the distributions satisfying the assumptions of \cref{lem:bobw-chore-law}, one with the additional coupling properties needed below.
\begin{lemma}\label{lem:bobw-chore-coupling}
There is a distribution $\mu$ over feasible integral slot matchings whose expected incidence vector is $z$.
For the residual-chore allocation $\bY$ induced by a matching in the support of $\mu$, define $B(\bY)=\{i:|Y_i|=a_i+1\}$.
The distribution additionally satisfies
\begin{equation*}
 |B(\bY)|=\sum_i f_i
 \qquad\text{and}\qquad
 \Pr_{\mu}[i\in B(\bY)]=f_i.
\end{equation*}
Moreover, for disjoint $J,L\subseteq N$, with $f(S)=\sum_{i\in S}f_i$, if $f(J)+f(L)=0$, then $\Pr[B(\bY)\cap L\neq\varnothing]=0$.
Otherwise,
\begin{alignat}{1}
 \Pr[B(\bY)\cap J=\varnothing,\ B(\bY)\cap L\neq\varnothing]
 &\le\frac{f(L)}{f(J)+f(L)},\label{eq:bobw-paired-one}\\
 \Pr\!\left[B(\bY)\cap J\neq\varnothing\right]
 &\ge\frac{f(J)}{1+f(J)}.\label{eq:bobw-paired-two}
\end{alignat}
\end{lemma}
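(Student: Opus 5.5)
The plan is to derive the counting identities directly from the degree structure of $z$, and then to put all the real work into choosing a negatively dependent distribution for the optional-slot set $B(\bY)$.

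\emph{Structural part.} Since $z$ has degree exactly one at every chore and at every mandatory slot, any distribution $\mu$ over feasible integral slot matchings with expected incidence vector $z$ must match every chore and saturate every mandatory slot almost surely. Existence of at least one such $\mu$ follows from the integrality of the bipartite (partial-)matching polytope. Consequently $|Y_i| = a_i + \mathbf 1[\text{optional slot of } i \text{ used}]$, and $i\in B(\bY)$ exactly when $i$'s optional slot is used. Counting matched chores then gives, in every outcome,
\[
|B(\bY)| = |\mathcal D| - \sum_i a_i = \sum_i f_i .
\]
Taking expectations of the degree of the optional slot gives $\Pr_\mu[i\in B(\bY)] = f_i$. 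If $f(J)+f(L)=0$, every $i\in L$ has $f_i=0$ and hence no optional slot, so $\Pr[B\cap L\neq\varnothing]=0$. These facts hold for every $\mu$ with marginal $z$, so the only remaining freedom is the choice of $\mu$, which we spend on negative dependence of $B$.

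\emph{Choosing $\mu$.} I would take $\mu$ to be the maximum-entropy distribution over feasible slot matchings with marginal $z$, in the sense of Singh--Vishnoi. This distribution has product form $\mu(\pi)\propto\prod_{e\in\pi}\lambda_e$; if $z$ lies on the boundary, pass to the face of the polytope where it is relative-interior. The goal is to show that the induced law of the optional-slot set $B$ is negatively associated, ideally strongly Rayleigh of constant rank $\sum_i f_i$.

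To get there I would exploit the interval structure of $z$. For agent $i$, the optional slot is adjacent only to the chores meeting the last fractional interval $[a_i, s_{i,|\mathcal D|})$ of $i$'s consumption order. So after the mandatory slots are fixed, deciding which agents get an extra chore is a transversal/bipartite $b$-matching structure.

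If the direct strongly Rayleigh property is hard to establish, the fallback reverses the order of choices. First sample $B$ from a strongly Rayleigh distribution on $\sum_i f_i$-subsets with marginals $f$, e.g.\ max-entropy on $k$-subsets or pipage rounding. Then, for each realized $B$, construct a conditional distribution over matchings that use exactly the optional slots in $B$, with conditional marginals chosen so that averaging over $B$ recovers $z$. The existence of these conditional fractional matchings would be checked by a fractional Hall condition on the residual bipartite graph. Verifying this Hall condition is where I expect the main obstacle to lie.

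\emph{Probability bounds.} Given negative association (pairwise negative correlation suffices here), write $X=|B\cap J|$, so that $\mathbb E[X]=f(J)$. Pairwise negative correlation gives $\mathbb E[X^2]\le f(J)+f(J)^2$. The second-moment method then yields
\[
\Pr[X\ge1]\ \ge\ \frac{f(J)^2}{f(J)+f(J)^2}\ =\ \frac{f(J)}{1+f(J)},
\]
which is \eqref{eq:bobw-paired-two}.

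For \eqref{eq:bobw-paired-one} I would use the strongly Rayleigh closure under conditioning and projection. The projection of $B$ onto $J\cup L$ is again strongly Rayleigh, so by stochastic covering its rank is nearly determined. Conditioned on the event $B\cap(J\cup L)\neq\varnothing$, an exchange/symmetrization argument should show that the probability that all elements land in $L$ is at most the ratio $f(L)/(f(J)+f(L))$. Concretely, I would compare $\mathbb E[|B\cap L|\,\mathbf 1\{B\cap J=\varnothing\}]$ against $\mathbb E[|B\cap J|\,\mathbf 1\{B\cap L=\varnothing\}]$ via negative cylinder dependence.

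I expect this last inequality, together with actually certifying negative dependence of $B$ under the matching constraints, to be the main difficulty. The remaining steps are routine.
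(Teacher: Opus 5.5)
\textbf{There is a genuine gap at \eqref{eq:bobw-paired-one}, and pairwise negative correlation is assumed rather than proved.}

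\textbf{What matches the paper.} Your structural part and your choice of the maximum-entropy, edge-product $\mu$ agree with the paper. The structural part covers almost-sure saturation of chores and mandatory slots, $|B(\bY)|=\sum_i f_i$, $\Pr_\mu[i\in B(\bY)]=f_i$, and the degenerate case $f(J)+f(L)=0$.

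\textbf{Bound \eqref{eq:bobw-paired-two}.} Your second-moment derivation is correct and simpler than the paper's orbit count, \emph{provided} the events $\{i\in B\}$ are pairwise negatively correlated. Nothing in your sketch establishes this; the remark about interval structure does not give it. It is true, via the swap argument below, but it needs a proof.

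\textbf{Bound \eqref{eq:bobw-paired-one}.} Here you have no argument, and the tools you name cannot supply one on their own.
\begin{itemize}
\item The claim that the projection onto $J\cup L$ has ``nearly determined'' rank is false. Under a strongly Rayleigh law, $|B\cap(J\cup L)|$ is distributed as a sum of independent Bernoulli variables.
\item Negative dependence points the wrong way. Under negative association, the decreasing event $\{B\cap J=\varnothing\}$ and the increasing event $\{B\cap L\neq\varnothing\}$ are positively correlated. So $E=\{B\cap J=\varnothing,\ B\cap L\neq\varnothing\}$ is at least as likely as under independence.
\item The upper bound you do get from negative cylinder dependence plus marginals is $\Pr[E]\le\min\{\prod_{j\in J}(1-f_j),\,f(L)\}$, and it is too weak. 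Take $f(J)=0.1$ spread over many agents and $f(L)=0.92$. The bound is then about $e^{-0.1}\approx 0.905$, which exceeds $f(L)/(f(J)+f(L))\approx 0.902$.
\end{itemize}
So a genuinely joint structural argument is needed.

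\textbf{The missing idea: the paper's two-sample swap.}
\begin{itemize}
\item Draw $K,K'$ independently from $\mu$.
\item Every matching in the support saturates all chores and all mandatory slots. Hence $K\triangle K'$ splits into alternating cycles and alternating paths whose endpoints are optional slots.
\item By the edge-product form, $\mu(K)\mu(K')$ depends only on the multiset $K\uplus K'$. Conditioning on $K\uplus K'$ therefore makes the orientations of these components independent fair coins.
\item At the level of $(B,B')$, each orbit is a common set plus disjoint endpoint pairs. Each pair is split between $B$ and $B'$ by its own fair coin.
\end{itemize}

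Because $B$ and $B'$ are independent, \eqref{eq:bobw-paired-one} is equivalent to
$\mathbb E[\mathbf 1_E\,|B'\cap(J\cup L)|]\le\mathbb E[|B'\cap L|]$.
This is checked orbit by orbit by counting pair types: $O$--$J$, $O$--$L$, $J$--$L$, $L$--$L$, and common $L$-elements, where $O=N\setminus(J\cup L)$. Likewise, \eqref{eq:bobw-paired-two} follows from the orbit inequality $\mathbb E[\mathbf 1_{\{B\cap J=\varnothing\}}|B'\cap J|]\le\Pr[B\cap J\neq\varnothing]$.

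The same swap also supplies what your second-moment step needs. Flipping the pair containing $k$ is a probability-preserving injection from $\{i,k\in B,\ i,k\notin B'\}$ into $\{i\in B,k\notin B,\ i\notin B',k\in B'\}$. This gives $\Pr[i,k\in B]\Pr[i,k\notin B]\le\Pr[i\in B,k\notin B]\Pr[i\notin B,k\in B]$, which is equivalent to $\Pr[i,k\in B]\le f_if_k$.

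\textbf{Your fallback.} Sampling $B$ first and then a conditional matching does not remove the gap. It leaves the Hall condition unverified, and it would still require \eqref{eq:bobw-paired-one} for whatever law of $B$ you choose.
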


\begin{proof}
Let $\Omega$ be the finite set of feasible integral matchings.
The point $z$ lies in the relative interior of $\operatorname{conv} \{\chi^K:K\in\Omega\}$: every retained edge has positive $z$-value, every optional load lies strictly between zero and one, and the remaining equalities define the affine hull.
Maximize Shannon entropy over distributions on $\Omega$ with mean $z$.
The maximizer has full support and, by the Lagrange-multiplier conditions, has the edge-product form
\begin{equation}\label{eq:bobw-edge-product}
 \mu(K)=Z^{-1}\exp\!\left(\sum_{a\in K}\theta_a\right),
 \qquad \mathbb E_{\mu}[\chi^K]=z.
\end{equation}
By \cref{lem:bobw-chore-law}, these properties hold automatically.
The slot degrees and the mean constraint also give $|B(\bY)|=\sum_i f_i$ and $\Pr_{\mu}[i\in B(\bY)]=f_i$.

It remains to prove the two probability bounds.
Draw $K,K'$ independently from $\mu$, with induced allocations $\bY$ and $\bY'$, and write $B=B(\bY)$ and $B'=B(\bY')$.
By \eqref{eq:bobw-edge-product}, this conditioning makes the orientations of all alternating paths and cycles independent fair coins.
The path endpoints are optional slots.
Hence, at the level of $(B,B')$, there is a common set and disjoint endpoint pairs, with exactly one endpoint of each pair in $B$.
It remains to verify the following inequality on every orientation orbit:
$$
 \mathbb E\bigl[\mathbf 1_{\{B\cap J=\varnothing,\,B\cap L\neq\varnothing\}}
 |B'\cap(J\cup L)|\bigr]\le\mathbb E[|B'\cap L|].
$$
Fix an orientation orbit.
Let $h$ be the number of common $L$-elements and let $a,b,d,e$ be the numbers of endpoint pairs of types $O$--$J$, $O$--$L$, $J$--$L$, and $L$--$L$, respectively, where $O=N\setminus(J\cup L)$.
Common $J$-elements and $J$--$J$ pairs make the event impossible.
Otherwise,
\(\mathbb E[|B'\cap L|]=h+b/2+d/2+e\).
If $h+d+e>0$, then missing $J$ fixes the orientations of the $a+d$ pairs that meet $J$ and already guarantees that $B$ meets $L$.
Thus, the left-hand side equals \(2^{-(a+d)}(h+a+d+e+b/2)\).
If $d\ge1$, this is at most
\(h+e+b/2+d/2\), where we use \(a+d\ge1\) and \((a+d)2^{-(a+d)}\le d/2\).
If $d=0$, then $h+e>0$ and hence $h+e+b/2\ge1$.
The inequality \(a\le(2^a-1)(h+e+b/2)\) gives the same upper bound.
It remains to consider $h=d=e=0$.
If $b=0$, the event is impossible.
For $b\ge1$, conditional on missing $J$, let $U\sim\operatorname{Bin}(b,1/2)$ count the $O$--$L$ pairs whose $O$-endpoint enters $B$.
The event requires \(U<b\), and the left-hand side is \(2^{-a}(a+b/2-(a+b)2^{-b})\), which is at most \(b/2\).
Indeed, for \(b=1\) it equals \(a/2^{a+1}\le1/2\), while for \(b\ge2\) and \(a\ge1\) it is at most \(a/2^a+b/2^{a+1}\le1/2+b/4\le b/2\), and for \(a=0\) it equals \(b(1-2^{1-b})/2\le b/2\).
This proves the displayed orbit inequality.
Averaging it and using independence of $B,B'$ gives the product form of the first bound, which yields \eqref{eq:bobw-paired-one} after division by $f(J)+f(L)>0$.

For \eqref{eq:bobw-paired-two}, common $J$-elements and $J$--$J$ pairs again make missing $J$ impossible.
Otherwise let $r$ be the number of endpoint pairs with exactly one endpoint in $J$.
Missing $J$ has probability $2^{-r}$, and then $B'$ contains all $r$ corresponding $J$-endpoints.
Hence $\mathbb E[\mathbf 1_{\{B\cap J=\varnothing\}}|B'\cap J|] \le 2^{-r}r\le1-2^{-r}=\Pr[B\cap J\neq\varnothing]$.
Averaging proves \eqref{eq:bobw-paired-two}.
\end{proof}

Put $H=\{i:a_i\ge1\}$ and $L=N\setminus H$.
An agent in $L$ receives either zero or one residual chore.
For an outcome $\bY$, write $B=B(\bY)$ and, for every nonempty $T\subseteq N$ with $T\cap H\neq\varnothing$, define the safe agents by
\begin{equation*}
 \mathcal S_T(B)=\left\{j\in T:q_j>0\ \text{and}\
 \frac{q_i-1}{w_i}\le\frac{q_j}{w_j}\ \text{for all }i\in T\right\},
 \qquad q_i=a_i+\mathbf 1_{\{i\in B\}}.
\end{equation*}

The set $\mathcal S_T(B)$ identifies the agents who are admissible recipients for a meta-good with positive set $T$.
After the residual-chore lottery fixes $B$, we perform an additional random choice over $\mathcal S_T(B)$.

\begin{lemma}\label{lem:bobw-safe-owner}
For every nonempty $T\subseteq N$ with $T\cap H\neq\varnothing$, there is a family of probability distributions $\{\pi_T^B\}_B$ indexed by the sets $B$ that can arise as $B(\bY)$ under $\mu$.
For each such $B$, $\pi_T^B$ is a probability distribution over $\mathcal S_T(B)$.
We write $\pi_T^B(j)$ for the probability that agent $j$ is selected, and set $\pi_T^B(j)=0$ for $j\notin\mathcal S_T(B)$.
Here the sum over $B$ ranges over the sets that can arise as $B(\bY)$ under $\mu$.
These distributions satisfy
\begin{equation}\label{eq:bobw-safe-marginal}
 \sum_B\Bigl(\Pr_{\mu}[B]\pi_T^B(j)\Bigr)=\frac{w_j}{\sum_{i\in T}w_i}
 \qquad(j\in T).
\end{equation}
\end{lemma}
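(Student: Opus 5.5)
The plan is to cast \eqref{eq:bobw-safe-marginal} as a fractional transportation (bipartite $b$-matching) problem and to verify a Hall-type condition for it. Fix $T$. Build a bipartite graph whose left vertices are the sets $B$ in the support of $\mu$, with supply $\Pr_\mu[B]$. Its right vertices are the agents $j\in T$, with demand $w_j/w(T)$. Join $B$ to $j$ whenever $j\in\mathcal S_T(B)$. Total supply and total demand are both $1$. By the max-flow/min-cut theorem (the fractional Hall condition for transportation problems), a flow saturating every vertex exists if and only if, for every $A\subseteq T$,
\begin{equation*}
 \Pr_\mu\bigl[\mathcal S_T(B)\cap A\neq\varnothing\bigr]\ \ge\ \frac{w(A)}{w(T)} .
\end{equation*}
Given such a flow with values $y_{B,j}$, set $\pi_T^B(j)=y_{B,j}/\Pr_\mu[B]$. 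Since $B$ is supplied exactly, each $\pi_T^B$ is a probability distribution supported on $\mathcal S_T(B)$, and \eqref{eq:bobw-safe-marginal} is precisely the statement that each demand is met. Taking $A=T$ shows in particular that $\mathcal S_T(B)$ must be nonempty for every $B$ in the support. This is checked directly: let $j$ maximize $q_j/w_j$ over $T$. Because $T\cap H\neq\varnothing$, we have $q_j/w_j\ge a_h/w_h>0$ for some $h\in T\cap H$, so $q_j>0$. Also $(q_i-1)/w_i<q_i/w_i\le q_j/w_j$ for every $i\in T$, so $j$ is safe.

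Next I will record structural facts about $\mathcal S_T(B)$.
\begin{enumerate}[label=(\alph*)]
\item Every $j\in B\cap T$ is safe. Indeed $q_j/w_j=(a_j+1)/w_j>|\mathcal D|$, while $(q_i-1)/w_i\le a_i/w_i\le|\mathcal D|$ for every $i$.
\item Every $j\in T\cap H$ with $f_j=0$ is safe. Here $q_j/w_j=|\mathcal D|$ exactly, which is again at least every $(q_i-1)/w_i$.
\item More generally, an agent $j\notin B$ with $q_j=a_j>0$ can fail to be safe only because of some $i\in T\cap B$ with $a_i/w_i>a_j/w_j$.
\end{enumerate}
Fix $A$ and let $J=A\cap\{j:f_j>0\}$. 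By (a) and (b), the event $\mathcal S_T(B)\cap A=\varnothing$ is empty unless $A$ contains no agent of $H$ with $f_j=0$. When it is nonempty, it forces $B\cap J=\varnothing$. By (c), it further forces $B\cap L_A\neq\varnothing$, where $L_A\subseteq T\setminus A$ is the set of agents $i$ whose membership in $B$ makes all of $A\cap H$ unsafe. The agents of $A\setminus H$ have $q_j=0$ off $B$, so they are automatically unsafe there.

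The Hall inequality then splits into two regimes, and the bounds of \cref{lem:bobw-chore-coupling} are designed for exactly these.
\begin{enumerate}[label=(\roman*)]
\item If $A\cap H\neq\varnothing$, the bad event is contained in $\{B\cap J=\varnothing,\ B\cap L_A\neq\varnothing\}$. Inequality \eqref{eq:bobw-paired-one} then bounds its probability by $f(L_A)/(f(J)+f(L_A))$. It remains to show this is at most $w(T\setminus A)/w(T)$. I would compare $f$ with $w$ using $f_i<1$ together with the ordering $a_i/w_i>a_j/w_j$ that defines $L_A$. The aim is a ratio inequality of the form $f(L_A)\,w(A)\le f(J)\,w(T\setminus A)$.
\item If $A\subseteq L$, that is $A\cap H=\varnothing$, then $f_j=|\mathcal D|w_j$ for $j\in A$, and being safe is equivalent to being in $B$ (by (a)). Hence the Hall inequality reduces to $\Pr[B\cap A\neq\varnothing]\ge w(A)/w(T)$. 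Inequality \eqref{eq:bobw-paired-two} gives the lower bound $f(A)/(1+f(A))$. I would compare this with $w(A)/w(T)$ using $f(A)=|\mathcal D|w(A)$ and the fact that $T$ contains an agent of $H$, so that $|\mathcal D|\,w(T\setminus A)\ge1$.
\end{enumerate}

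I expect the main obstacle to be regime (i): converting the $f$-ratio from \eqref{eq:bobw-paired-one} into the $w$-ratio. The set $L_A$ depends on the comparison between $a_i/w_i$ and $a_j/w_j$, so the argument must exploit that the agents of $L_A$ have integer loads exceeding the threshold by a full unit. My first attempt would be to choose the worst $j^*\in A\cap H$, i.e.\ the one minimizing $a_j/w_j$, define $L_A$ relative to $j^*$, and bound $w(T\setminus A)\ge w(L_A)$. I would then combine this with $f_i\le|\mathcal D|w_i-a_i$ on $L_A$ and $f_j=|\mathcal D|w_j-a_j$ on $J$.
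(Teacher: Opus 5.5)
Your reduction to a fractional Hall condition (supplies $\Pr_\mu[B]$, demands $w_j/w(T)$) is the paper's approach, and facts (a), (b) and regime (ii) are correct. The gap is fact (c), on which regime (i) rests. For $j\notin B$ with $a_j>0$, safety also requires $(a_i-1)/w_i\le a_j/w_j$ for every $i\in T\setminus B$, and this can fail.

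For example, take $|\mathcal D|=10$, $N=\{i,j,o\}$, $T=\{i,j\}$ and $(w_i,w_j,w_o)=(0.7,0.15,0.15)$. Then $a=(7,1,1)$, $f=(0,1/2,1/2)$, and $B$ is $\{j\}$ or $\{o\}$, each with probability $1/2$. On $B=\{o\}$ we get $q_j/w_j=20/3<60/7=(q_i-1)/w_i$ with $i\notin B$. So $j$ is unsafe even though $B\cap T=\varnothing$. For $A=\{j\}$, which falls in your regime (i), the bad event has probability $1/2$. It is not contained in any event $\{B\cap J=\varnothing,\ B\cap L_A\neq\varnothing\}$ with $L_A\subseteq T\setminus A$. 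Your recipe gives $L_A=\{i\}$ with $f_i=0$, and hence the false bound $0$. So \eqref{eq:bobw-paired-one} cannot handle such sets $A$ at all.

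What you are missing is the paper's notion of an \emph{active} agent. Using $a_i/w_i=|\mathcal D|-f_i/w_i$, an agent $j\notin B$ with $a_j>0$ is safe iff two conditions hold:
\begin{itemize}
\item $f_j/w_j\le (f_i+1)/w_i$ for all $i\in T\setminus B$;
\item $f_j/w_j\le f_i/w_i$ for all $i\in T\cap B$.
\end{itemize}
Call $j$ active if $a_j>0$ and $f_j/w_j\le\min_{i\in T}(f_i+1)/w_i$. An inactive agent, which includes every agent of $T\cap L$, is safe iff it lies in $B$. The Hall inequality must therefore be split according to whether $A$ contains an active agent, not according to whether $A\cap H\neq\varnothing$.

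\textbf{Case 1: $A$ contains an active agent.} Take $j^*\in A$ minimizing $f_j/w_j$, i.e.\ maximizing $a_j/w_j$. This is the opposite extreme from the agent you pick in your last paragraph. With your choice, $L_A$ can contain agents with $f_i/w_i>f(A)/w(A)$, and your target ratio inequality then fails. Let $L_0=\{i\in T: f_i/w_i<f_{j^*}/w_{j^*}\}$. The bad event lies inside $\{B\cap A=\varnothing,\ B\cap L_0\neq\varnothing\}$, and $f(L_0)/w(L_0)\le f(A)/w(A)$ makes \eqref{eq:bobw-paired-one} sufficient.

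\textbf{Case 2: $A$ has no active agent.} Let $k\in T$ minimize $(f_k+1)/w_k$. Since $T\cap H\neq\varnothing$, $k$ is active, so $k\notin A$. Every $j\in A$ satisfies $f_j/w_j\ge (f_k+1)/w_k$. Hence $f(A)\,w(T\setminus A)\ge f(A)\,w_k\ge w(A)$, which is exactly $f(A)/(1+f(A))\ge w(A)/w(T)$. Then \eqref{eq:bobw-paired-two} closes the case, since hitting $A$ with $B$ already gives a safe agent. Your regime (ii) is the special case $A\subseteq L$ of this second case.
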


\begin{proof}
Let $\rho_i=f_i/w_i$.
For $J\subseteq T$, we prove $\Pr[\mathcal S_T(B)\cap J\neq\varnothing]\ge w(J)/w(T)$.
If $J$ contains an agent that is active in the sense $a_j>0$ and $\rho_j\le\min_{i\in T}\{\rho_i+1/w_i\}$, let $L_0$ be the agents in $T$ of strictly smaller $\rho$-value.
Failure to meet $J$ is exactly the event that $B$ misses $J$ and meets $L_0$.
Thus \eqref{eq:bobw-paired-one} and $f(L_0)/w(L_0)<f(J)/w(J)$ give
$$
 \Pr[\mathcal S_T(B)\cap J=\varnothing]
 \le\frac{f(L_0)}{f(J)+f(L_0)}
 \le1-\frac{w(J)}{w(T)}.
$$
If $J$ contains no active agent, an active $k\in T\setminus J$ attaining the minimum satisfies $f(J)\ge w(J)(f_k+1)/w_k\ge w(J)$.
Since $w(T)-w(J)\ge w_k$, \eqref{eq:bobw-paired-two} yields $\Pr[B\cap J\neq\varnothing]\ge w(J)/w(T)$; hitting $J$ is then sufficient for safety.
The cases $J=\varnothing$ and $f_i=0$ for all $i$ are immediate.

The displayed inequality is precisely the fractional Hall condition for the bipartite graph connecting each outcome $B$ to the agents in $\mathcal S_T(B)$.
Fractional Hall, equivalently max-flow/min-cut, gives the family of distributions $\{\pi_T^B\}_B$ with marginals \eqref{eq:bobw-safe-marginal}.
\end{proof}

\subsection{Combining the Components}

Now we are ready to prove Theorem~\ref{thm:bobw}.
Sample a residual matching from $\mu$, obtaining $\bY$, and let $B=B(\bY)$ and $q_i=a_i+\mathbf 1_{\{i\in B\}}$ for every $i\in N$.
For every $G\in\mathcal G^+$, proceed independently conditional on $B$ as follows.
If $P_G\cap H\neq\varnothing$, choose its owner from $P_G$ according to $\pi_{P_G}^B$.
If $P_G\subseteq L$, allocate all such meta-goods jointly by \cref{lem:bobw-subjective-goods} on $L$.
Finally add the fixed assignments of $\mathcal G^0$ and expand every meta-good to its constituent items.
This gives a finite lottery over allocations of $M$.

For ex-ante WEF, the residual-chore marginal is the WEF DSE allocation.
Fix an observer $i$ and a meta-good $G$.
If $i\in P_G$, then \eqref{eq:bobw-safe-marginal} assigns $G$ to every $j\in P_G$ with probability $w_j/w(P_G)$; hence the normalized expected value of $G$ is $v_i(G)/w(P_G)$ both for $i$ and for every target in $P_G$, and is zero for targets outside $P_G$.
If $i\notin P_G$, its own contribution is zero and every target contribution is nonpositive.
The light-only blocks satisfy the same comparison by \cref{lem:bobw-subjective-goods}; for an observer outside $L$, they are nonpositive to every target.
Neutral blocks are harmless.
Additivity therefore proves ex-ante WEF.

It remains to verify the outcome-wise guarantee.
Fix an outcome and any two agents $i,j\in N$.
Let $\mathcal Q_i,\mathcal Q_j$ be their families of owned nonneutral meta-goods, let $Q_i=\bigcup_{G\in\mathcal Q_i}G$, and set $X_i=Y_i\cup Q_i$.
Every agent values $Q_i$ nonnegatively.
From $i$'s perspective, put
$$
 h=\sum_{G\in\mathcal Q_j:\,v_i(G)>0}v_i(G),
 \qquad t=v_i(Q_j)\le h.
$$
By Lemma~\ref{lem:construction}, every residual chore $e$ satisfies $v_i(e)<-h$.

If $q_i=q_j=0$, both agents are in $L$ and neither owns a block selected by a safe-owner kernel.
The remaining comparison follows from \cref{lem:bobw-subjective-goods}; when its target-block witness is $G$, use $g_i(G)$ from \eqref{eq:bobw-witness} to obtain one original good.
If $q_i=0<q_j$, then $v_i(Q_i)\ge0$, while one residual chore of $j$, together with all blocks of $j$, has negative value to $i$ by Lemma~\ref{lem:construction}; hence $i$ does not envy $j$.

Suppose $q_i>0=q_j$.
If $i\in L$, then $q_i=1$, and transferring its unique residual chore leaves $i$ with a nonnegative block bundle while the target bundle plus that chore has negative value to $i$.
If $i\in H$, every block owned by $j$ is nonpositive to $i$.
Applying \eqref{eq:bobw-ordinal-chore} with $\widehat v_i=v_i$ and transferring $i$'s worst residual chore proves $\WEFoneT$.

Finally, suppose $q_i,q_j>0$.
If $h>0$ and $i\in H$, every block counted in $h$ is assigned through a safe-owner kernel, so $h(q_i-1)/w_i\le hq_j/w_j$; the same inequality is immediate when $h=0$ or $i\in L$.
Set $v'_i(e)=v_i(e)+h$.
These values are negative and have the same order as $v_i$.

Let $e_i$ be $i$'s worst residual chore.
Applying \eqref{eq:bobw-ordinal-chore} with $\widehat v_i=v'_i$ and using the preceding count inequality yields
$$
 \frac{v_i(Y_i\setminus\{e_i\})}{w_i}
 \ge \frac{h+v_i(Y_j)+v_i(e_i)}{w_j}.
$$
Since $v_i(Q_i)\ge0$ and $t\le h$, this is exactly $v_i(X_i\setminus\{e_i\})/w_i\ge v_i(X_j\cup\{e_i\})/w_j$.
Thus one original own chore witnesses $\WEFoneT$ in every case.

To compress the support, represent every realized allocation by its $n\times m$ incidence vector.
These vectors lie in the affine space in which each item column sums to one, of dimension at most $m(n-1)$.
By Carath\'eodory's theorem, the expected incidence vector is a convex combination of at most $m(n-1)+1$ realized support allocations.
This keeps all itemwise marginals, hence preserves ex-ante WEF for additive valuations, and every retained allocation remains $\WEFoneT$.
This completes the proof of Theorem~\ref{thm:bobw}.

\section{Conclusion and Open Problem}\label{sec:conclusion}
In this work, we establish that a $\mathrm{WEF1}$ allocation always exists for additive mixed manna under arbitrary positive entitlements and can be computed in polynomial time. 
Furthermore, we achieve a best-of-both-worlds guarantee via a finite lottery that is ex-ante $\mathrm{WEF}$ and ex-post $\mathrm{WEF1T}$. 
Finally, we clarify the boundary between fairness and efficiency. 
While $\mathrm{WEF1}$ and $\mathrm{fPO}$ are generally incompatible, we show that every instance admits an allocation that simultaneously satisfies $\mathrm{WEF1T}$ and $\mathrm{fPO}$. 
Since $\mathrm{fPO}$ is strictly stronger than $\mathrm{PO}$, the impossibility result does not rule out the existence of allocations that are both $\mathrm{WEF1}$ and $\mathrm{PO}$, leaving this fundamental question open even for equal entitlements.

\section*{Declaration of the Use of AI Tools}
The proof ideas in this paper were developed independently by the authors. 
OpenAI's GPT-5.6 Sol was employed as a research assistant to assist in exploring and structuring the proof constructions. 
The authors remain fully responsible for the content, correctness, and integrity of all results in this paper.

\bibliographystyle{alpha}
\bibliography{main}
\newpage \appendix \section{Missing Proofs in Section~\ref{sec:picking}}\label{appendix: 3.2}
\begin{proof}[Proof of Lemma~\ref{lem:forward-metagoods}]
Let
$$
\mathcal G^+=\{G\in\mathcal G:\max_{h\in N'}\{v_h(G)\}>0\},
\qquad
\mathcal G^0=\mathcal G\setminus\mathcal G^+.
$$
For the allocation of $\mathcal G^+$, use the positive-part values
$$
\widehat v_i(G)=\max\{v_i(G),0\}
$$
and run \cref{alg:forward-picking}.
For any family $\mathcal H$ of meta-goods, write
$$
v_i(\mathcal H)=\sum_{G\in\mathcal H}v_i(G),
\qquad
\widehat v_i(\mathcal H)=\sum_{G\in\mathcal H}\widehat v_i(G).
$$
For each agent $h$, let
$$
X_h=\bigcup_{G\in Y_h}G
$$
be the corresponding bundle of original items.
Whenever an agent is selected, some remaining meta-good has positive value to her, so her favorite remaining meta-good is also positive.
Hence every assigned meta-good is strictly positive to its owner under both $\widehat v$ and $v$.

Fix distinct agents $i,j\in N'$.
List the meta-goods assigned to $i$ in chronological order as $x_1,\ldots,x_k$.
Among the meta-goods assigned to $j$, list those that are positive to $i$ as $z_1,\ldots,z_\ell$, again in chronological order.
We prove
\begin{equation}\label{eq:forward-target}
\frac{\widehat v_i(Y_i)}{w_i}
\ge
\frac{\widehat v_i(Y_j\setminus\{z_1\})}{w_j}
\end{equation}
when $\ell\ge1$; when $\ell=0$, the ordinary comparison holds because the right-hand side is zero and the left-hand side is nonnegative.
If $\ell=1$, the right-hand side of \eqref{eq:forward-target} is zero, so the claim is immediate.
Assume henceforth that $\ell\ge2$.

Represent the meta-goods of $i$ by consecutive intervals of length $1/w_i$ and define $\rho:(0,k/w_i]\to\R_{\ge0}$ by
$$
\rho(\alpha)=\widehat v_i(x_r)
\quad\text{for }
\alpha\in\left(\frac{r-1}{w_i},\frac{r}{w_i}\right],
\quad r\in[k].
$$
Similarly, define $\rho':(0,\ell/w_j]\to\R_{\ge0}$ by
$$
\rho'(\beta)=\widehat v_i(z_s)
\quad\text{for }
\beta\in\left(\frac{s-1}{w_j},\frac{s}{w_j}\right],
\quad s\in[\ell].
$$
This gives
\begin{equation}\label{eq:forward-areas}
\frac{\widehat v_i(Y_i)}{w_i}
=\int_0^{k/w_i}\rho(\alpha)\dd\alpha,
\qquad
\frac{\widehat v_i(Y_j\setminus\{z_1\})}{w_j}
=\int_{1/w_j}^{\ell/w_j}\rho'(\beta)\dd\beta.
\end{equation}
Meta-goods of $Y_j$ not listed among $z_1,\ldots,z_\ell$ have zero $\widehat v_i$-value and therefore do not appear in the second integral.

\emph{Horizontal lengths.} Immediately before $j$ receives $z_s$, where $s\ge2$, agent $i$ is active because $z_s$ is still available and is positive to $i$.
At this moment, $j$ has already received at least $s-1$ meta-goods.
Since $j$ is selected with minimum normalized pick count among the active agents, agent $i$ has normalized count at least $(s-1)/w_j$.
Taking $s=\ell$ and then using the final count of $i$ gives
\begin{equation}\label{eq:forward-length}
\frac{\ell-1}{w_j}\le\frac{k}{w_i}.
\end{equation}

\emph{Heights.} Take any $\beta\in(1/w_j,\ell/w_j]$ that is not an endpoint of one of the intervals above, and let $z_s$ be the meta-good represented at $\beta$.
Put $\alpha=\beta-1/w_j$.
Immediately before $j$ receives $z_s$, agent $i$ has normalized count at least $(s-1)/w_j\ge\alpha$.
Hence the meta-good of $i$ represented at position $\alpha$ was selected before $z_s$.
Since $z_s$ was still available at that earlier selection, the favorite-item rule gives
\begin{equation}\label{eq:forward-height}
\rho(\beta-1/w_j)\ge\rho'(\beta).
\end{equation}
The finitely many interval endpoints do not affect the integrals.

Combining \eqref{eq:forward-areas}--\eqref{eq:forward-height}, we obtain
\begin{align*}
\frac{\widehat v_i(Y_j\setminus\{z_1\})}{w_j}
&=\int_{1/w_j}^{\ell/w_j}\rho'(\beta)\dd\beta\\
&\le\int_0^{(\ell-1)/w_j}\rho(\alpha)\dd\alpha
\le\int_0^{k/w_i}\rho(\alpha)\dd\alpha
=\frac{\widehat v_i(Y_i)}{w_i}.
\end{align*}
Thus \eqref{eq:forward-target} holds.

We next return from $\widehat v$ to the original values $v$.
Every meta-good owned by agent $i$ is positive to $i$, so the value of $i$'s own bundle is unchanged.
For every meta-good in $j$'s bundle, $v_i(G)\le\widehat v_i(G)$, so the right-hand side can only decrease.
Moreover, $\widehat v_i(z_1)>0$ implies $v_i(z_1)>0$.
We therefore have the meta-level inequality
$$
\frac{v_i(Y_i)}{w_i}
\ge
\frac{v_i(Y_j\setminus\{z_1\})}{w_j}.
$$
Because $z_1$ is a meta-good and $v_i(z_1)>0$, choose an original item $g\in z_1$ such that $v_i(g)>0$ and $v_i(z_1\setminus\{g\})\le0$.
After expanding the meta-goods,
$$
v_i(X_j\setminus\{g\})
=v_i(Y_j\setminus\{z_1\})+v_i(z_1\setminus\{g\})
\le v_i(Y_j\setminus\{z_1\}).
$$
Thus the same comparison is certified by deleting the single original item $g$.
Since $i$ and $j$ were arbitrary, the expanded allocation of $\mathcal G^+$ is $\WEFone$.
Every agent values her own bundle nonnegatively because every assigned meta-good is positive to its owner.

Finally, each $G\in\mathcal G^0$ is nonpositive to every agent and has value zero to at least one agent.
Assign it intact to a zero valuer and add it to that agent's original bundle.
That agent's own value is unchanged, while her bundle becomes weakly less attractive to every other agent.
Hence all existing comparisons are preserved.
Repeating this operation allocates all meta-goods and proves the lemma.
\end{proof}

\section{Proof of Lemma~\ref{thm:wef1-fpo-impossibility}}
\label{app:wefxy-fpo-impossibility}
\begin{proof}
Fix $x,y\in[0,1]$ with $x+y<2$, and choose an integer $r\ge2$ such that $x+y<2-1/r$.
Consider three agents with equal entitlements $(w_1,w_2,w_3)=(1/3,1/3,1/3)$ and six items $e_1,\ldots,e_6$.
Their valuations are
\[
\begin{array}{c|rrrrrr}
 &e_1&e_2&e_3&e_4&e_5&e_6\\ \hline
1&{10r^2}&2&{5r}-1&-({10r^2}+1)&-{5r}&-2\\
2&{10r^2}&1&1&-{10r^2}&-({5r}+1)&-({5r}+1)\\
3&{10r^2}-1&1&{5r}&-{10r^2}&-({5r}+1)&-1
\end{array}.
\]
Thus $e_1,e_2,e_3$ are goods and $e_4,e_5,e_6$ are chores for all agents.
For a word $a_1\cdots a_6\in\{0,1,2\}^6$, let $X(a_1\cdots a_6)$ denote the allocation that assigns $e_t$ to agent $a_t+1$.

We use the standard positive-support characterization: an allocation $X$ is fPO if and only if there is $\lambda\in\mathbb R_{>0}^3$ such that the owner of every item $e$ maximizes $\lambda_i v_i(e)$.
Indeed, such an allocation maximizes $\sum_i\lambda_i v_i(X_i)$ over all fractional allocations, and any fractional Pareto improvement would strictly increase this welfare.

We next record the fPO allocations of this instance.
Normalize $\lambda_1=1$ and write $a=\lambda_2/\lambda_1$ and $b=\lambda_3/\lambda_1$.
For each owner word, the characterization above gives six score inequalities in $a,b$.
Checking the $3^6$ owner words gives precisely the following $43$ feasible words; the middle column specifies an ordered pair used below.
Conversely, every word in the table satisfies its score inequalities for some $a,b>0$ and is fPO.
\[
\begin{array}{c|c|l}
\text{group}&(i,j)&\text{owner words}\\ \hline
\mathrm A&(2,1)&
 \begin{gathered}
 \mathtt{000111},\mathtt{000112},\mathtt{000122},\mathtt{002102},\\
 \mathtt{002111},\mathtt{002112},\mathtt{002122}
 \end{gathered}\\[2mm]
\mathrm B&(3,1)&
 \mathtt{000212},\mathtt{000222},\mathtt{002202},
 \mathtt{002212},\mathtt{002222}\\[1mm]
\mathrm C&(3,2)&
 \begin{gathered}
 \mathtt{100222},\mathtt{102202},\mathtt{102222},\mathtt{110222},\\
 \mathtt{111202},\mathtt{111222},\mathtt{112202},\mathtt{112222}
 \end{gathered}\\[2mm]
\mathrm D&(1,2)&
 \begin{gathered}
 \mathtt{102000},\mathtt{102002},\mathtt{111000},\mathtt{111002},\\
 \mathtt{112000},\mathtt{112002},\mathtt{122000},\mathtt{122002}
 \end{gathered}\\[2mm]
\mathrm E&(1,3)&\mathtt{102102}\\[1mm]
\mathrm F&(1,3)&
 \mathtt{202000},\mathtt{202002},\mathtt{222000},\mathtt{222002}\\[1mm]
\mathrm G&(2,3)&
 \begin{gathered}
 \mathtt{202100},\mathtt{202102},\mathtt{202110},\mathtt{202111},\mathtt{202112},\\
 \mathtt{222100},\mathtt{222102},\mathtt{222110},\mathtt{222111},\mathtt{222112}
 \end{gathered}
\end{array}
\]

We show that every allocation in this table fails $\mathrm{WEF}(x,y)$.
Fix one such allocation $X$ and use the ordered pair $(i,j)$ associated with its group.
Write $U=v_i(X_i)$, $T=v_i(X_j)$, and $\Delta=T-U$.
Since entitlements are equal, their common denominator cancels from every WEF comparison.
Define
\[
h_{ij}(X)=
\max\left(
 \{0\}\cup\{-v_i(e):e\in X_i,\ v_i(e)<0\}
 \cup\{v_i(e):e\in X_j,\ v_i(e)>0\}
\right).
\]
For the seven groups, direct addition gives the following certificate.
The last column shows that $\Delta\ge(2-1/r)h_{ij}(X)$.
\[
\begin{array}{c|c|c|c}
\text{group}&\text{lower bound on }\Delta&h_{ij}(X)&
r\Delta-(2r-1)h_{ij}(X)\ge\\ \hline
\mathrm A&2{10r^2}-{5r}&{10r^2}&5r^2\\
\mathrm B&2{10r^2}-2{5r}&{10r^2}&0\\
\mathrm C&2{10r^2}-{5r}&{10r^2}&5r^2\\
\mathrm D&2{10r^2}+{5r}-1&{10r^2}+1&15r^2-3r+1\\
\mathrm E&2{5r}-5&{5r}&0\\
\mathrm F&2{10r^2}+2{5r}-4&{10r^2}+1&20r^2-6r+1\\
\mathrm G&2{10r^2}-{5r}&{10r^2}&5r^2
\end{array}.
\]
Every displayed lower bound on $\Delta$ is positive.
Thus $i$ envies $j$ without any adjustment.
Moving an eligible good from $j$ to $i$, or an eligible chore from $i$ to $j$, can reduce this gap by at most $(x+y)h_{ij}(X)$.
Since $x+y<2-1/r$, this amount is strictly smaller than $\Delta$.
Thus none of the three alternatives in the definition of $\mathrm{WEF}(x,y)$ holds for $(i,j)$.
Hence every fPO allocation fails $\mathrm{WEF}(x,y)$, proving the lemma.
\end{proof}

\end{document}